\newtheorem{theorem}{Theorem}
	\numberwithin{equation}{section}
\def\XXint#1#2#3{{\setbox0=\hbox{$#1{#2#3}{\int}$}
     \vcenter{\hbox{$#2#3$}}\kern-.5\wd0}}
\def\be{\begin{equation}}
\def\ee{\end{equation}}
\def\bp{\begin{pmatrix}}
\def\ep{\end{pmatrix}}
\def\ii{\mathrm{i}}
\newcommand{\beq}{\begin{eqnarray}}
\newcommand{\eeq}{\end{eqnarray}}
\begin{document}

\begin{center}{\Large \textbf{
The Floquet Baxterisation
}}\end{center}

\begin{center}
Yuan Miao\textsuperscript{1,$\diamond$},
Vladimir Gritsev\textsuperscript{2,3},
Denis V. Kurlov\textsuperscript{3}
\end{center}

\begin{center}
\textsuperscript{1} Galileo Galilei Institute for Theoretical Physics, INFN, \\
Largo Enrico Fermi 2, 50125 Firenze, Italy
\\
\textsuperscript{2} Institute for Theoretical Physics, Universiteit van Amsterdam, \\ 
Science Park 904, Postbus 94485, 1090 GL Amsterdam, The Netherlands
\\
\textsuperscript{3} Russian Quantum Center, Skolkovo, Moscow 143025, Russia

\vspace{0.2cm}
\textsuperscript{$\diamond$}\href{mailto:yuan.miao@fi.infn.it}{yuan.miao@fi.infn.it}
\end{center}

\section*{Abstract}
{
Quantum integrability has proven to be a useful tool to study quantum many-body systems out of equilibrium. In this paper we construct a generic framework for integrable quantum circuits through the procedure of Floquet Baxterisation. The integrability is guaranteed by establishing a connection between Floquet evolution operators and inhomogeneous transfer matrices obtained from the Yang--Baxter relations. This allows us to construct integrable Floquet evolution operators with arbitrary depths and various boundary conditions. Furthermore, we focus on the example related to the staggered 6-vertex model. In the scaling limit we establish a connection of this Floquet protocol with a non-rational conformal field theory. Employing the properties of the underlying affine Temperley--Lieb algebraic structure, we demonstrate the dynamical anti-unitary symmetry breaking in the easy-plane regime. We also give an overview of integrability-related quantum circuits, highlighting future research directions.
}


\noindent\rule{\textwidth}{1pt}
\tableofcontents\thispagestyle{fancy}
\noindent\rule{\textwidth}{1pt}


\section{Introduction}
\label{sec:intro}

Starting from the Onsager's exact solutions to the two-dimensional classical Ising model \cite{Onsager_1944}, exactly solvable models prove to be important and useful in many fields of theoretical physics, ranging from statistical mechanics \cite{Baxter_1982, Gaudin_2014} to high-energy physics \cite{Beisert_2010}. Making use of techniques such as Bethe ansatz of quantum integrable models \cite{Bethe_1931, Gaudin_2014}, one can obtain exact and sometimes mathematically rigorous results on physically relevant properties and quantities such as phase transitions and correlation functions. More recently, techniques in exactly solvable models have been used to study the out-of-equilibrium physics of quantum many-body systems, which is usually extremely difficult due to the astronomically large number of degrees of freedom involved. One particular example is the quantum quenches in integrable quantum spin chains that have been studied using the so-called quench action method \cite{Caux_2013, Caux_2016}, giving access to the late-time dynamics with an initial state away from any eigenstate. Another important achievement, the generalized hydrodynamics\cite{GHD_1, GHD_2}, allows us to study the transport properties of quantum integrable systems in an analytic and efficient manner.

Even though exactly solvable models have been successful to study the out-of-equilibrium physics of quantum many-body systems, applications to discrete space-time quantum systems are not yet fully developed. Specifically, quantum circuits become accessible with the recent advances on both theoretical and experimental sides \cite{Google_https://doi.org/10.48550/arxiv.2206.05254}, which is closely related to the stroboscopic evolution of Floquet systems. A systematic understanding of the condition when certain quantum circuits can be solved using quantum integrability is still missing. In addition, when quantum integrability is present in a quantum circuit, the circuit might not be unitary, which potentially leads to new ``phases'' with different choices of the parameters. In order to remind the readers on the improvements between our work and the previous results, which are presented in details in \ref{sec:overview}, we summarize the main results that are novel up to our knowledge as follows.

{\bf Summary of the main results.} There have been already several examples of the integrable quantum circuits under many guises, such as ``light-cone Bethe ansatz'', ``integrable Trotterisation'', etc. A more detailed description and references are given in Sec. \ref{sec:IntFlo}. A common feature of those constructions is that the quantum circuits are of brick-wall structure of quantum gates acting on two ``qudits'', cf. Fig. \ref{fig:brickwall}. Instead, we propose a more general framework, dubbed ``Floquet Baxterisation'', that relates quantum circuits with $n$-qudit gates and depth $n\geq 2$ to an {\it inhomogeneous} transfer matrices of certain integrable lattice statistical-mechanical models with various boundary conditions. The previous results are mostly related to the $n=2$ case. The proper definitions and the statement of the theorems \ref{thm:PBC} and \ref{thm:OBC} can be found in Sec. \ref{sec:FloquetBaxter}.

In addition, we exemplify the construction with a renowned example of the integrable Floquet circuit associated with the staggered 6-vertex model \cite{Baxter71, Baxter_1976, Baxter_1982} in Sec. \ref{sec:transition}. We use the Bethe ansatz technique to obtain the spectrum of the Floquet evolution operator.
Using a connection with the underlying affine Temperley-Lieb algebraic structure we also show that in the easy-plane regime the system exhibits a dynamical phase transition associated with the breaking of a certain anti-unitary symmetry, which occurs even with finite system sizes.

Besides the main results of the article, we start with a brief overview of the crucial properties of the Yang--Baxter integrability used in the article in Sec. \ref{sec:YBInt} and several examples of quantum circuits related to quantum integrability in Sec.~\ref{sec:overview}.


\section{The Yang--Baxter integrability}
\label{sec:YBInt}

Before moving to the main results of the paper, in this section we briefly review the essential ingredients of the Yang-Baxter integrability.

We consider the R matrix $\mathbf{R}_{a,b} (u,v)$ acting on the Hilbert space $(\mathbb{C}^N )_a \otimes (\mathbb{C}^N )_b$ and satisfying the Yang--Baxter equation \cite{Baxter_1982, Gaudin_2014}
\be 
    \mathbf{R}_{a,b} (u,v)  \mathbf{R}_{a,c} (u,w)  \mathbf{R}_{b,c} (v,w) = \mathbf{R}_{b,c} (v,w) \mathbf{R}_{a,c} (u,w) \mathbf{R}_{a,b} (u,v) .
    \label{eq:RRR1}
\ee 
Hence we define the Lax operator $\mathbf{L}_{a,m}(u) = \mathbf{R}_{a,m}(u,u_m)$ with inhomogeneity $u_m \in \mathbb{C}$
\be 
    \mathbf{R}_{a,b} (u,v)  \mathbf{L}_{a,m} (u)  \mathbf{L}_{b,m} (v) = \mathbf{L}_{b,m} (v) \mathbf{L}_{a,m} (u) \mathbf{R}_{a,b} (u,v) ,
    \label{eq:RLL1}
\ee 
which recovers the usual form of the Yang-Baxter relation for the Lax operator.
Accordingly, we define another R matrix $\check{\mathbf{R}}_{a,b} (u,v)$ such that
\be 
    \check{\mathbf{R}}_{a,b} (u,v) =  \mathbf{R}_{a,b} (u,v) \mathbf{P}_{a,b} ,
\ee 
with the permutation operator $\mathbf{P}_{a,b}$ satisfying
\be  
    \mathbf{P}_{a,b} \mathbf{F}_a \mathbf{P}_{a,b} = \mathbf{F}_b , \quad \mathbf{P}_{a,b}^2 = \mathbbm{1} .
\label{eq:perm_op_props}
\ee 
From Eq.~\eqref{eq:RRR1}, we then arrive at a different Yang-Baxter relation for $\check{\mathbf{R}}_{a,c} (u)$, which reads
\be 
    \check{\mathbf{R}}_{a,b} (u,v) \check{\mathbf{R}}_{b,c} (u,w) \check{\mathbf{R}}_{a,b} (v,w) = \check{\mathbf{R}}_{b,c} (v,w) \check{\mathbf{R}}_{a,b} (u,w) \check{\mathbf{R}}_{b,c} (u,v) .
    \label{eq:RLL2}
\ee 
The R matrix $\check{\mathbf{R}}_{a,b} (u,v)$ is the main building block for the integrable Floquet evolution operator.

Let us now define the inhomogeneous monodromy matrix with period~$n \in \mathbb{Z}_+$ (with respect to lattice sites) and system size $L \, \mathrm{mod} \, n = 0$ \cite{Baxter_1982}
\be 
    \mathbf{M}_a \big( u , \{ u_j \}_{j=1}^n  \big) = \prod_{m=1}^{L/n} \prod_{j=1}^n\mathbf{R}_{a,n(m-1)+j} (u,u_j )  , \quad u , u_1 , u_2, \ldots u_n \in \mathbb{C} .
\ee 
Then, the inhomogeneous transfer matrix with period $n$ acting on the physical Hilbert space can be defined as the partial trace of the inhomogeneous monodromy matrix over the auxiliary space,
\be 
\mathbf{T} \big( u , \{ u_j \}_{j=1}^n  \big)  = \Tr_a \left[ \mathbf{M}_a \big( u , \{ u_j \}_{j=1}^n  \big)  \right] .
\label{eq:inhomoT}
\ee 
In the rest of the paper we mostly focus on the transfer matrices with periodic boundary condition. We shall make a few comments on the case of open boundary condition in Sec.~\ref{subsec:Floquetintopen}.

From the Yang--Baxter equation \eqref{eq:RLL1}, we have
\be 
\resizebox{.89\linewidth}{!}{$
    \mathbf{R}_{a,b} (u,v) \mathbf{M}_a \big( u , \{ u_j \}_{j=1}^n  \big)  \mathbf{M}_b \big( v , \{ u_j \}_{j=1}^n  \big)  = \mathbf{M}_b \big( v , \{ u_j \}_{j=1}^n  \big)  \mathbf{M}_a \big( u , \{ u_j \}_{j=1}^n  \big)  \mathbf{R}_{a,b} (u,v) ,
$}
\ee 
which implies that the inhomogeneous transfer matrices are in involution, i.e.
\be 
    \left[ \mathbf{T} \big( u , \{ u_j \}_{j=1}^n  \big)  , \mathbf{T} \big( v , \{ u_j \}_{j=1}^n  \big)  \right] = 0 ,  \quad \forall u , v \in \mathbb{C} .
\ee 
{\bf Remark.} 
In order to obtain a \emph{local} Floquet protocol that is \emph{integrable}, we do not need to assume any property of the R matrix except that it satisfies the Yang-Baxter equation~\eqref{eq:RRR1}. In most of the previous works, see e.g. \cite{Destri_1989, Destri_1992, Prosen_2018, Prosen_2020, Claeys_2021, Prosen_2021, Pozsgay_2021, Su_2022}, it has been assumed that the R matrix also satisfies the regularity condition  
\be 
    \mathbf{R}_{a,b} (0,0) = \mathbf{P}_{a,b} , 
    \label{eq:regularity}
\ee 
and/or the difference form property,
\be 
	\mathbf{R}_{a,b} (u,v) = \mathbf{R}_{a,b} (u-v)  , 
\label{eq:differenceform}
\ee
both of which are not necessary for the integrability of the \emph{local} Floquet protocol as proven below.

Note that one can represent the R matrix and the inhomogeneous transfer matrix in terms of graphs, as demonstrated in Figs.~\ref{fig:diagramRdef} and~\ref{fig:transfermat1}. This will be convenient for demonstrating some identities in Sec.~\ref{sec:FloquetBaxter}.

\begin{figure}
    \centering
    \includegraphics[width=.9\linewidth]{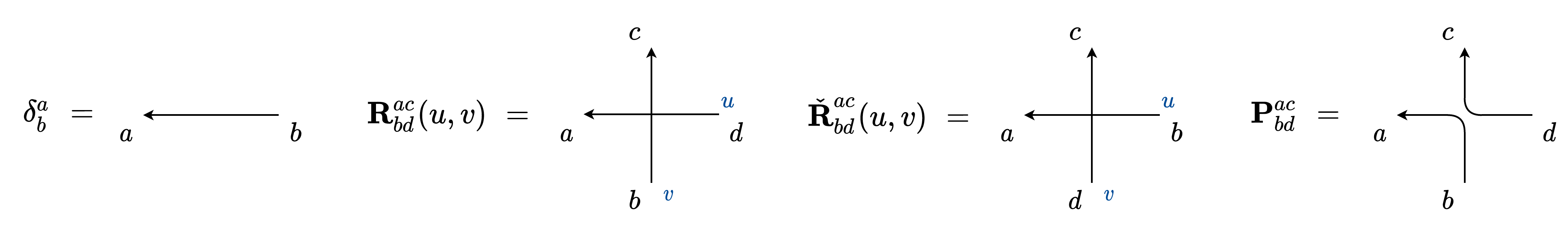}
    \caption{Diagrammatic demonstration of the R matrices and permutation operator.}
    \label{fig:diagramRdef}
\end{figure}

\begin{figure}
    \centering
    \includegraphics[width=.75\linewidth]{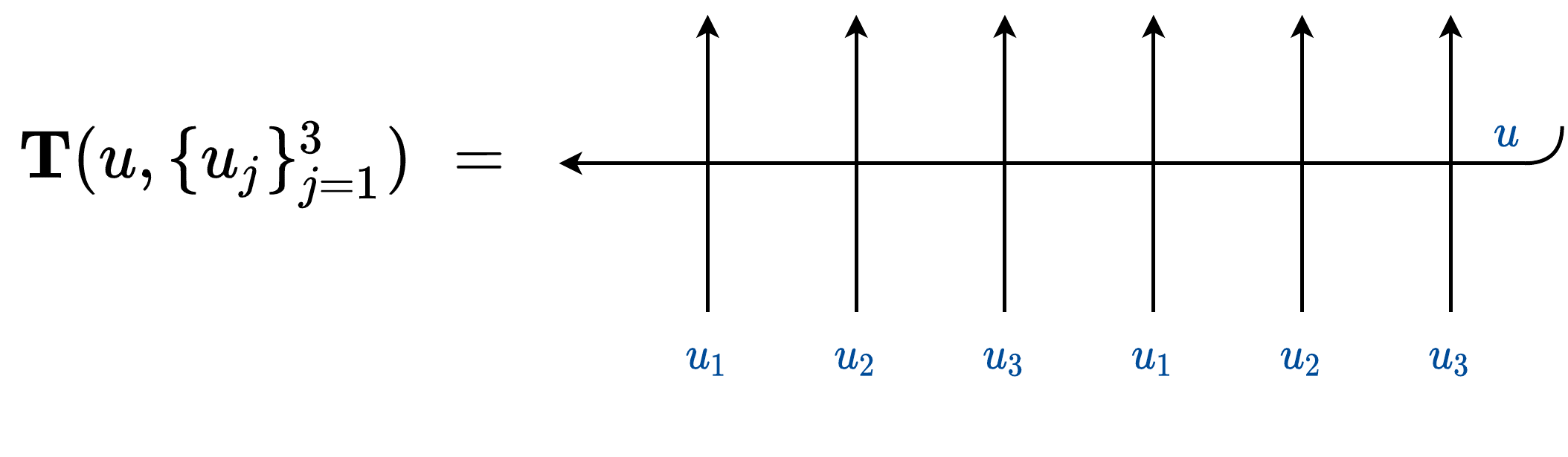}
    \caption{Diagrammatic demonstration of the inhomogeneous transfer matrix of period $n=3$ with system size $L=6$ and inhomogeneities $\{ u_j\}_{j=1}^3$.}
    \label{fig:transfermat1}
\end{figure}


\section{Overview of quantum circuits related to concepts of integrability}
\label{sec:overview}

Quantum circuits are well known tools in quantum information theory that are used to encode quantum computation with quantum gates, represented by a single, two- and/or many-qubits or qudits, their initialization and measurements. Recently, there has been a surge of interest to certain aspects of quantum circuits related to many-body physics. One important class of circuits where dynamics can be computed explicitly is the case of random unitary circuits \cite{PhysRevX.7.031016, PhysRevX.8.021013, PhysRevX.8.021014, PhysRevX.8.041019}. In this case quantum gates are represented by large unitary matrices typically taken from the circular unitary ensemble. Special attention has been paid to Floquet-type two-step protocols in which case one layer of the circuit is represented by a unitary $\mathbf{U}_{1}$ acting for a time $T_{1}$ between even-odd gates, followed by the second layer with $\mathbf{U}_{2}$ acting for a time $T_{2}$ between odd-even gates. 

Here we focus on the opposite case of integrable quantum circuits. These integrable brick-wall circuits, see Fig.~\ref{fig:brickwall}, are interesting for at least three reasons: (i)~from the point of view of quantum information theory they serve as a viable tool for preparing some highly entangled states \cite{Barmettler_2008, Kurlov_21}; (ii)~they are suitable for benchmarking existing quantum computers and simulators, see \cite{compression, Aleiner_2021, Sierra_2022} for recent examples; (iii)~they could provide exact analytical tools for studying dynamical phases and phase transitions as demonstrated in this paper.

At the moment we feel that one should distinguish between at least three different cases of quantum circuits related to the concepts of quantum integrability. Below we provide a mixture of a brief overview of existing approaches, some new results and future perspectives.  

\begin{figure}[t]
    \centering
    \includegraphics[width=.7\linewidth]{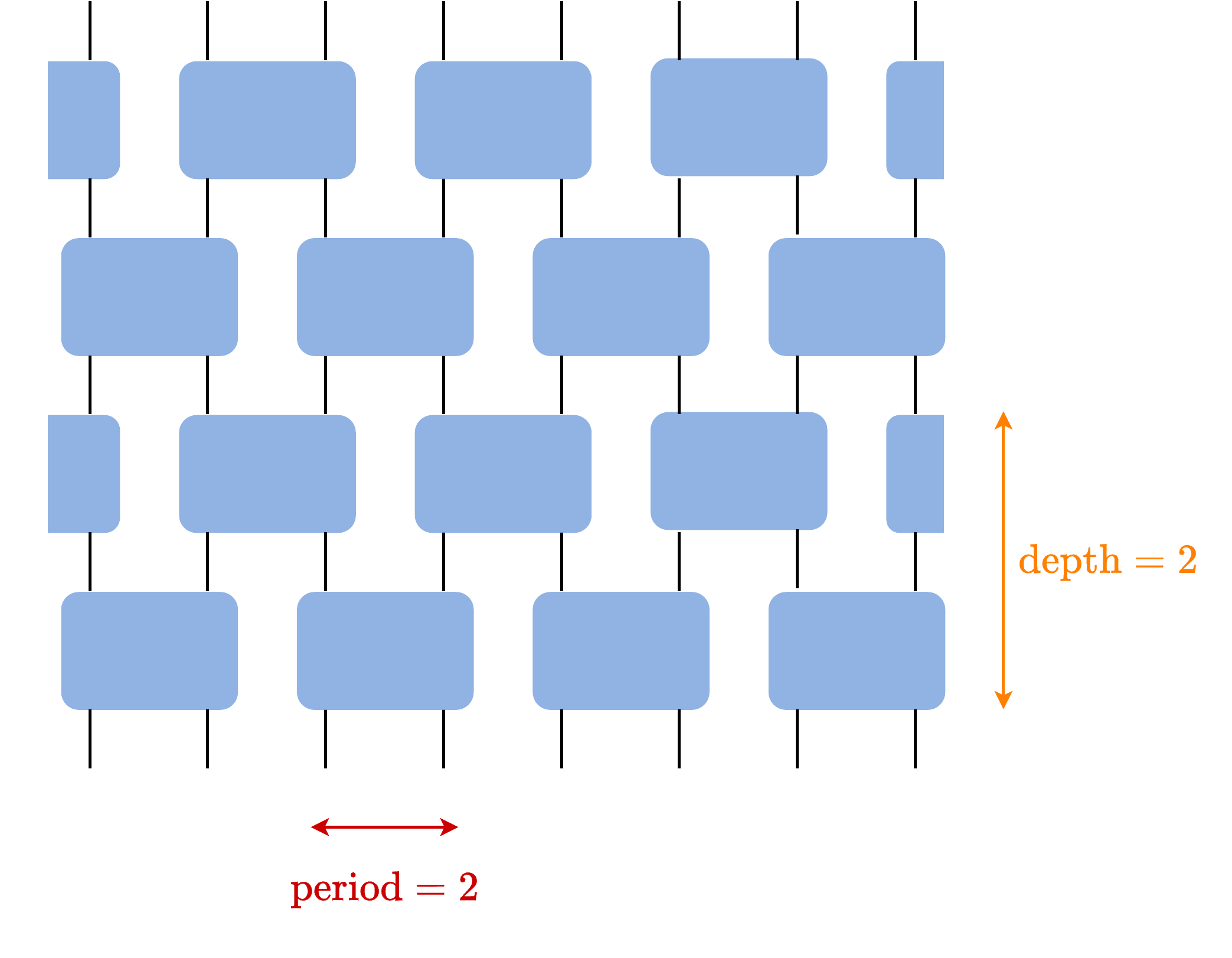}
    \caption{Demonstration of the brick-wall structure of the quantum circuit. Boxes correspond to two-qubit (qudit) gates and are realised by the $\check{\rm R}$-matrices satisfying the Yang-Baxter equation. We denote the layer of quantum gates as the depth of the quantum circuit. For the brick-wall construction, the period of the quantum circuit with respect to the lattice sites equals the depth.}
    \label{fig:brickwall}
\end{figure}

\subsection{Integrable Floquet protocols\label{sec:IntFlo}}

This class of protocols is related to integrable lattice statistical-mechanical models. They represent a particular class of integrable Floquet dynamics introduced in \cite{SciPostPhys.2.3.021}. These protocols rely on the analytic continuation of the Boltzmann weights of integrable stat-mechanical systems into the complex domain. In this case time steps $T_{1,2}$ are finite, and except for the case of the Ising model are equal. The total unitary evolution operator is given by
\beq
    \mathbf{U} &=& \prod_{i=1}^{N}(\mathbf{U}_{\rm even} (T/2) \mathbf{U}_{\rm odd}(T/2))^{i} \equiv \exp(-\ii T \mathbf{H}_{\rm F})\\
    \mathbf{U}_{\rm even}&=&\prod_{j=1}^{M}\check{\mathbf{R} }_{2j}(x, x_0),\qquad \mathbf{U}_{\rm odd}=\prod_{j=1}^{M}\check{\mathbf{R}}_{2j-1}(x, x_0) ,
\eeq
where the parameter $x$ is some (known) function of time and $\mathbf{H}_{\rm F}$ is the Floquet Hamiltonian. The index $i$ labels layers in "time" direction while the index $j$ labels sites in "spatial" direction thus representing a circuit of size $N_{\rm time}\times M_{\rm space}$. The central object here is the two-qubit (or qudit) gate, the $\check{\rm R}$ matrix that satisfies the celebrated Yang--Baxter equation, cf. \eqref{eq:RLL2}. The proper definition of the Yang--Baxter integrability is explained in Sec. \ref{sec:YBInt}.

Many of the exactly solvable statistical-mechanical models are related to the so-called Temperley--Lieb (TL) algebra \cite{TL_1971, Baxter_1982}. It is defined by a set of two-site generators $e_{i,i+1}\equiv e_{i}$ that satisfy
\beq
    e_{i}^{2}=\beta e_{i}, \qquad e_{i}e_{i+1}e_{i}=e_{i},\qquad ~[e_{i},e_{j}]=0, \quad\mbox{if}\quad |i-j|\geq 2. 
\eeq
In this case $x=\beta^{-1}(e^{\ii T \beta/2}-1)$ and  $\check{\mathbf{R} }_{j}=1+x e_{j}$. Note that the generators of the TL algebra have numerous representations, including those in terms of the quantum Ising/Potts and the XXZ Hamiltonian densities. In particular the latter reads as
\be 
    e_{m}\equiv\mathbf{e}_{m,n} =\frac{q+q^{-1} }{4} - \frac{1}{2} \left( \sigma^x_m \sigma^x_n + \sigma^y_m \sigma^y_n + \frac{q+q^{-1} }{2} \sigma^z_m \sigma^z_n \right) - \frac{q - q^{-1}}{ 4 } (\sigma^z_m - \sigma^z_n ) ,
    \label{eq:repTL}
\ee 
with $\beta \equiv q + q^{-1} = 2 \cosh \eta$, i.e. $q = \exp \, \eta$.

Several first integrals of motion (i.e. a set of operators~$Q_{n}$ that satisfy $[H_{F},Q_{n}]=[Q_{n},Q_{m}]=0$) for the TL algebraic Floquet protocol have been found in Ref.~\cite{Kurlov_21}, whose results hold regardless of the representation of the TL algebra. The first non-trivial integral is
\beq
    Q_{1}=\sum_{j}e_{j}+a\sum_{j}(-1)^{j}[e_{j},e_{j+1}] +b\sum_{j}\{e_{j},e_{j+1}\},\\
    a=\frac{i}{2\beta}\sin(\beta T), \qquad b=-\frac{1}{\beta}\sin^{2}\left( \frac{\beta T}{2} \right).
\label{Q1}
\eeq
When the operator $e_{i}$ is written in the representation of the XXZ Hamiltonian density (\ref{eq:repTL}) it coincides exactly with the {\it Hamiltonian} of the lattice limit of the $SL(2,\mathbb{R})/U(1)$ black hole sigma-model. Both the lattice spin model and its relation to the continuum theory have been carefully studied by several groups,  \cite{Ikhlef_2006, Ikhlef_2012, Candu_2013, Frahm_2014, Frahm_2022, Bazhanov_2021, Bazhanov_2021-2}. 
This observation was supported by a number of evidences, both analytical and numerical, including the match of the density of states, various numerical and analytical studies.
Finally, a very recent series of works by Bazhanov et al. \cite{Bazhanov_2021, Bazhanov_2021-2} on the one hand confirms the coincidence of the partition
function of the Euclidian black hole CFT with one half of the partition function arising in the scaling limit of the
lattice model with periodic boundary condition, but on the other hand refines the original identification by proposing
that a part of the Hilbert space of the lattice model should coincide with the pseudo-Hilbert space of the non-linear
black hole sigma model with Lorentzian signature.

We believe that the phase transition found in this paper is related
to the compact--to--non-compact transition in the corresponding spectrum of this non-linear sigma model.

The lattice spin model is related to an inhomogeneous 6-vertex model for a special choice of the inhomogeneity, introduced by Baxter in \cite{Baxter71, Baxter_1976}. Meanwhile, the nonlinear sigma model was introduced in \cite{Elitzur:1990ubs, Mandal:1991tz, Witten}; it describes a gauged version of the $SL(2,\mathbb{R})$ Wess--Zumino--Witten model and corresponds to a non-compact conformal field theory (CFT) with a continuum spectrum of scaling dimensions and a central charge $c=2\frac{k+1}{k-1}$ related to the TL parameter $\beta$ as $k=\pi/\gamma$, where $\beta=-2\cos\gamma$. The model has been intensively studied in string theory literature, a highly incomplete list of papers includes  \cite{Dijkgraaf:1991ba, Sfetsos_1999, Maldacena_2001_1, Maldacena_2001_2, Hanany_2002, SCHOMERUS_2006}.

We note that in the case of $\beta=0$ the TL algebra  has an infinite-dimensional $\mathfrak{sl}_2$ loop algebra structure of conserved charges and corresponds to a logarithmic CFT with the central charge $c=-2$ \cite{Yashin_2022}.

We also mention that inhomogeneous lattice models have been used previously to define the integrable lattice limit of relativistic field theories~\cite{Destri_1989, Destri_1992, Volkov_1992, Faddeev_1994, Reshetikhin_1994, Faddeev_1996}. In those literatures, the method is referred to as the ``lightcone Bethe ansatz'', which coincides with the $n=2$ case of the Floquet Baxterisation in Sec. \ref{sec:FloquetBaxter}.

The connection between a non-compact/non-rational CFT, inhomogeneous lattice models and integrable Floquet quantum circuits is very promising and intriguing. It is fair to say that here we have an example of a \emph{novel (and unusual) dynamical Floquet criticality}, since the Floquet Hamiltonian shares the same set of eigenstates as the operator~$Q_{1}$ in Eq.~(\ref{Q1}), namely the spectrum of the CFT mentioned above.

\subsection{Trotterised circuits}
\label{sec:IntTrot}

This type of circuits is constructed as a Trotterisation of the time evolution with the Hamiltonian $H=\sum_{n}h_{n,n+1}$ as follows \cite{Prosen_2018}, \cite{Claeys_2021}
\beq
    \mathbf{U} (t) &\equiv& e^{-iHt}=\lim_{\delta t\rightarrow 0} \big[\exp(-i\delta t \sum_{n \ {\rm even}}h_{n,n+1})\exp(-i\delta t\sum_{n \ {\rm odd}}h_{n,n+1})\big]^{t/\delta t},  \\
    \mathbf{U}_{\rm even}&\sim&\prod_{j=1}^{M}\check{\mathbf{R}}_{2j,2j+1}(\delta t),\qquad \mathbf{U}_{\rm odd}\sim\prod_{j=1}^{M}\check{\mathbf{R}}_{2j-1,2j}(\delta t),
\eeq
where $\delta t$ is an infinitesimal time step. For the most popular example of the integrable XXX spin chain \cite{Prosen_2018, Claeys_2021} one has
\beq
    \mathbf{U}_{\rm e/o}=e^{-i\delta t \sum_{e/o} \mathrm{h}_{\rm XXX}}=\prod_{e/o}e^{iJ\delta t}\check{\mathbf{R}}_{e/o}(\tan{2J\delta t}),
\eeq
where the R-matrix acting between neighboring even-odd (e/o) sites is given by\footnote{Operator $\mathbf{P}$ is the permutation operator, defined later in \eqref{eq:permutation}.} $\check{\mathbf{R}}_{e/o}(\lambda)=(1+i\lambda \mathbf{P} )(1+i\lambda)^{-1}$. This protocol has been used in several recent works: the multi-point correlation functions expressed in terms of particular transfer matrices \cite{Claeys_2021}, and the temporal entanglement \cite{Giudice_2022} have been computed; a dual unitary case of this circuit has been introduced and studied \cite{Pozsgay_2021}.

Strictly speaking, integrability in this type of circuits (guaranteed by the underlying integrable Hamiltonian) is achieved only when the time step goes to zero. As pointed out in \cite{compression}, the error in this circuit scales linearly with the Trotter time step i.e. with $\delta t$. While it can be cured by taking a smaller step size, this could result in an overall increase in the computation cost. Therefore, there is a need to balance accuracy and computation cost for which an efficient circuit compression is needed.

\subsection{Protocols related to set-theoretic solutions of the Yang--Baxter equation\label{sec:setYBE}}

Interestingly, Yang--Baxter dynamics can be extended to a more generic and abstract setup. We remind the reader that the original quantum Yang--Baxter equation in \eqref{eq:RRR1} is defined for a linear operator $\mathbf{R}$ acting in the tensor product of two vector spaces $V\otimes V$: $\mathbf{R}: V\otimes V\rightarrow V\otimes V$.

In \cite{Drinfeld_1992} Drinfeld suggested to consider a {\it set-theoretic} version of the Yang--Baxter equation defined as follows. Let $X$ be {\it any set} (perhaps endowed with a certain topology) and let $R:  X \times X\rightarrow X\times X$ be a map from its square into itself\footnote{To distinguish the set-theoretical R matrix from the quantum case we use here a different notation for the former.}. Let $R_{ij}: X^{n}\rightarrow X^{n}$, with $X^n = X \times X \times \ldots \times X$, be the maps which act as $R$ on $i$th and $j$th factors and as an identity on the others. More precisely, if $R(x, y) = \left( f(x, y), g(x, y) \right)$, where $x, y\in X$, then 
\beq
R_{ij} (x_{1}, \ldots , x_{n}) =
\Bigl( x_1,  \ldots , x_{i-1},\, f(x_{i}, x_{j}), \,x_{i+1}, \ldots , x_{j-1}, \, g(x_{i}, \, x_{j}), x_{j+1}, \ldots , x_{n} \Bigr)
\eeq
for $i < j$ and
\beq
R_{ij} (x_{1}, \ldots , x_{n}) =
\Bigl( x_1,\ldots , x_{i-1}, g(x_{i}, x_{j}), x_{i+1}, \ldots , x_{j-1}, f(x_{i}, x_{j}), x_{j+1}, \ldots , x_{n} \Bigr)
\eeq
otherwise. In particular, for $n = 2$ one has $R_{21}(x, y) = (g(y, x), f(y, x))$. If $P: X^{2}\rightarrow X^2$ is the permutation
of $x$ and $y$: $P(x, y) = (y, x)$, then we obviously have $R_{21} = PRP$. In this setup, the set-theoretical Yang--Baxter equation reads
\be 
    R_{12} \circ R_{13} \circ R_{23} = R_{23} \circ R_{13} \circ R_{12} .
    \label{eq:settheoreticalYBE}
\ee 

The understanding of algebraic and geometric facets of the set-theoretic Yang--Baxter equation has been fairly well developed in mathematical literature \cite{Weinstein1992ClassicalSO, Hietarinta_1997, Bukhshtaber_1998, Etingof, Soloviev_2000, Etingof_2001, Xu_2002, G-I_2015, G-I_2018, CatinoColazzoStefanelli+2021+757+772, Doikou_2021}, while its dynamical aspects have been considered in \cite{VESELOV2003214, Suris_2003, Goncharenko_2003}. In addition, connection between the set-theoretic Yang--Baxter equation and integrable discrete-time dynamics has been studied in Refs.~\cite{Bobenko_2002, Adler_2003, Nijhoff_2002}, and cellular automatons has been also anticipated, see \cite{Veselov_2006} for a review. We believe that development of these ideas for quantum circuits is an interesting direction for future research, see \cite{Bazhanov_2018, Pozsgay_2021, Pozsgay_2022, Gombor_2022} for recent activity in this direction.

Let us give an example of a solution to the set-theoretical Yang--Baxter equation \footnote{To the best of our knowledge, this solution is new.}. This is motivated by the generalised Fibonacci substitution rules $a\rightarrow b$, $b\rightarrow b^{l}a^{k}$, so that we consider the following ansatz for the functions $f(x,y)$ and $g(x,y)$ introduced above:
\beq
    f(x,y)=x^{n}y^{m},\qquad g(x,y)=x^{p}y^{q}.
    \label{eq:subrules}
\eeq
The question is under what conditions of parameters $n,m,p,q$ this map is consistent with the set-theoretical Yang--Baxter equation \eqref{eq:settheoreticalYBE}.

Using the results of Appendix~\ref{app:stYBE}, we come up with the following classes for arbitrary values of $n$ and $q$: 
\be
\begin{aligned}
&A_{14}: (n, 0, 0, q), &&B_{124}:  (n, 1-nq, 0, q),\\
&B_{134}: \, (n, 0, 1-nq, q), &&P: \; (0, 1, 1, 0).
\end{aligned}
\ee
Here the notations are self-explaining: A-B-... is an alphabetical, and subscripts note the nonzero positions in the string of $n,m,p,q$.
Implementation of these classes in quantum dynamics will be considered elsewhere.

\subsection{Remarks on the exponential solutions}
\label{sec:expsol}

Two-qudit quantum circuits that we are considering here correspond to the brick-wall structure shown in Fig.~\ref{fig:brickwall}. An individual ``brick'' acts in the Hilbert space of two qudits, $V_{j}\otimes V_{j+1}$, and is supposed to satisfy the Yang--Baxter equation where spectral parameters are somehow related to time\footnote{In this subsection we focus on the Floquet protocol with depth $n=2$.}.

From the quantum dynamical perspective the individual gate $\check{\mathbf{R}}_{j}$ should represent the evolution operators on $V_{j}\otimes V_{j+1}$ and as such one could require that $\check{\mathbf{R}}_{j}$ would satisfy the semi-group property
\be
\check{\mathbf{R}}_{j}(t_{1})\check{\mathbf{R}}_{j}(t_{2}) = \check{\mathbf{R}}_{j}(t_{1}+t_{2}), \quad \check{\mathbf{R}}_{j}(0)=1,
\label{eq:semi-group}
\ee
for arbitrary time steps $t_{1,2}$. A natural ansatz for~$\check{\mathbf{R}}_{j}(t)$ satisfying this requirement is an exponential solution
\be
\check{\mathbf{R}}_{j,j+1}(t)=\exp(-it \mathbf{u}_{j}),
\ee
for some operators $u_{j}$. In \cite{Fomin_1996} (see also \cite{FOMIN1996123} for the relation of this construction to special types of symmetric polynomials) it was proven that exponential solutions of the Yang--Baxter equation satisfying the semi-group property \eqref{eq:semi-group} are exhausted by the operators defined by the following relations. Introducing $C_{0}(a,b)=a+b$ for adjacent operators $a=u_{j}$, $b=u_{j+1}$ and $C_{i}(a,b)=[a,C_{i-1}]$ for $i\geq 1$, it was shown that the minimal set of relations to fulfill the exponential solution of the Yang--Baxter equation is given by $C_{j}(a,b)$, which satisfies
\be
[C_{i}(a,b),C_{j}(a,b)]=0, \quad \forall i,j\geq 0.
\ee
There are many examples of algebraic structures satisfying the above relation. One of them is provided by the Hecke algebra \footnote{Note that the Temperley--Lieb algebra is a quotient of the Hecke algebra.}
\be \label{Hecke_algebra_rels}
\begin{split}
    & \mathbf{u}_{j} \mathbf{u}_{j+1} \mathbf{u}_{j} = \mathbf{u}_{j+1} \mathbf{u}_{j} \mathbf{u}_{j+1}, \quad \mathbf{u}_{j}^{2} = a \mathbf{u}_{j}+b, \\ 
    & \mathbf{u}_{j} \mathbf{u}_{k} = \mathbf{u}_{k} \mathbf{u}_{j},\quad |j-k|>1,
\end{split}
\ee
where one requires $b=0$. Another interesting example is given by the universal enveloping algebra of upper triangular matrices with zero diagonal, $U_{+}(\mathfrak{gl}_n)$. In the latter case the generators satisfy the Serre relation $[\mathbf{u}_{i},[\mathbf{u}_{i}, \mathbf{u}_{i\pm 1}]]=0$. One more example is given by the Heisenberg algebra with $[\mathbf{u}_{i} , \mathbf{u}_{i+1}]=0$. In relation to the Trotterised circuits, we note that if we require the semi-group relation to be satisfied only for infinitesimal times $t$ (i.e. for such $t_{1,2}$ that the products $t_{1}t_{2}$ and higher orders can be neglected) then the exponential form can be generalised for $b\neq 0$ in Eq.~(\ref{Hecke_algebra_rels}). We note that in the present paper we do not require the regularity condition, e.g. the last identity in \eqref{eq:semi-group}, which opens a room for more general R-matrices.

\subsection{Some possible future directions}
Up to now, we would like to point out a few directions that could be interesting to pursue further. In this respect we note that integrable quantum circuits can be understood in terms of the evolution/update of density matrices, not only as the unitary evolution of initial pure states.
\begin{itemize}
    \item {\bf Non-regular circuits and higher dimensions}. In his 1978 paper \cite{Baxter:1978xr} R. Baxter introduced a solvable version of the eight-vertex model (which he called ``Z-invariant model'') on an {\it arbitrary lattice} formed by a planar set of intersecting straight lines provided that no three lines intersect at a point. In particular, this can be a kagome lattice, detailed in \cite{Baxter:1978xr}, but any irregular lattice satisfying the property above works, quasicrystals in particular \cite{Korepin_1987}.  We also note that higher-dimensional generalisations are, at least in principle, possible using the Zamolodchikov tetrahedron equation \cite{Zamolodchikov_1981CMaPh..79..489Z}. We believe that these integrable cases are natural extensions of the regular brick-wall protocol in Fig.~\ref{fig:brickwall}.
    \item {\bf Monitored circuits. } Proliferation of entanglement, generated by the unitary time evolution, competes with entanglement collapse due to local projective measurements, which leads to the entanglement-type phase transitions. Various protocols were introduced recently in the realm of {\it random unitary circuits} \cite{Fisher_PhysRevB.98.205136, Skinner_PhysRevX.9.031009, Chan_PhysRevB.99.224307}. While most of the works focus on numerics, we believe that monitored integrable circuits could provide some analytic insights into this type of transitions. Since measurement is defined by the insertion of {\it projection operators} (e.g. $(1\pm\sigma^{z})/2$ in the spin-$1/2$ case) at certain space-time points of a circuit, the problem may be translated into the task of evaluating multiple time-dependent {\it correlation functions}. For a spin-$1/2$ example above, a two-point projective measurement corresponds to a problem of evaluating the sum of zero-, one-, and two-point time-dependent correlation functions of $\sigma^{z}$ operators. While certain steady progress in computing these correlation functions for integrable spin lattice models exists, see e.g. \cite{Sakai_2007, Essler_PhysRevB.79.214408, Babenko_2021} for recent developments, these results are still too complicated to extract any analytical insights. 
    \item {\bf Dissipative integrable circuits}. The concept of integrability can be extended to the realm of open quantum system dynamics of the density matrices \cite{Essler_PhysRevLett.117.137202, Essler_10.21468/SciPostPhys.8.3.044, Lamacraft_PhysRevLett.120.090401, Katsura_PhysRevB.99.224432}. Recently, several groups started investigating integrable dissipative circuits  \cite{Prosen_2021, Pozsgay_PhysRevLett.126.240403, Su_2022}. We believe that this is an interesting future direction by itself, which hopefully could unveil novel universality classes of dissipative quantum systems. 
\end{itemize}


\section{Floquet Baxterisation}
\label{sec:FloquetBaxter}

In this section we present the procedure of the Floquet Baxterisation, i.e. finding the transfer matrices of certain integrable models that commute with the proposed Floquet evolution operators. We start with defining the Floquet evolution operator, describing the time evolution of a Floquet quantum circuit with depth $n \in \mathbb{Z}_+$. Then we prove that the previously defined Floquet evolution operator commutes with the inhomogeneous transfer matrices of a certain integrable vertex model, which contain one additional spectral parameter. Even though the procedure differs from the original Baxterisation construction by Jones \cite{Jones_1990}, the philosophy here is the same. Namely, one introduces an additional spectral parameter to certain transfer matrices (in our case, the Floquet evolution operators), which are in involution guaranteed by the Yang--Baxter integrability.

Before we start to state the main theorems, we would like to make a few remarks. We note that previous works on the topic either focus on explicit examples \cite{Destri_1989, Destri_1992, Prosen_2018, Prosen_2020, Claeys_2021, Prosen_2021} or make certain assumptions such as the regularity of the R-matrix \cite{Pozsgay_2021, Su_2022}, cf. \eqref{eq:regularity}. In this work, we present the most general construction of the Floquet Baxterisation without making the assumptions that the R-matrix satisfying the Yang--Baxter equation is regular or is of the difference form. 

Moreover, we generalise previous constructions that focus on the Floquet evolution operators with depth $n=2$. Let us focus on the systems with periodic boundary condition here. In the previous cases, people have considered the Floquet evolution operator with $n=2$ (system size $L \, \mathrm{mod} \, 2 = 0$), i.e. 
\be 
\mathbf{U}_{\rm F} \big( \{ u_j \}_{j=1}^2 \big) = \prod_{k=2}^1 \mathbf{V}_k \big( \{ u_j \}_{j=1}^2 \big) ,
\ee 
where~$\mathbf{V}_k$ are expressed as products of R matrices,
\be 
\begin{split}
    & \mathbf{V}_2 \big( \{ u_j \}_{j=1}^2 \big) = \check{\mathbf{R}}_{1,2} (u_1, u_2) \check{\mathbf{R}}_{3,4} (u_1, u_2) \cdots \check{\mathbf{R}}_{L-1,L} (u_1, u_2) , \\
    & \mathbf{V}_1 \big( \{ u_j \}_{j=1}^2 \big) = \check{\mathbf{R}}_{L,1} (u_1, u_2) \check{\mathbf{R}}_{2,3} (u_1, u_2) \cdots \check{\mathbf{R}}_{L-2,L-1} (u_1, u_2) .
\end{split}
\ee 
This construction appears in several places under different guises, such as ``light-cone Bethe ansatz'' \cite{Destri_1989, Destri_1992} and ``integrable Trotterisation'' \cite{Prosen_2018}. It has been shown that the Floquet evolution operator $\mathbf{U}_{\rm F} \big( \{ u_j \}_{j=1}^2 \big)$ commutes with the staggered transfer matrix,
\be 
	\left[ \mathbf{U}_{\rm F} \big( \{ u_j \}_{j=1}^2 \big) , \mathbf{T} \big( u ,  \{ u_j \}_{j=1}^2 \big) \right] = 0 , \quad \forall u \in \mathbb{C} ,
\ee
hence {\it integrable}. The quantum circuits built on the Floquet evolution operator $\mathbf{U}_{\rm F} \big( \{ u_j \}_{j=1}^2 \big)$ can thus be viewed in the brick-wall structure in Fig. \ref{fig:brickwall}.

As what Theorem \ref{thm:PBC} shall explain, our results are more generic, extending to an arbitrary depth $n \in \mathbb{Z}_+$. We would like to exemplify here the $n=3$ case, before presenting the proof of the general theorem. We hope that this example would offer the reader a concrete example in mind while proceeding with the general scenario.

For the depth $n=3$ (system size $L \, \mathrm{mod} \, 3 = 0$), the Floquet evolution operator with depth $n=3$ becomes 
\be 
\mathbf{U}_{\rm F} \big( \{ u_j \}_{j=1}^3 \big) = \prod_{k=3}^1 \mathbf{V}_k \big( \{ u_j \}_{j=1}^3 \big) ,
\ee
which consists of three parts,
\be 
\begin{split}
    \mathbf{V}_3 \big( \{ u_j \}_{j=1}^3 \big) = & \big[ \check{\mathbf{R}}_{1,2} (u_1, u_2) \check{\mathbf{R}}_{2,3} (u_1, u_3) \big] \big[ \check{\mathbf{R}}_{4,5} (u_1, u_2) \check{\mathbf{R}}_{5,6} (u_1, u_3) \big] \cdots \\ 
    & \quad\times  \cdots \big[ \check{\mathbf{R}}_{L-2,L-1} (u_1, u_2) \check{\mathbf{R}}_{L-1,L} (u_1, u_3) \big] , \\
    \mathbf{V}_2 \big( \{ u_j \}_{j=1}^3 \big) = & \big[ \check{\mathbf{R}}_{L,1} (u_1, u_2) \check{\mathbf{R}}_{1,2} (u_1, u_3) \big] \big[ \check{\mathbf{R}}_{3,4} (u_1, u_2) \check{\mathbf{R}}_{4,5} (u_1, u_3) \big] \cdots \\
    & \quad\times  \cdots \big[ \check{\mathbf{R}}_{L-3,L-2} (u_1, u_2) \check{\mathbf{R}}_{L-2,L-1} (u_1, u_3) \big] , \\
    \mathbf{V}_1 \big( \{ u_j \}_{j=1}^3 \big) = &  \big[ \check{\mathbf{R}}_{L-1,L} (u_1, u_2) \check{\mathbf{R}}_{L,1} (u_1, u_3) \big] \big[ \check{\mathbf{R}}_{2,3} (u_1, u_2) \check{\mathbf{R}}_{3,4} (u_1, u_3)\big] \cdots \\
    & \quad\times  \cdots \big[ \check{\mathbf{R}}_{L-3,L-2} (u_1, u_2) \check{\mathbf{R}}_{L-2,L-1} (u_1, u_3) \big] .
\end{split}
\ee
As we shall show, the Floquet evolution operator with depth $n=3$ commutes with the inhomogeneous transfer matrices with inhomogeneities of period 3, i.e.
\be 
	\left[ \mathbf{U}_{\rm F} \big( \{ u_j \}_{j=1}^3 \big) , \mathbf{T} \big( u ,  \{ u_j \}_{j=1}^3 \big) \right] = 0 , \quad \forall u \in \mathbb{C} ,
	\label{eq:n=3UF}
\ee
hence {\it integrable} too. This construction of Floquet evolution operator of period/depth $n=3$ is novel compared to the ones in the literatures as depicted in Fig. \ref{fig:brickwall3sites}. By using the Floquet Baxterisation procedure outlined below, we can easily generalise to higher periods $n \geq 4$ following the two examples.

\begin{figure}
\centering
\includegraphics[width=.8\linewidth]{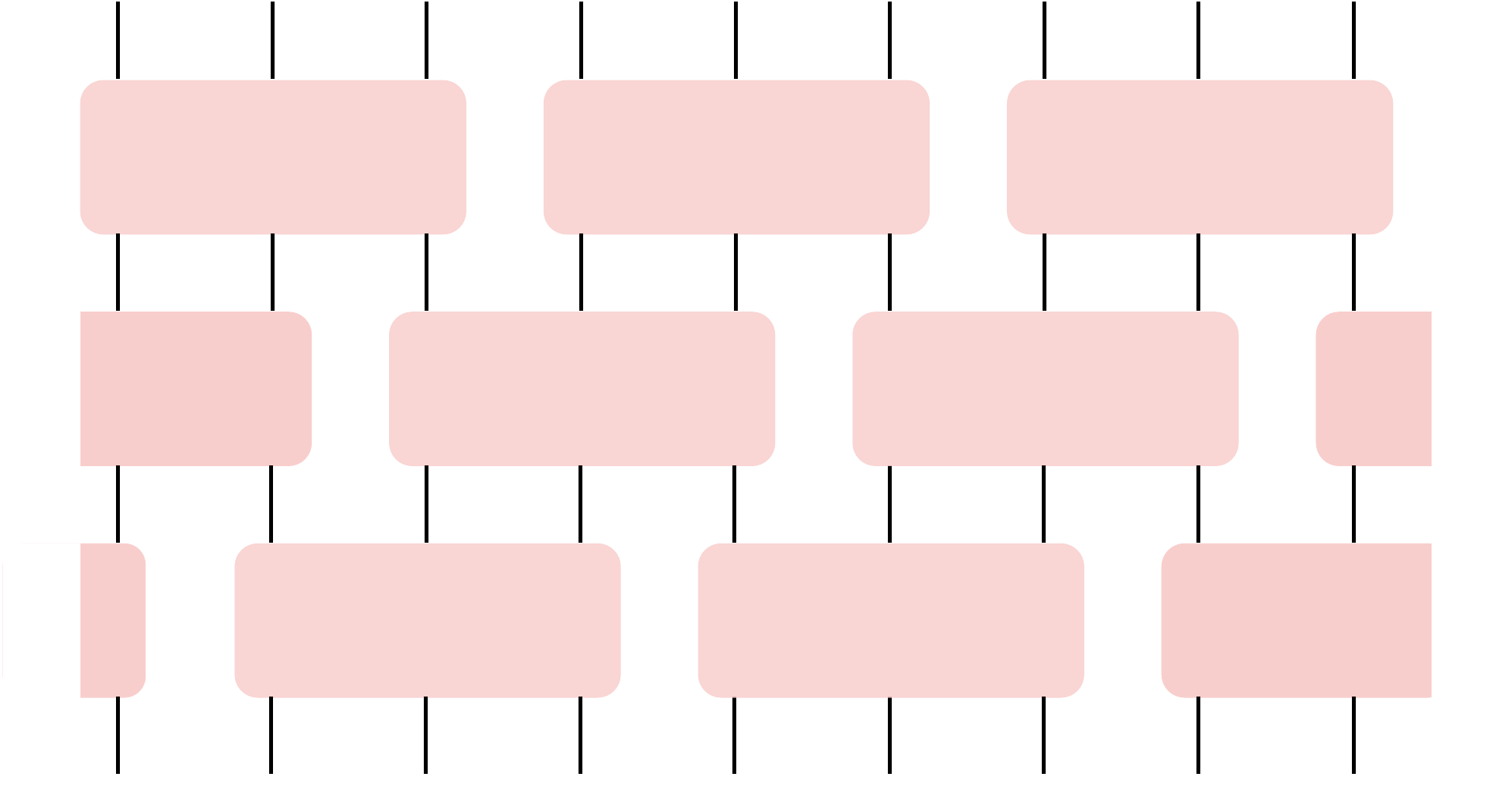}
\caption{Demonstration of the brick-wall circuits with period/depth $n=3$.}
\label{fig:brickwall3sites}
\end{figure}

\subsection{Floquet Baxterisation with periodic boundary condition: generic case}
\label{subsec:FloquetintNsite}

Equipped with the notion of Yang--Baxter integrability, and bearing in mind the two examples above, we are ready to proceed by applying the Floquet Baxterisation to a Floquet evolution operator with depth $n\in \mathbb{Z}_+$ with periodic boundary condition.

\begin{theorem}
\label{thm:PBC}

The periodic Floquet evolution operator with depth $n$ and the system size~$L$ satisfying $L \, \mathrm{mod} \, n =0$
\be 
\begin{split}
    & \mathbf{U}_{\rm F} \big( \{ u_j \}_{j=1}^n \big) = \prod_{k=n}^1 \mathbf{V}_k \big( \{ u_j \}_{j=1}^n \big) , \\
    & \mathbf{V}_k \big( \{ u_j \}_{j=1}^n \big) = \prod_{m=1}^{L/n} \prod_{j=1}^{n-1} \check{\mathbf{R}}_{n(m-1)+k+j,n(m-1)+k+j+1} (u_1 ,u_{j+1} ) . 
\end{split}
\label{eq:defUFn}
\ee 
is \textit{integrable}, i.e. 
\be 
    \left[ \mathbf{U}_{\rm F} \big( \{ u_j \}_{j=1}^n \big)  , \mathbf{T} \big( u , \{ u_j \}_{j=1}^n \big) \right] = 0 , \qquad \forall u \in \mathbb{C} ,
\ee 
where the inhomogeneous transfer matrix is defined in Eq.~\eqref{eq:inhomoT}. The inhomogeneous transfer matrix $\mathbf{T} ( u , \{ u_j \}_{j=1}^n )$ is regarded as the Baxterised Floquet evolution operator.

\end{theorem}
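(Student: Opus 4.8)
The plan is to prove the commutation by a ``train'' (push-through) argument: I regard $\mathbf{T}(u,\{u_j\})=\Tr_a\mathbf{M}_a(u,\{u_j\})$ as a horizontal auxiliary line threaded over all $L$ physical sites, and I drag this line through the brick-wall circuit $\mathbf{U}_{\rm F}$ one gate at a time. Since $\mathbf{U}_{\rm F}$ acts trivially on the auxiliary space, one has $\Tr_a[\mathbf{M}_a\,\mathbf{U}_{\rm F}]=\mathbf{T}(u)\,\mathbf{U}_{\rm F}$, so it will suffice to show that, modulo the trace over $a$, $\mathbf{M}_a(u,\{u_j\})\,\mathbf{U}_{\rm F}=\mathbf{U}_{\rm F}\,\tilde{\mathbf{M}}_a$, where $\tilde{\mathbf{M}}_a$ is a monodromy matrix whose inhomogeneity assignment has the same partial trace as $\mathbf{M}_a$. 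Then $\Tr_a[\mathbf{M}_a\mathbf{U}_{\rm F}]=\mathbf{U}_{\rm F}\Tr_a[\tilde{\mathbf{M}}_a]=\mathbf{U}_{\rm F}\mathbf{T}(u)$ will close the argument.

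The key local ingredient I will establish is an intertwining relation between a single two-site gate and the auxiliary line. Writing $\check{\mathbf{R}}_{m,m+1}=\mathbf{R}_{m,m+1}\mathbf{P}_{m,m+1}$ and applying the Yang--Baxter equation \eqref{eq:RRR1} in the spaces $(a,m,m+1)$, together with $\mathbf{P}_{m,m+1}\mathbf{R}_{a,m+1}\mathbf{P}_{m,m+1}=\mathbf{R}_{a,m}$ from \eqref{eq:perm_op_props}, I obtain
\be
\mathbf{R}_{a,m}(u,\alpha)\,\mathbf{R}_{a,m+1}(u,\beta)\,\check{\mathbf{R}}_{m,m+1}(\alpha,\beta)=\check{\mathbf{R}}_{m,m+1}(\alpha,\beta)\,\mathbf{R}_{a,m}(u,\beta)\,\mathbf{R}_{a,m+1}(u,\alpha).
\ee
This says that dragging a gate $\check{\mathbf{R}}_{m,m+1}(\alpha,\beta)$ to the left across the two R-matrices of the auxiliary line sitting on sites $m$ and $m+1$ leaves the gate intact but interchanges the two inhomogeneities $\alpha\leftrightarrow\beta$ carried by those sites. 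Since a gate acts trivially on every other site, it commutes freely through all the remaining R-matrices of $\mathbf{M}_a$, so in the bulk each gate is transported to the far left at the sole cost of swapping the two inhomogeneities on its support.

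I will then push the layers through one at a time, exploiting the staircase structure of $\mathbf{V}_k$ in \eqref{eq:defUFn}. Within one block the gates read $\check{\mathbf{R}}_{s+1,s+2}(u_1,u_2),\,\check{\mathbf{R}}_{s+2,s+3}(u_1,u_3),\dots$, all sharing the first argument $u_1$. Pulling the leftmost gate first, the matching condition needed for the intertwining is met because the site currently carrying $u_1$ is precisely the lower site of that gate; the resulting swap deposits $u_1$ onto the next site up, which is exactly the lower site of the following gate, so the matching condition is reproduced and the whole staircase is transported in sequence. The net effect of one layer is a one-site cyclic shift of the period-$n$ inhomogeneity pattern $(u_1,\dots,u_n,u_1,\dots)$; applying all $n$ layers shifts the pattern by $n$, i.e.\ by one full period, returning it to itself. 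Hence, once the entire circuit has been pulled through, the inhomogeneity assignment of the residual monodromy matrix coincides with the original one, so $\Tr_a[\tilde{\mathbf{M}}_a]=\mathbf{T}(u)$.

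The main obstacle is the boundary. In each layer one ``wrap-around'' gate acts on the pair $(L,1)$, whose two R-matrices sit at opposite ends of the ordered product $\mathbf{M}_a$, so the intertwining relation cannot be applied as a bulk operator identity there. This is exactly where the trace is essential: using cyclicity of $\Tr_a$ I will move the R-matrix on site $1$ to the right end of the product so that it becomes adjacent to the R-matrix on site $L$, apply the intertwining relation to the now-adjacent pair, and cycle back. Carrying this out for the wrap-around gate of every layer, the gates reassemble on the left into $\mathbf{U}_{\rm F}$ while, by the period-$n$ shift established above, the surviving monodromy under the trace is again $\mathbf{M}_a(u,\{u_j\})$. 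This yields $\mathbf{T}(u)\,\mathbf{U}_{\rm F}=\Tr_a[\mathbf{M}_a\mathbf{U}_{\rm F}]=\mathbf{U}_{\rm F}\,\Tr_a[\mathbf{M}_a]=\mathbf{U}_{\rm F}\,\mathbf{T}(u)$ for all $u\in\mathbb{C}$, which is the claim. I expect the careful tracking of the wrap-around gates under the trace, and the verification that the staircase matching conditions persist layer after layer, to be the only delicate points; the bulk is a direct iteration of the single intertwining identity.
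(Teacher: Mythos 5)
Your proposal follows a genuinely different route from the paper's proof, and its core is sound. The paper never pushes gates through one at a time: it introduces the single\-/layer operator $\mathbf{W}(\{u_j\}_{j=1}^n) = \mathbf{V}_n(\{u_j\}_{j=1}^n)\,\mathbf{G}^{-1}$ (with $\mathbf{G}$ the one-site translation), rewrites it via the ``train trick'' as a trace over a \emph{second} auxiliary space, $\mathbf{W} = \Tr_b \tilde{\mathbf{W}}_b$, and proves a single global intertwining relation $\mathbf{R}_{a,b}(u,u_1)\,\mathbf{M}_a\,\tilde{\mathbf{W}}_b = \tilde{\mathbf{W}}_b\,\mathbf{M}_a\,\mathbf{R}_{a,b}(u,u_1)$, whence $[\mathbf{T},\mathbf{W}]=0$; the theorem then follows from the identities $\mathbf{U}_{\rm F} = \mathbf{W}^n\mathbf{G}^n$ and $[\mathbf{T},\mathbf{G}^n]=0$. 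The wrap-around problem that occupies the last part of your plan simply never arises there, because the only cyclicity ever invoked concerns $\mathbf{R}_{a,b}$, which has no legs in the physical space, so tracing over $a\otimes b$ is unconditionally safe. Your local push-through argument, by contrast, needs no second auxiliary space and no translation operator, and it makes transparent \emph{why} the inhomogeneities must repeat with the same period $n$ as the circuit depth: each layer shifts the pattern by one site, and only after all $n$ layers does it return to itself. Your key local identity is correctly derived from \eqref{eq:RRR1} and \eqref{eq:perm_op_props}, and the staircase bookkeeping (each swap deposits $u_1$ onto the lower site of the next gate, so the matching condition propagates) does go through layer after layer.

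The one step you must repair is the appeal to ``cyclicity of $\Tr_a$''. A partial trace is \emph{not} cyclic: $\Tr_a[AB] = \Tr_a[BA]$ holds only when the physical-space components of $A$ commute with those of $B$. At the moment you want to cycle $\mathbf{R}_{a,1}$ to the right end, the expression under the trace still contains the wrap-around gate itself and all later layers, which act nontrivially on site $1$; moving $\mathbf{R}_{a,1}$ past them is false. The fix stays within your scheme: since no gate ever carries the auxiliary index, first factor every not-yet-processed gate out of the trace to the right (and every already-processed gate out to the left), then cycle the \emph{bare} monodromy --- each moved factor has physical support on a single site disjoint from the remaining factors, so the move is legitimate --- then re-insert the wrap-around staircase, apply your intertwining identity to the now-adjacent pair $\mathbf{R}_{a,L}\mathbf{R}_{a,1}$, push the resulting gates out to the left (their supports, sites $L,1,\dots,k$, are disjoint from the supports of the factors they must pass), and cycle back to the standard ordering before processing the next layer. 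With this bookkeeping made explicit, your argument closes and yields a complete, if lengthier, alternative proof of the theorem.
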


We call Theorem \ref{thm:PBC} the Floquet Baxterisation, making the connection between the Floquet evolution operator (which can be considered as a tilted transfer matrix of the underlying integrable vertex model \cite{Di_Francesco_2005}) and the inhomogeneous transfer matrix.

\begin{proof}
We start with defining the operator $\mathbf{W} ( \{ u_j \}_{j=1}^n )$ acting on the physical Hilbert space $(\mathbb{C}^N)^{\otimes L}$,
\be 
\begin{split}
	\mathbf{W}  \big( \{ u_j \}_{j=1}^n \big) & = \Big[ \prod_{m=1}^{L/n} \prod_{j=1}^{n-1} \check{\mathbf{R}}_{n(m-1)+j, n(m-1)+j+1} (u_1 , u_{j+1} ) \Big] \mathbf{G}^{-1} \\
	& = \mathbf{V}_n \big( \{ u_j \}_{j=1}^n \big)  \mathbf{G}^{-1} ,
\end{split}
\ee 
where the right translation operator $\mathbf{G}$ is expressed in terms of permutation operator~$\mathbf{P}_{a,b}$ as
\be 
     \mathbf{G} = \prod_{m=1}^{L-1} \mathbf{P}_{m,m+1} , \quad  \mathbf{G}^{-1} = \prod_{m=L-1}^{1} \mathbf{P}_{m,m+1} .
\ee 
Taking into account Eq.~\eqref{eq:perm_op_props} we immediately see that~$\mathbf{G}$ translates an arbitrary operator~$\mathbf{F}_m$ to the right by one site:
\be
      \mathbf{G} \mathbf{F}_m\mathbf{G}^{-1} = \mathbf{F}_{m+1}
\ee
Similar to the way how we obtain the transfer matrix from the monodromy matrix, we rewrite the operator $\mathbf{W} ( \{ u_j \}_{j=1}^n )$ as
\be 
    \mathbf{W}  \big( \{ u_j \}_{j=1}^n \big) = \Tr_b \tilde{\mathbf{W}}_b \big( \{ u_j \}_{j=1}^n \big) ,
\ee 
\be 
    \tilde{\mathbf{W}}_b \big( \{ u_j \}_{j=1}^n \big) = \prod_{m=1}^{L/n} \left[ \mathbf{P}_{b, n(m-1)+1} \prod_{j=1}^{n-1} \mathbf{R}_{a, n(m-1)+j+1} (u_1 , u_{j+1} )  \right] .
\ee 
Here we have used the properties of the permutation operator together with the ``train trick'' \cite{Faddeev_1996}, i.e.
\be 
    \mathbf{P}_{a,b} = \mathbf{P}_{b,a} , \quad \mathbf{P}_{a,b} \mathbf{P}_{a,c} = \mathbf{P}_{b,c} \mathbf{P}_{a,b} , \quad \Tr_a \mathbf{P}_{a,b} = \mathbbm{1}_b .
    \label{eq:permutation}
\ee 
The properties of the permutation operator \eqref{eq:permutation} also imply that
\be 
    \mathbf{R}_{a,b} (u,u_1 ) \mathbf{R}_{a,m} (u, u_1 ) \mathbf{P}_{b,m} = \mathbf{P}_{b,m} \mathbf{R}_{a,m} (u,u_1) \mathbf{R}_{a,b} (u,u_1) .
\ee 
Together with the Yang--Baxter equation \eqref{eq:RRR1}, we realise that $\mathbf{R}_{a,b} (u , u_1) $ is essentially the intertwiner between the inhomogeneous monodromy matrix $\mathbf{M}_a ( u,  \{ u_j \}_{j=1}^n )$ and $\tilde{\mathbf{W}}_b ( \{ u_j \}_{j=1}^n )$,
\be 
\resizebox{.89\linewidth}{!}{$
    \mathbf{R}_{a,b} (u , u_1) \mathbf{M}_a \big( u,  \{ u_j \}_{j=1}^n \big)  \tilde{\mathbf{W}}_b \big( \{ u_j \}_{j=1}^n \big) =  \tilde{\mathbf{W}}_b \big( \{ u_j \}_{j=1}^n \big) \mathbf{M}_a \big( u,  \{ u_j \}_{j=1}^n \big) \mathbf{R}_{a,b} (u , u_1) ,
$}
\ee 
which implies the commutation relation
\be 
    \left[ \mathbf{T} \big( u , \{ u_j \}_{j=1}^n \big) , \mathbf{W}  \big( \{ u_j \}_{j=1}^n \big) \right] = 0 , \quad \forall u \in \mathbb{C} .
\ee 
Moreover, using the right translational operator $\mathbf{G}$, we have
\be 
    \mathbf{V}_m \big( \{ u_j \}_{j=1}^n \big) = \mathbf{G}^{m-p} \mathbf{V}_p \big( \{ u_j \}_{j=1}^n \big) \mathbf{G}^{p-m} , 
\ee 
with $m , p \in \{ 1 ,2 , \cdots n \}$. We can therefore express the periodic Floquet evolution operator with period $n$~[see Eq.~\eqref{eq:defUFn}] as
\be 
    \mathbf{U}_{\rm F} \big( \{ u_j \}_{j=1}^n \big) = \prod_{k=n}^1 \mathbf{V}_k \big( \{ u_j \}_{j=1}^n \big) = \mathbf{W}^n \big( \{ u_j \}_{j=1}^n \big) \mathbf{G}^n .
\ee 
Since the inhomogeneous transfer matrix commutes with $ \mathbf{G}^n$  \footnote{The period in terms of the lattice sites of the inhomogeneous transfer matrix is exactly $n$, hence commuting with $\mathbf{G}^n$,}, it also commute with the periodic Floquet evolution operator with period $n$,
\be 
    \left[ \mathbf{U}_{\rm F} \big( \{ u_j \}_{j=1}^n \big)  , \mathbf{T} \big( u , \{ u_j \}_{j=1}^n \big) \right] = 0 , \qquad \forall u \in \mathbb{C} . 
\ee 
\end{proof}

From this construction, we define the local Floquet quantum gate acting on $n$ sites,
\be 
    \mathcal{U}_{m,m+n-1} \big( \{ u_j \}_{j=1}^n \big) = \prod_{j=1}^{n-1} \check{\mathbf{R}}_{m+j-1, m+j} (u_1 , u_{j+1} ) ,
\ee 
acting on the physical Hilbert space from $m$th site to $(m+n-1)$th site. The Floquet evolution operator can be thus expressed as
\be 
    \mathbf{U}_{\rm F} \big( \{ u_j \}_{j=1}^n \big) = \prod_{k=n}^1 \prod_{m=1}^{L/n} \mathcal{U}_{n(m-1)+k,n m+k-1} \big( \{ u_j \}_{j=1}^n \big) .
\ee 
The relation to the Floquet dynamics is shown as follows. We define a Hamiltonian with periodic boundary conditions,
\be 
    \mathbf{H}^{(n)} = \sum_{m=1}^L \mathbf{h}_{m, m+1, \cdots , m+n-1 }, \quad \mathbf{h}_{m, m+1, \cdots , m+n-1 } = \frac{\ii}{t_0} \log \mathcal{U}_{m, m+n-1} , \quad t_0 \in \mathbbm{R},
\ee 
such that
\be 
    \mathcal{U}_{m, m+n-1} = \exp \big( - \ii t_0 \mathbf{h}_{m, m+1, \cdots , m+n-1 } \big) .
\ee 
Then, we divide the Hamiltonian into $n$ parts, 
\be 
    \mathbf{H}_j^{(n)} = \sum_{m=1}^{L/n} \mathbf{h}_{n(m-1)+j +1 , n(m-1)+j+2, \cdots , n m +j } , \quad \mathbf{H}^{(n)} = \sum_{j=1}^n \mathbf{H}_j^{(n)} ,
\ee 
and the Floquet evolution operator becomes 
\be 
    \mathbf{U}_{\rm F} ( \{ u_j \}_{j=1}^n ) = \prod_{j=n}^1 \exp \big( -\ii  \mathbf{H}_j^{(n)} t_0 \big) ,
\ee 
describing a Floquet dynamics with Floquet period $T = n t_0$. Note that even though the Floquet evolution operator is integrable, the Hamiltonian $\mathbf{H}^{(n)}$ might not be integrable itself.

Using the graphic representation of the R matrix, we can view the Floquet evolution operator \eqref{eq:defUFn} as a tilted transfer matrix of the underlying vertex model in a cylinder \cite{Destri_1989, Destri_1992, Di_Francesco_2005}, as shown in Fig. \ref{fig:UF2site}.  
Note that the tilted transfer matrices do not commute with each other generally, i.e.
\be 
	\left[ \mathbf{U}_{\rm F} \big( \{ u_j \}_{j=1}^n \big) , \mathbf{U}_{\rm F} \big( \{ v_j \}_{j=1}^n \big)  \right] \neq 0 , 
\ee 
when $\{ u_j \}_{j=1}^n$ and $\{ v_j \}_{j=1}^n$ do not coincide. 

\begin{figure}
\centering
\includegraphics[width=.95\linewidth]{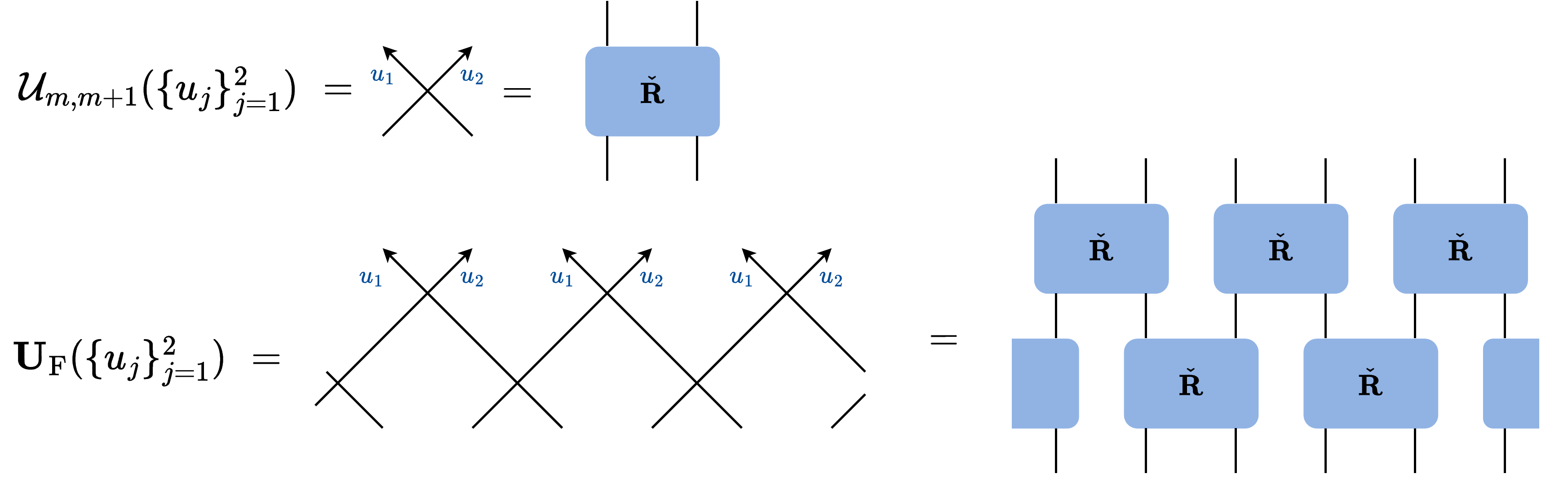}
\caption{Diagrammatic demonstration of the Floquet evolution operator with period 2 as a tilted transfer matrix, the so-called ``light-cone transfer matrix''.}
\label{fig:UF2site}
\end{figure}

The case of $n=2$ has been studied in many previous works \cite{Destri_1989, Destri_1992, Prosen_2018}. The cases with depth $n\geq 3$ can be obtained analogously, resulting in a different tilted transfer matrix, as illustrated in Figs. \ref{fig:UF3site1} and \ref{fig:UF3site2} for $n=3$. This is precisely the case exemplified in \eqref{eq:n=3UF}. 

We note that the tilted transfer matrices of depth $n\geq 3$ for the integrable vertex models might be useful in the context of thermodynamic limit of integrable vertex models with different geometries/topologies. Another observation is that the tilted transfer matrices with depth $n\geq 3$ are not ``symmetric'' with respect to the number of left and right directed lines, cf. Fig.~\ref{fig:UF3site2}. We also note that one can similarly obtain the integrable Floquet evolution operator with local terms 
\be 
    \tilde{\mathcal{U}}_{m,m+n-1} = \check{\mathbf{R}}_{m+n-1, m+n-2} (u_n, u_{n-1} ) \cdots \check{\mathbf{R}}_{m+2, m+1} (u_n, u_2) \check{\mathbf{R}}_{m+1, m} (u_n, u_1) ,
\ee 
by the same procedure of the Floquet Baxterisation. The relation between the two ``chiral'' and ``anti-chiral'' tilted transfer matrices (Floquet evolution operators) is postponed to future investigation.

{\bf Remark.} This construction for integrable Floquet dynamics (or integrable quantum circuits) with depth $n\geq 3$ is new. It generalises the known results for the integrable Floquet dynamics of depth $n=2$. The construction is different from the recent results on the medium-range integrable spin chain, where one can construct integrable Floquet dynamics of depth $n=3$ \cite{Pozsgay_2021}. However, the operator $\mathcal{U}_{m,m+2}$ cannot be decomposed into the product of two $\check{\mathbf{R}}$ operators, as discussed in this section. We think that there are different mechanisms to construct integrable Floquet dynamics of period $n\geq 3$, within which our method is not exhaustive.

\begin{figure}
\centering
\includegraphics[width=.65\linewidth]{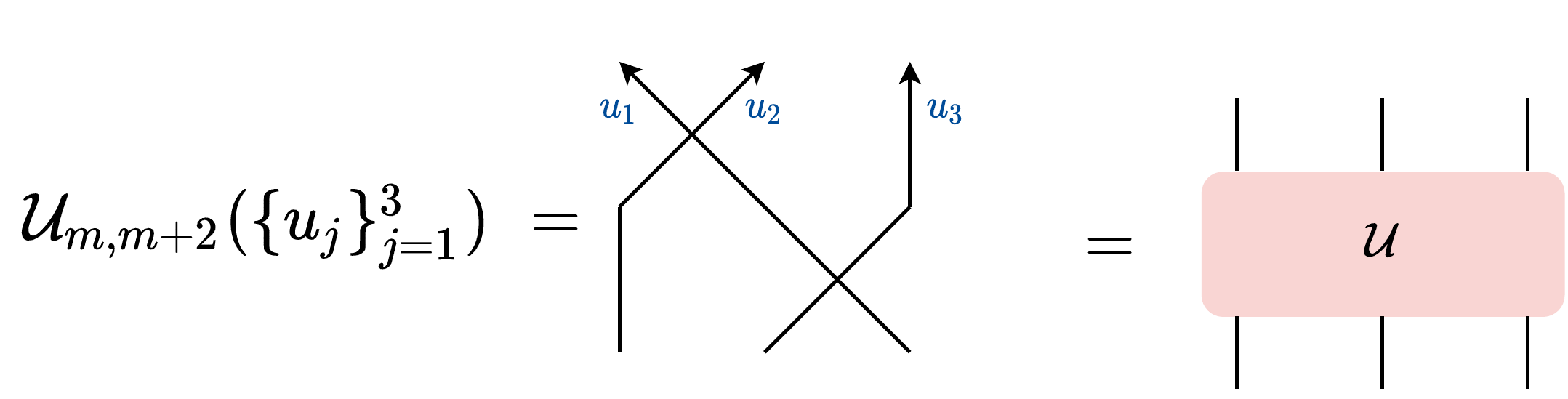}
\caption{Diagrammatic demonstration of the local gate $\mathcal{U}_{m,m+2}$ with period 3.}
\label{fig:UF3site1}
\end{figure}
\begin{figure}
\centering
\includegraphics[width=.85\linewidth]{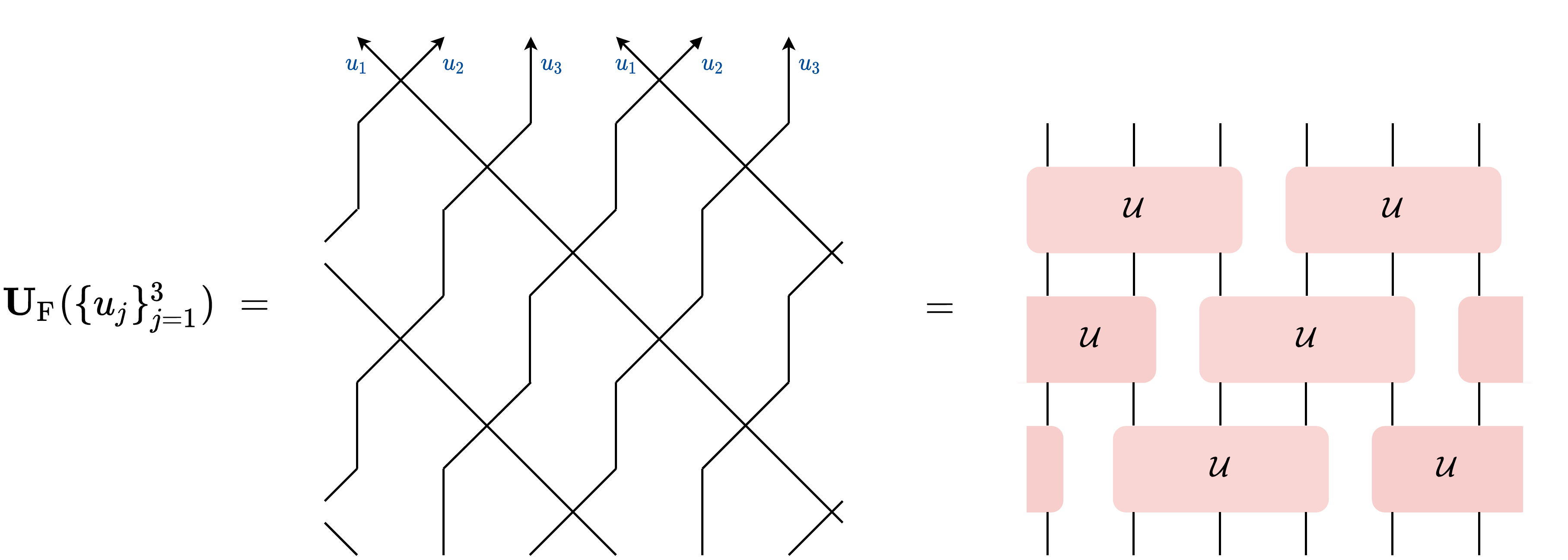}
\caption{Diagrammatic picture of the Floquet evolution operator with period 3 as a tilted transfer matrix.}
\label{fig:UF3site2}
\end{figure}

\subsection{Floquet Baxterisation for regular R matrix}
\label{subsec:Floquetregular}

After presenting the results for the generic case, where the regularity condition for the R matrix \eqref{eq:regularity} is not assumed, for completeness we present the results for the scenario when the R matrix is regular, which has been studied previously \cite{Prosen_2018, Claeys_2021}.

If the R matrix is regular ($\mathbf{R}_{a,b} (0,0) = \mathbf{P}_{a,b}$) and the inhomogeneity $u_1 = 0$, the Floquet evolution operator can be expressed directly in terms of the inhomogeneous transfer matrix, i.e.
\be 
\begin{split}
	& \left. \mathbf{W} \big( \{ u_j \}_{j=1}^n \big) \right|_{u_1 = 0} = \left. \mathbf{T} \big( 0 , \{ u_j \}_{j=1}^n \big) \right|_{u_1 = 0 } , \\
	& \left. \mathbf{U}_{\rm F} \big( \{ u_j \}_{j=1}^n \big) \right|_{u_1 = 0} = \left. \mathbf{T}^n \big( 0 , \{ u_j \}_{j=1}^n \big) \right|_{u_1 = 0 } \mathbf{G}^n .
\end{split}
\ee

This relation is useful, since we obtain the complete spectrum of the Floquet evolution operator in terms of the eigenvalues of the inhomogeneous transfer matrix, which in turn can be expressed in terms of the solutions to the Bethe equations, quantum numbers that label the eigenstates of integrable transfer matrices \footnote{We explain the procedure with the example of the 6-vertex model in Sec. \ref{sec:staggered6vertex}.}. This allows us to use thermodynamic Bethe ansatz to study the behaviour of the spectra of the Floquet evolution operators in the thermodynamic limit. A diagrammatic demonstration of the relation between the Floquet evolution operator and the inhomogeneous transfer matrix with $n=2$ is shown in~Fig.~\ref{fig:UFdiagram}.

\begin{figure}
    \centering
    \includegraphics[width=.9\linewidth]{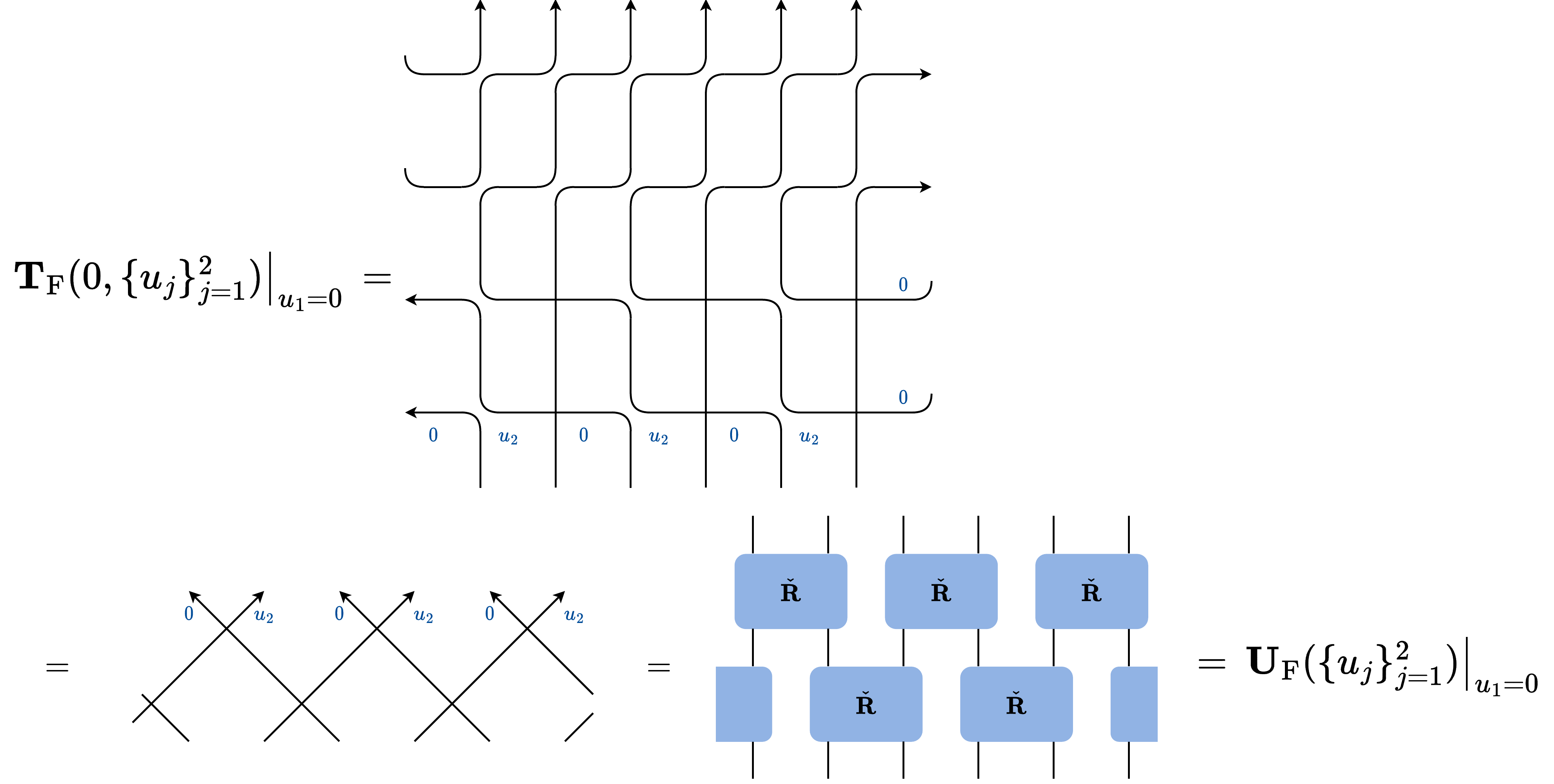}
    \caption{Diagrammatic demonstration of the relation between the Floquet evolution operator and the inhomogeneous transfer matrix with $n=2$ when the R matrix is regular.}
    \label{fig:UFdiagram}
\end{figure}

\subsection{Floquet integrability with reflecting ends}
\label{subsec:Floquetintopen}

In addition to the Floquet Baxterisation with periodic boundary condition presented in Sec. \ref{subsec:FloquetintNsite}, we generalise the construction to the case with reflecting ends (for depth $n=2$), i.e. open boundary condition, using the boundary Yang--Baxter equations. The boundary Yang--Baxter equations only apply when the R matrix satisfies the following properties \cite{Sklyanin_1988, Mezincescu_1991}: (1) the R matrix is of difference form, cf. \eqref{eq:differenceform}; (2) the R matrix is regular, cf. \eqref{eq:regularity}; (3) the R matrix has the inversion and crossing symmetries, explained below. We assume the R matrix satisfies all these properties. Since the R matrix is of difference form, we denote the R matrix as $\mathbf{R}_{a,b} (u)$. The system size $L$ is assumed to be even to accommodate the depth of the Floquet evolution operator $n=2$.

We begin with defining the inversion and crossing symmetries of the R matrix. The inversion symmetry of the R matrix means that
\be 
    \mathbf{R}_{a,b} (u) \mathbf{R}_{a,b}^t (-u) = \mathbbm{1} ,
\label{eq:inversion}
\ee
where $\mathbf{R}_{a,b}^t : = \mathbf{R}_{a,b}^{t_a, t_b}$, i.e. the transpose of the R matrix in both spaces $a$ and $b$. Meanwhile, the crossing symmetry of the R matrix implies the existence of a constant operator $\mathbf{v}_a$ acting on space $a$ such that
\be 
    \mathbf{R}_{a,b} (u) \propto \mathbf{v}_a \mathbf{R}_{a,b}^{t_b} (-u-\eta) \mathbf{v}_a^{-1} .
    \label{eq:crossing1}
\ee 
The crossing symmetry \eqref{eq:crossing1} can be equivalently expressed as
\be 
    \mathbf{R}_{a,b}^{t_a} (u) \mathbf{w}_a \mathbf{R}_{a,b}^{t_b} (-u-2\eta) \mathbf{w}_a^{-1} \propto \mathbbm{1} ,
    \label{eq:crossing2}
\ee
where the operator $\mathbf{w}_a$ is 
\be 
    \mathbf{w}_a = \mathbf{v}_a^{t} \mathbf{v}_a . 
\ee

When all three properties are satisfied, the boundary Yang--Baxter equations \cite{Sklyanin_1988, Mezincescu_1991} read
\be 
    \mathbf{R}_{a,b} (u-v) \mathbf{K}_{-,a} (u) \mathbf{R}_{b,a} (u+v) \mathbf{K}_{-,b} (v) = \mathbf{K}_{-,b} (v) \mathbf{R}_{a,b} (u+v) \mathbf{K}_{-,a} (u) \mathbf{R}_{b,a} (u-v) ,
    \label{eq:boundaryYBE1}
\ee 

\be 
\begin{split}
    & \mathbf{R}_{a,b} (-u+v) \mathbf{K}^{t_a}_{+,a} (u) \mathbf{w}_a^{-1} \mathbf{R}_{a,b}^t (-u-v-2\eta ) \mathbf{w}_a \mathbf{K}^{t_b}_{+,b} (v) \\
    & = \mathbf{K}^{t_b}_{+,b} (v) \mathbf{w}_a \mathbf{R}_{a,b} (-u-v-2\eta) \mathbf{w}_a^{-1} \mathbf{K}^{t_a}_{+,a} (u)  \mathbf{R}_{a,b}^t (-u+v) .
\end{split}
\label{eq:boundaryYBE2}
\ee 

Using the boundary Yang--Baxter equations, we define the inhomogeneous monodromy matrix of period $n=2$ with open boundaries (reflecting ends)
\be 
    \mathbf{M}^{\rm o}_a \big( u, \{ u_j \}_{j=1}^2 \big) = \mathbf{M}_a \big( u , \{ u_j \}_{j=1}^2  \big) \mathbf{K}_{-,a}(u) \mathbf{M}^{-1}_a \big( -u,\{ u_j \}_{j=1}^2  \big) \mathbf{K}_{+,a}(u) ,
\ee 
where after using the inversion symmetry \eqref{eq:inversion} we obtain
\be 
\begin{split}
    \mathbf{M}^{-1}_a \big(-u ,\{ u_j \}_{j=1}^2  \big) & = \mathbf{R}_{a,L}^t (u+u_2 ) \mathbf{R}_{a,L-1}^t (u+u_1) \cdots \mathbf{R}_{a,1}^t (u+u_1) \\
    & = \prod_{m=L/2}^1 \mathbf{R}_{a,2m}^t (u+u_2) \mathbf{R}_{a,2m-1}^t (u+u_1) .
\end{split}
\ee 

The inhomogeneous transfer matrix with open boundary is obtained by tracing over the auxiliary space of the monodromy matrix,
\be 
    \mathbf{T}^{\rm o} \big(u ,\{ u_j \}_{j=1}^2  \big)  = \Tr_a \mathbf{M}^{\rm o}_a  \big(u ,\{ u_j \}_{j=1}^2  \big) .
\ee 

Using boundary Yang--Baxter equations \eqref{eq:boundaryYBE1} and \eqref{eq:boundaryYBE2}, the inhomogeneous transfer matrices with open boundary condition are thus in involution,
\be 
    \left[ \mathbf{T}^{\rm o} \big(u ,\{ u_j \}_{j=1}^2  \big), \mathbf{T}^{\rm o} \big(v ,\{ u_j \}_{j=1}^2  \big) \right] = 0 , \quad \forall u,v \in \mathbb{C} . 
\ee 

\begin{theorem}
\label{thm:OBC}

The Floquet time evolution operator with open boundary condition with system size $L \, \mathrm{mod} \, 2 = 0$ is defined as
\be 
    \mathbf{U}^{\rm o}_{\rm F} (\alpha) = \prod_{m=1}^{L/2} \check{\mathbf{R}}_{2m-1,2m} (-\alpha) \mathbf{K}_{-,L} \big( \frac{\alpha}{2} \big) \prod_{m=1}^{L/2-1} \check{\mathbf{R}}_{2m,2m+1}^t (-\alpha) \tilde{\mathbf{K} }_{+,1} \big( \frac{\alpha}{2} \big)  ,
\ee 
where
\be 
    \tilde{\mathbf{K} }_{+,1} \big( \frac{\alpha}{2} \big) = \Tr_a \left[ \mathbf{R}_{a,1}^t (-\alpha) \mathbf{K}_{+,a}  \big( \frac{\alpha}{2} \big)  \right] .
\ee 
The Floquet time evolution operator with open boundary condition is integrable, i.e.
\be 
    \left[ \mathbf{U}^{\rm o}_{\rm F} (\alpha) , \mathbf{T}^{\rm o} \big( u , \{ \frac{\alpha}{2} , - \frac{\alpha}{2} \} \big) \right] = 0 , \quad \forall u \in \mathbb{C} .
\ee 
\end{theorem}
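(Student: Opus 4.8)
The plan is to reduce the statement to the involution property of the open-boundary transfer matrices established immediately above the theorem, by showing that the Floquet operator $\mathbf{U}^{\rm o}_{\rm F}(\alpha)$ is nothing but a single member of that commuting family, namely $\mathbf{T}^{\rm o}(u,\{\tfrac{\alpha}{2},-\tfrac{\alpha}{2}\})$ evaluated at the special spectral point $u=\tfrac{\alpha}{2}=u_1$. This is the reflecting-end analogue of the regular-case identity $\mathbf{W}|_{u_1=0}=\mathbf{T}(0)|_{u_1=0}$ obtained in Sec.~\ref{subsec:Floquetregular} for periodic boundaries; note that here no extra translation factor $\mathbf{G}^n$ is needed, since the double-row structure of $\mathbf{M}^{\rm o}_a$ already folds the two sublattices together and the reflections at the walls play the role that translation played in the proof of Theorem~\ref{thm:PBC}. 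Once the identification $\mathbf{U}^{\rm o}_{\rm F}(\alpha)=\mathbf{T}^{\rm o}(\tfrac{\alpha}{2},\{\tfrac{\alpha}{2},-\tfrac{\alpha}{2}\})$ is in place, the desired commutation follows at once for all $u\in\mathbb{C}$, because the boundary Yang--Baxter equations \eqref{eq:boundaryYBE1}--\eqref{eq:boundaryYBE2} already guarantee $[\mathbf{T}^{\rm o}(u),\mathbf{T}^{\rm o}(v)]=0$.

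To establish the identification I would evaluate the double-row trace $\mathbf{T}^{\rm o}(\tfrac{\alpha}{2})=\Tr_a[\mathbf{M}_a(\tfrac{\alpha}{2})\,\mathbf{K}_{-,a}(\tfrac{\alpha}{2})\,\mathbf{M}^{-1}_a(-\tfrac{\alpha}{2})\,\mathbf{K}_{+,a}(\tfrac{\alpha}{2})]$ at the regular point and collapse the auxiliary space. In the forward monodromy $\mathbf{M}_a(\tfrac{\alpha}{2},\{\tfrac{\alpha}{2},-\tfrac{\alpha}{2}\})$ the odd-site factors $\mathbf{R}_{a,2m-1}(\tfrac{\alpha}{2},\tfrac{\alpha}{2})=\mathbf{R}_{a,2m-1}(0)=\mathbf{P}_{a,2m-1}$ reduce to permutations by regularity \eqref{eq:regularity} and the difference form \eqref{eq:differenceform}, while the even-site factors are $\mathbf{R}_{a,2m}(\tfrac{\alpha}{2},-\tfrac{\alpha}{2})=\mathbf{R}_{a,2m}(\alpha)$. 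Walking the permutations through with the train trick \eqref{eq:permutation} and using $\check{\mathbf{R}}=\mathbf{R}\mathbf{P}$ converts this forward row into the product of local gates $\prod_{m=1}^{L/2}\check{\mathbf{R}}_{2m-1,2m}(-\alpha)$ appearing in $\mathbf{U}^{\rm o}_{\rm F}$ (the sign of the argument being fixed by the difference-form convention), and simultaneously transports the auxiliary leg to site $L$, so that the right reflection $\mathbf{K}_{-,a}(\tfrac{\alpha}{2})$ is carried into $\mathbf{K}_{-,L}(\tfrac{\alpha}{2})$.

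The return half is treated symmetrically. By the inversion symmetry \eqref{eq:inversion} the inverse monodromy $\mathbf{M}^{-1}_a(-\tfrac{\alpha}{2})$ is the product of transposed R matrices $\mathbf{R}^t_{a,\bullet}$ written out explicitly before the theorem, running from site $L$ back to site $1$; at the regular point these collapse, via the same permutation-walking argument applied to the transposed factors, into the transposed gates $\prod_{m=1}^{L/2-1}\check{\mathbf{R}}^t_{2m,2m+1}(-\alpha)$ and carry the auxiliary leg back toward site $1$. The single residual transposed R matrix $\mathbf{R}^t_{a,1}$ at the left edge does not pair into a local gate but instead multiplies the left reflection $\mathbf{K}_{+,a}(\tfrac{\alpha}{2})$; tracing over the auxiliary space $a$ produces precisely $\tilde{\mathbf{K}}_{+,1}(\tfrac{\alpha}{2})=\Tr_a[\mathbf{R}^t_{a,1}(-\alpha)\mathbf{K}_{+,a}(\tfrac{\alpha}{2})]$ as defined in the statement. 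Collecting the four pieces in order reproduces $\mathbf{U}^{\rm o}_{\rm F}(\alpha)$ exactly, completing the identification and hence the proof.

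The main obstacle I anticipate is the bookkeeping of the transposes and the matching of the crossing structure at the left wall. The return sweep enters through $\mathbf{R}^t$ rather than $\mathbf{R}$, so one must verify that the permutation-walking collapse that worked for $\mathbf{M}_a$ applies verbatim to the transposed factors of $\mathbf{M}^{-1}_a$, and that the crossing operators $\mathbf{v}_a,\mathbf{w}_a$ entering the second boundary equation \eqref{eq:boundaryYBE2} are consistent with the reflecting-end geometry once the auxiliary leg is carried across site $1$. Checking that the auxiliary trace closes correctly — i.e. that the folded double-row path traces out a single closed auxiliary loop reflected once at each wall, with the leftover edge factors assembling into exactly $\mathbf{K}_{-,L}$ and $\tilde{\mathbf{K}}_{+,1}$ and nothing more — is the place where the reflecting-boundary argument genuinely departs from the periodic proof of Theorem~\ref{thm:PBC}.
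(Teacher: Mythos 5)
Your proposal is correct and follows essentially the same route as the paper: the paper likewise proves the identification $\mathbf{U}^{\rm o}_{\rm F}(\alpha)=\mathbf{T}^{\rm o}\bigl(\tfrac{\alpha}{2},\{\tfrac{\alpha}{2},-\tfrac{\alpha}{2}\}\bigr)$ by evaluating the double-row transfer matrix at the regular point, walking the permutation operators through each half (the forward half yielding $\prod_{m}\check{\mathbf{R}}_{2m-1,2m}(-\alpha)\,\mathbf{K}_{-,L}$, the transposed return half yielding $\prod_{m}\check{\mathbf{R}}^t_{2m,2m+1}(-\alpha)$ with the residual $\check{\mathbf{R}}^t_{a,1}\mathbf{K}_{+,a}$ traced into $\tilde{\mathbf{K}}_{+,1}$), and then invoking the involution $[\mathbf{T}^{\rm o}(u),\mathbf{T}^{\rm o}(v)]=0$ guaranteed by the boundary Yang--Baxter equations. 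The bookkeeping you flag as the main obstacle is resolved in the paper exactly as you anticipate: the translation factors $\mathbf{G}^{-1}\mathbf{P}_{a,1}$ and $\mathbf{P}_{a,1}\mathbf{G}$ produced by the two sweeps cancel against each other, so no leftover factor survives.
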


\begin{proof}

To begin with, we express the transpose of the R matrix in terms of $\check{\mathbf{R}}$,
\be 
    \mathbf{R}^t_{a,b} (u) = \mathbf{P}_{a,b} \check{\mathbf{R}}_{a,b}^t (u) .
\ee 

We consider the inhomogeneous transfer matrix with open boundary at $u = u_1 = -u_2 = \frac{\alpha}{2}$, where the constant $\alpha \in \mathbb{C}$,
\be 
\begin{split}
    \mathbf{T}^{\rm o} \left(\frac{\alpha}{2} ,\{ \frac{\alpha}{2} , - \frac{\alpha}{2} \} \right) = \Tr_a & \left[  \prod_{m=1}^{L/2} ( \mathbf{P}_{a,2m-1} \check{\mathbf{R}}_{a,2m} (-\alpha) \mathbf{P}_{a,2m} ) \mathbf{K}_{-,a}\big( \frac{\alpha}{2} \big) \right. \\
    & \left. \prod_{m=L/2}^{1} ( \mathbf{P}_{a,2m} \mathbf{P}_{a,2m-1} \check{\mathbf{R}}_{a,2m-1}^t (-\alpha) ) \mathbf{K}_{+,a}\big( \frac{\alpha}{2} \big) \right] .
\end{split}    
\ee 
For the first part of the expression, we move all the permutation operators to the right, i.e.
\be 
	\prod_{m=1}^{L/2} ( \mathbf{P}_{a,2m-1} \check{\mathbf{R}}_{a,2m} (-\alpha) \mathbf{P}_{a,2m} ) \mathbf{K}_{-,a}\big( \frac{\alpha}{2} \big) = \prod_{m=1}^{L/2} \check{\mathbf{R}}_{2m-1,2m} (-\alpha)  \mathbf{K}_{-,L}\big( \frac{\alpha}{2} \big) \mathbf{G}^{-1} \mathbf{P}_{a,1} ,
\ee
 while for the remaining part, we move all the permutation operators to the left,
 \be 
 \begin{split}
 	 \prod_{m=L/2}^{1} ( \mathbf{P}_{a,2m} \mathbf{P}_{a,2m-1} \check{\mathbf{R}}_{a,2m-1}^t (-\alpha) ) \mathbf{K}_{+,a}\big( \frac{\alpha}{2} \big) = & \mathbf{P}_{a,1} \mathbf{G} \prod_{m=1}^{L/2-1} \check{\mathbf{R}}_{2m,2m+1}^t (-\alpha) \\
 	 & \check{\mathbf{R}}_{a,1}^t (-\alpha) \mathbf{K}_{+,a}\big( \frac{\alpha}{2} \big)  .
 \end{split}
 \ee 
After multiplying the two parts, we observe that the dependence of the auxiliary space only exists on the last two terms, with which we could easily perform the operation of taking trace in the auxiliary space, i.e.
\be 
	\Tr_a \left[ \check{\mathbf{R}}_{a,1}^t (-\alpha) \mathbf{K}_{+,a}\big( \frac{\alpha}{2} \big)  \right] = \tilde{\mathbf{K} }_{+,1} \big( \frac{\alpha}{2} \big)  .
\ee 
Combining all the steps, we arrive at the following identity:
\be 
\begin{split}
	\mathbf{T}^{\rm o} \left(\frac{\alpha}{2} ,\{ \frac{\alpha}{2} , - \frac{\alpha}{2} \} \right)  & = \prod_{m=1}^{L/2} \check{\mathbf{R}}_{2m-1,2m} (-\alpha) \mathbf{K}_{-,L} \big( \frac{\alpha}{2} \big) \prod_{m=1}^{L/2-1} \check{\mathbf{R}}_{2m,2m+1}^t (-\alpha) \tilde{\mathbf{K} }_{+,1} \big( \frac{\alpha}{2} \big) \\
	& = \mathbf{U}^{\rm o}_{\rm F} (\alpha) ,
\end{split} 
\ee 
which means that the Floquet evolution operator with open boundary condition is precisely the inhomogeneous transfer matrix with open boundary condition with specific choices of the spectral parameter and inhomogeneities.

Therefore, one has
\be 
\begin{split}
    \left[ \mathbf{U}^{\rm o}_{\rm F} (\alpha) , \mathbf{T}^{\rm o} \left( u ,\{ \frac{\alpha}{2} , - \frac{\alpha}{2} \} \right)  \right] & = \left[ \mathbf{T}^{\rm o} \left(\frac{\alpha}{2} ,\{ \frac{\alpha}{2} , - \frac{\alpha}{2} \} \right)   , \mathbf{T}^{\rm o} \left( u ,\{ \frac{\alpha}{2} , - \frac{\alpha}{2} \} \right)   \right] 
     = 0 ,
\end{split}
\ee 
valid for any $u \in \mathbb{C}$, as a consequence of the boundary Yang--Baxter equations.

\end{proof}
 
We can visualise the relation between the Floquet evolution operator with open boundary condition and the inhomogeneous transfer matrix in Fig.~\ref{fig:UFdiagramobc}.

\begin{figure}
    \centering
    \includegraphics[width=.9\linewidth]{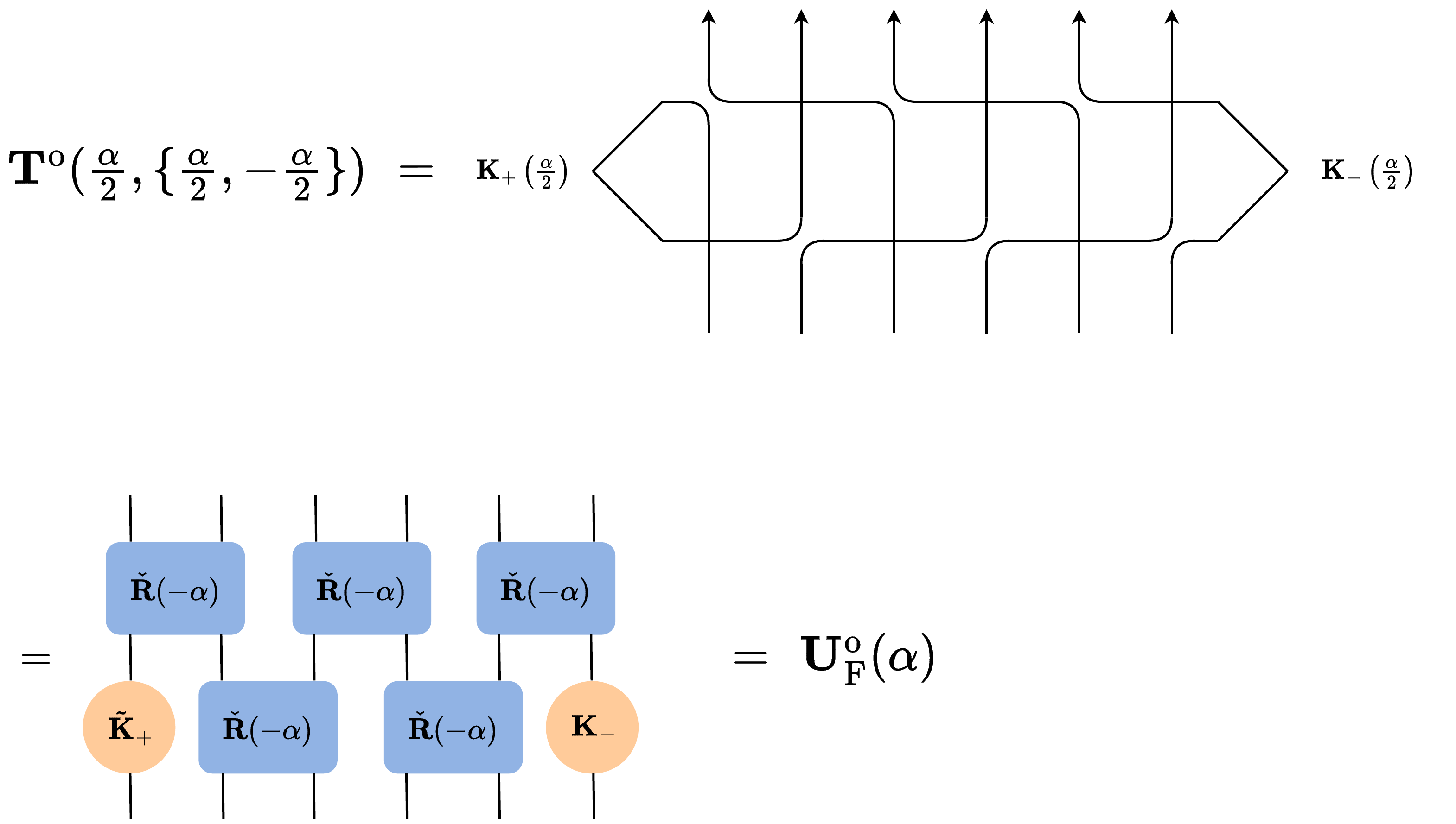}
    \caption{Diagrammatic demonstration of the Floquet evolution operator $\mathbf{U}_{\rm F}^{\rm o} (\alpha)$.}
    \label{fig:UFdiagramobc}
\end{figure}

The Floquet evolution operator with open boundary condition can be used to (partially) prove the conjectured integrability of a certain class of Temperley--Lieb algebraic Floquet protocols, which we will elaborate more on in Sec. \ref{sec:staggered6vertex} with the representation of the Temperley--Lieb (TL) algebra being the 6-vertex model.

\section{Example: Floquet integrability in the staggered 6-vertex model}
\label{sec:staggered6vertex}

\subsection{The affine Temperley--Lieb algebra}
\label{subsec:aTL}

Before discussing integrability of the staggered 6-vertex model, we introduce the affine Temperley--Lieb (aTL) algebra \cite{TL_1971, Baxter_1982}. As we will see shortly, the Lax operator of the staggered 6-vertex model can be expressed in terms of a representation of the aTL algebra.

The aTL algebra is a unital associated algebra with the generators~$g$ and $\{e_m| m \in \{1 ,2 , \cdots L-1 \} \}$ that satisfy the relations:
\be 
    e_m^2 = \beta e_m , \quad e_m e_{m\pm 1} e_m = e_m , \quad g e_m g^{-1} = e_{m+1} ,
    \label{eq:defaTL}
\ee 
where $e_{m} = e_{m \, \mathrm{mod} \, L }$. The aTL algebra is of great use in statistical mechanics, closely related to various loop and vertex models \cite{Baxter_1982, Pasquier_1990, Jacobsen_les_houches}.

Recently, the authors of \cite{Kurlov_21} found that the integrability criterion for any Floquet protocol can be written in terms of a representation of the (affine) Temperley--Lieb algebra. Using the construction of Floquet Baxterisation discussed in Sec. \ref{sec:FloquetBaxter} we can readily generalize some of the results given in \cite{Kurlov_21}.

We consider a representation of the aTL algebra on the Hilbert space  $(\mathbb{C}^N)^{\otimes L}$,
\be 
    e_m \to \mathbf{e}_{m,m+1} , \quad g \to \mathbf{G} ,
\ee 
where $\mathbf{e}_{m,m+1} $ acts locally on the $m$th and $(m+1)$th sites. From the aTL algebra, we have the following R matrix,
\be  \label{aTL_Rmatrix}
	\check{\mathbf{R}}_{m,m+1} (u) = \mathbbm{1} - \frac{\sinh (u)}{\sinh(u+\eta )} \mathbf{e}_{m,m+1} , \quad \mathbf{R}_{m,m+1} (u)  = \check{\mathbf{R}}_{m,m+1} (u) \mathbf{P}_{m,m+1} .
\ee 
where $\beta = 2 \cosh \eta$. The R matrix~(\ref{aTL_Rmatrix}) satisfies the Yang--Baxter relation \eqref{eq:RRR1} of difference form,
\be 
\resizebox{.89\linewidth}{!}{$
	\check{\mathbf{R}}_{m,m+1} (u - v ) \check{\mathbf{R}}_{m+1,m+2} (u) \check{\mathbf{R}}_{m,m+1} (v) = \check{\mathbf{R}}_{m+1,m+2} (v) \check{\mathbf{R}}_{m,m+1} (u) \check{\mathbf{R}}_{m+1,m+2} (u-v) .
$}
\ee 
Moreover, the R matrix can be put into the exponential form using the properties \eqref{eq:defaTL}, i.e.
\be 
	\check{\mathbf{R}}_{m,m+1} (-\alpha ) = \exp \left( - \ii T \mathbf{e}_{m,m+1} \right) = \mathbbm{1} + \frac{1}{\beta} \Bigl( \exp (-\ii \beta T) -1 \Bigr) \mathbf{e}_{m,m+1} ,
\ee 
where the relation between $\alpha$ and $T$ is
\be 
	\alpha = -\frac{1}{2} \log \left( \frac{\cosh (\eta + \ii T \cosh \eta )}{\cosh (\eta - \ii T\cosh \eta ) } \right) + \ii m \pi , \quad m \in \mathbb{Z} .
	\label{eq:alphainT1}
\ee 
Then, using the results of Theorem \ref{thm:PBC} we construct the integrable Floquet evolution operator with period $n=2$ in terms of the aTL generators, which reads
\be 
\begin{split}
	\mathbf{U}_{\rm F} (\{ 0 , \alpha \} ) & = \prod_{m=1}^{L/2} \check{\mathbf{R}}_{2m-1,2m} (-\alpha ) \prod_{m=1}^{L/2} \check{\mathbf{R}}_{2m,2m+1} (-\alpha ) \\
	& = \prod_{m=1}^{L/2} \exp \left( - \ii T \mathbf{e}_{2m-1,2m} \right) \prod_{m=1}^{L/2} \exp \left( - \ii T \mathbf{e}_{2m,2m+1} \right) ,
\end{split}
\label{eq:FloquetaTL}
\ee 
confirming one of the results of \cite{Kurlov_21}, 
which states that the Floquet evolution operator
\be 
	\mathbf{U}_{\rm F} (x) = \prod_m \big( \mathbbm{1} + x e_{2m-1} \big) \prod_m \big( \mathbbm{1} + x e_{2m} \big) 
	\label{eq:TLconjecture}
\ee
is integrable for $e_m$ taking any representation of the (affine) Temperley--Lieb algebra. It is easy to see that the Floquet evolution operator in \eqref{eq:FloquetaTL} is a special case of \eqref{eq:TLconjecture}.

We discuss the case of open boundary condition (which requires considering representations of the usual Temperley--Lieb algebra instead of the affine one) in Sec.~\ref{subsec:Floquetobc}.

\subsection{The staggered 6-vertex model}
\label{subsec:staggered}

From now on, let us focus on a specific representation of the aTL algebra, where the underlying integrability structure of the Floquet evolution operator \eqref{eq:FloquetaTL} with depth $n=2$ is the staggered 6-vertex model (with even system size $L$). In this case, we introduce the operators $\mathbf{e}_{m,n} $ acting on the Hilbert space $(\mathbb{C}^2 )^{\otimes L}$ (with periodic boundary condition $\mathbf{e}_{m,L+1} = \mathbf{e}_{m,1}  $)
\be 
    \mathbf{e}_{m,n} =\frac{q+q^{-1} }{4} - \frac{1}{2} \left( \sigma^x_m \sigma^x_n + \sigma^y_m \sigma^y_n + \frac{q+q^{-1} }{2} \sigma^z_m \sigma^z_n \right) - \frac{q - q^{-1}}{ 4 } (\sigma^z_m - \sigma^z_n ) ,
\ee 
with $\beta = q + q^{-1} = 2 \cosh \eta$, i.e. $q = \exp \, \eta$. One can easily check that the operators $\mathbf{e}_{m,m+1} $ provide a representation of the aTL algebra \eqref{eq:defaTL}. We also need the operator $\mathbf{G}$, the right translation operator on $(\mathbb{C}^2)^{\otimes L}$,
\be 
    \mathbf{G} = \prod_{m=1}^{L-1} \mathbf{P}_{m,m+1} , \quad \mathbf{P}_{a,b} = \frac{1}{2} \left( \vec{\sigma}_a \cdot \vec{\sigma}_b + \mathbbm{1} \right) . 
    \label{eq:defG}
\ee 
The quantum spin-$1/2$ XXZ model is expressed in terms of the aTL generators \cite{Pasquier_1990},
\be 
	\mathbf{H}_{\rm XXZ} = - \sum_{m=1}^L \mathbf{e}_{m,m+1} .
\ee

From the representation of the aTL algebra \eqref{eq:repTL}, the R matrix of the 6-vertex model reads~\cite{Baxter_1982, Pasquier_1990}
\be 
\begin{split}
    \check{\mathbf{R}}_{a,m} (u) & = \mathbbm{1}_{a,m} - \frac{\sinh u }{\sinh (u+\eta )} \mathbf{e}_{a,m} , \\
    \mathbf{R}_{a,m} (u ) & = \check{\mathbf{R}}_{a,m} (u) \mathbf{P}_{a,m} \\
    &= \frac{1}{\sinh(u+\eta)} \bp \sinh (u+\eta) & 0 & 0 & 0 \\ 0 & \sinh u & e^u \sinh \eta & 0 \\ 0& e^{-u} \sinh \eta & \sinh u & 0 \\ 0 & 0 & 0 & \sinh(u+\eta) \ep  ,
    \label{eq:defR6v}
\end{split}
\ee 
satisfying Yang--Baxter relations \eqref{eq:RLL1}, \eqref{eq:RRR1} and \eqref{eq:RLL2}.

For the purpose of studying the properties of the Floquet evolution operator \eqref{eq:FloquetaTL}, we concentrate on the staggered monodromy matrix \footnote{We set the inhomogeneities $u_1 = 0$, $u_2 = \alpha$ without losing generality due to the properties of the R matrix \eqref{eq:differenceform}},
\be 
    \mathbf{M}_a (u, \alpha) = \prod_{m=1}^{L/2} \mathbf{R}_{a,2m-1} (u) \mathbf{R}_{a,2m} (u - \alpha) = \bp \mathbf{A} (u,\alpha) & \mathbf{B} (u,\alpha) \\ \mathbf{C} (u,\alpha) & \mathbf{D} (u,\alpha) \ep_a .
    \label{eq:monodromy6vertex}
\ee 
The staggered transfer matrix is defined as
\be 
    \mathbf{T} (u , \alpha ) = \left[ \sinh (u + \eta) \sinh (u -\alpha + \eta) \right]^{L/2}  \Tr_a \mathbf{M}_a (u, \alpha) ,
    \label{eq:defTst}
\ee 
where the scalar prefactor is included for later convenience.
From the Yang--Baxter relation \eqref{eq:RLL1}, we have
\be 
    \mathbf{R}_{a,b} (u,v) \mathbf{M}_a (u) \mathbf{M}_b (v) = \mathbf{M}_b (v) \mathbf{M}_a (u)  \mathbf{R}_{a,b} (u,v) .
\ee 
Therefore, the staggered transfer matrices are in involution, i.e.
\be 
    \left[ \mathbf{T} (u , \alpha ) , \mathbf{T} (v, \alpha ) \right] = 0 , \quad \forall u,v \in \mathbb{C} .
\ee 
In fact, the Yang--Baxter equation \eqref{eq:RLL1} implies that the relations between the ``quantum operators'' [ i.e. $\mathbf{A}(u,\alpha)$, $\mathbf{B} (u,\alpha)$, $\mathbf{C} (u,\alpha)$ and $\mathbf{D} (u,\alpha)$] are the same as for the algebraic Bethe ansatz in the homogeneous case.

In order to find the eigenvalues of the Floquet evolution operator, we need to obtain the complete spectrum of the staggered transfer matrix first. In fact, the (unnormalised) eigenstates of the staggered transfer matrix are labelled by the set of quantum numbers~ $\{ u_m \}_{m=1}^M $, the so called Bethe roots or rapidities, as follows from the algebraic Bethe ansatz:
\be 
    | \{ u_m \}_{m=1}^M \rangle = \prod_{m=1}^M \mathbf{B} (u_m , \alpha ) | \Uparrow \rangle ,
\ee 
where the ferromagnetic pseudo-vacuum is $| \Uparrow \rangle = | \uparrow \uparrow \cdots \rangle $. In addition, the Bethe roots are zeros of the eigenvalues of the Baxter's Q operator,
\be 
    \mathbf{Q} (u ) | \{ u_m \}_{m=1}^M \rangle = \prod_{m=1}^M \sinh (u - u_m ) | \{ u_m \}_{m=1}^M \rangle ,
\ee 
where the Q operator satisfies the renowned TQ relation \cite{Baxter_1982} with the staggered transfer matrix,
\be 
    \mathbf{T} (u , \alpha) \mathbf{Q} (u ) = \mathbf{T}_0 (u+\eta) \mathbf{Q} (u - \eta ) + \mathbf{T}_0 (u) \mathbf{Q} (u + \eta ) ,
\ee 
where the scalar function $\mathbf{T}_0$ is
\be 
    \mathbf{T}_0 (u , \alpha) = \prod_{n=1}^L \sinh (u -\xi_n) = \left[ \sinh (u ) \sinh (u  - \alpha) \right]^{L/2} .
\ee 
The Q operator commutes with the staggered transfer matrix, and it can be constructed from the $\infty$-dimensional highest weight representation of the underlying quantum group $\mathcal{U}_q (\mathfrak{sl}_2)$ \cite{YM_spectrum}. The factorisation of the transfer matrices in the homogeneous limit in \cite{YM_spectrum} can be applied to the inhomogeneous cases too. We postponed the algebraic constructions of the Q operator to future work.

By taking the limit $u \to u_m$, we obtain a set of non-linear equations that the Bethe roots must satisfy, i.e. the Bethe equations \cite{Gaudin_2014},
\be 
    \left( \frac{\sinh (u_m +\eta ) \sinh (u_m - \alpha +\eta )}{\sinh u_m \sinh (u_m-\alpha)} \right)^{L/2} = \prod_{n\neq m}^M \frac{\sinh (u_m -u_n +\eta )}{\sinh (u_m -u_n - \eta )} .
\label{eq:betheeq1}
\ee 
Defining $\lambda_m = u_m - \frac{\alpha}{2} + \frac{\eta}{2}$, we have
\be 
    \left( \frac{\sinh (\lambda_m + \alpha/2 +\eta/2 ) \sinh (\lambda_m - \alpha/2 +\eta/2 )}{\sinh (\lambda_m +\alpha/2 -\eta /2) \sinh (\lambda_m- \alpha/2 - \eta/2 )} \right)^{L/2} = \prod_{n\neq m}^M \frac{\sinh (\lambda_m -\lambda_n +\eta )}{\sinh (\lambda_m - \lambda_n - \eta )} .
\label{eq:betheeq2}
\ee 
When $\alpha \in \mathbb{R}$, the values of $\lambda_m$ satisfy certain constraints, as shown in Appendix \ref{app:allowedvalue}.

When the value of $q$ is at root of unity ($q^n = 1$, $n \in \mathbf{Z}_+$), there exist solutions with Bethe roots belonging to the so called ``exact strings'', similar to the homogeneous case studied in \cite{Fabricius_2001_1, Fabricius_2001_2, Baxter_2002}. In this case, the algebraic Bethe ansatz (ABA) for the staggered 6-vertex model becomes subtle. However, construction of the Q operator and the eigenstates with ``exact strings'' is very similar to the homogeneous case, which has been demonstrated in \cite{YM_spectrum, YM_Onsager}. We  postpone the discussion on the details about the impact of the ``exact strings'' to future work. Meanwhile, the existence of the ``exact strings'' does not affect the eigenvalues of the transfer matrix, hence the results of the present work remain correct even at root of unity values of $q$.

The eigenvalues of the staggered transfer matrix for any eigenstate are expressed in terms of the Bethe roots, i.e.
\be 
\begin{split}
	\mathbf{T} (u , \alpha) | \{ u_m \}_{m=1}^M \rangle  & = \left[ \mathbf{T}_0 (u+\eta) \mathbf{Q} (u - \eta ) + \mathbf{T}_0 (u) \mathbf{Q} (u + \eta ) \right] \mathbf{Q}^{-1} (u) | \{ u_m \}_{m=1}^M \rangle \\ 
	& = \tau (u,\alpha , \{ u_m \}_{m=1}^M) | \{ u_m \}_{m=1}^M \rangle ,
\end{split}
\label{eq:eigenvalueTst}
\ee 
\be 
\begin{split}
	\tau (u,\alpha , & \{ u_m \}_{m=1}^M) = \frac{1}{\prod_{m=1}^M \sinh (u-u_m) } \Big( [ \sinh (u+\eta) \sinh (u-\alpha +\eta)]^{L/2}  \\
	& \prod_{m=1}^M \sinh (u-u_m - \eta) +[ \sinh (u) \sinh (u-\alpha )]^{L/2} \prod_{m=1}^M \sinh (u-u_m + \eta) \Big) .
\end{split}
\ee

Before we move to the spectrum of the Floquet evolution operator, we would like to take a look at the local conserved charges of the staggered 6-vertex model. In order to construct the conserved charges with a local density, we define a two-row transfer matrix $\tilde{\mathbf{T}} (u,\alpha)$,
\be 
    \tilde{\mathbf{T}} (u,\alpha) = \mathbf{T} (u,\alpha) \mathbf{T} (u+\alpha,\alpha) .
\ee 
The physical momentum is defined through
\be 
    \frac{\tilde{\mathbf{T}} (0,\alpha)}{\left[ \sinh^2 \eta \sinh (\eta - \alpha) \sinh (\eta + \alpha) \right]^{L/2} } = \mathbf{G}^{-2} = \exp (-\ii \mathbf{p} ) .
\label{eq:defmomentum}
\ee 

For a Bethe state $| \{ u_m \}_{m=1}^M \rangle$ one has
\be 
    \mathbf{G}^{-2} | \{ u_m \}_{m=1}^M \rangle = \prod_{m=1}^{M} \frac{\sinh (u_m + \eta) \sinh (u_m -\alpha + \eta)}{\sinh (u_m) \sinh(u_m - \alpha)}   | \{ u_m \}_{m=1}^M  \rangle .
    \label{eq:eigenGminus2}
\ee 
Thus the momentum eigenvalue of the eigenstate $|\{ u_m \}_{m=1}^M \rangle $ becomes
\be 
    \ii \sum_{m=1}^M \log \frac{\sinh (u_m + \eta) \sinh (u_m -\alpha + \eta)}{\sinh (u_m ) \sinh(u_m - \alpha)} .
\ee 

We obtain the staggered Hamiltonian by taking the logarithmic derivative of the two-row transfer matrix
\be 
    \mathbf{H}_{\rm st} = \left. \frac{\sinh \eta}{2} \partial_u \log \tilde{\mathbf{T}} (u , \alpha) \right|_{u=0} .
\ee 
The staggered Hamiltonian can be nicely expressed in terms of the TL generators, i.e.
\be 
\begin{split}
    \mathbf{H}_{\rm st} & = - \sum_{m=1}^L \left(  \mathbf{e}_m + \frac{\sinh^2 \alpha \cosh \eta}{\cosh(2\eta) - \cosh (2 \alpha)} \{ \mathbf{e}_m , \mathbf{e}_{m+1} \} \right. \\ 
    & \left. - (-1)^m \frac{\sinh \alpha \cosh \alpha \sinh \eta}{{\cosh(2\eta) - \cosh (2 \alpha)}} [\mathbf{e}_m , \mathbf{e}_{m+1} ] - \frac{\cosh \eta (\cosh (2\eta) - \cosh^2 \alpha)}{\cosh(2\eta) - \cosh (2 \alpha)} \right) ,
\end{split}
\ee 
where $\mathbf{e}_m := \mathbf{e}_{m,m+1}$. This staggered Hamiltonian coincides with the first non-trivial conserved quantity found in \cite{Kurlov_21} [see Eq.~(\ref{Q1})] with periodic boundary condition. It also precisely recovers the corresponding expression in \cite{Frahm_2014} up to a constant. As discussed in Sec.~\ref{sec:IntFlo}, this Hamiltonian is nothing else that the lattice limit of the black hole sigma model (non-compact CFT), which we shall not repeat again.

\subsection{Floquet time evolution operator: open boundary condition}
\label{subsec:Floquetobc}

When we consider the case with open boundary condition, the aTL algebra reduces the TL algebra \footnote{In our case, the TL algebra can be easily obtained by neglecting the generators $e_L$ and $g$ in the aTL algebra.}, whose representation on $(\mathbb{C}^2)^{\otimes L}$ is the centralizer of the representation on the same space of the quantum group $\mathcal{U}_q (\mathfrak{sl}_2)$ \cite{Pasquier_1990} due to the quantum Schur--Weyl duality. Subsequently, this leads to the renowned quantum group invariant spin-$1/2$ XXZ model that can be studied through the representation theory of $\mathcal{U}_q (\mathfrak{sl}_2)$ \cite{Pasquier_1990, Gainutdinov_2016}. In the context of the Floquet evolution operator with open boundary condition, we need to study the inhomogeneous transfer matrix of the staggered 6-vertex model with open boundary condition.

Since the R matrix satisfies the regularity \eqref{eq:regularity}, inversion \eqref{eq:inversion} and crossing symmetries \eqref{eq:crossing1} and \eqref{eq:crossing2}, the boundary Yang--Baxter relations also hold for the 6-vertex R matrix. We take a slight detour to present a special choice of the K matrices that correspond to the $U_q (\mathfrak{sl}_2)$-invariant Floquet evolution operator \cite{Pasquier_1990}.

In the case of the R matrix in \eqref{eq:defR6v}, for the crossing symmetries \eqref{eq:crossing1} and \eqref{eq:crossing2}, we have
\be 
    \mathbf{v}_a = \bp 0 & e^{-\eta /2} \\  e^{\eta /2} & 0 \ep_a , \quad \mathbf{w}_a = \mathbf{v}_a^t \mathbf{v}_a = \bp e^{\eta} & 0 \\ 0 & e^{-\eta } \ep_a .
\ee

Let us consider the $U_q (\mathfrak{sl}_2)$-invariant case. Then, the boundary K matrices that satisfy the boundary Yang--Baxter relations \eqref{eq:boundaryYBE1} and \eqref{eq:boundaryYBE2} are independent of the spectral parameter $u$, i.e.
\be 
    \mathbf{K}_{+,a} (u) = \mathbf{K}_{+,a} = \mathbf{w}_a , \quad  \mathbf{K}_{-,a} (u) = \mathbf{K}_{-,a} = \mathbbm{1}_a .
\ee 
Moreover,
\be 
    \tilde{\mathbf{K}}_{+,1} (-\alpha/2) = \Tr_a \left( \check{\mathbf{R}}_{a,1} (-\alpha) \mathbf{K}_{+,a} \right) = \frac{\sinh (2 \eta-\alpha) }{ \sinh^2 (\eta-\alpha) } .
\ee 

Therefore, the $U_q (\mathfrak{sl}_2)$-invariant Floquet evolution operator (with open boundaries) becomes
\be 
    \mathbf{U}_{\rm F}^{\rm o} (\alpha) = \sinh^{L-2} (\eta-\alpha) \sinh (2 \eta-\alpha) \prod_{m=1}^{L/2} \check{\mathbf{R}}_{2m-1,2m} (-\alpha) \prod_{m=1}^{L/2-1} \check{\mathbf{R}}_{2m,2m+1} (-\alpha) .
\ee 
The $U_q (\mathfrak{sl}_2)$-invariant Floquet evolution operator coincides with the Floquet evolution operator in \cite{Kurlov_21} when using the representation in \eqref{eq:repTL}. 
This result demonstrates a special case of the Floquet integrability associated with the TL algebra, conjectured in~\cite{Kurlov_21}. 
The corresponding staggered 6-vertex model with open boundary condition has been studied in \cite{Frahm_2022}, where the low-energy spectrum are discussed. The charges with local density found in \cite{Kurlov_21} for the TL representation in Eq.~(\ref{eq:repTL}) can be obtained by taking the logarithmic derivatives of the staggered transfer matrix with open boundary condition \cite{Frahm_2022}. 

\subsection{Spectrum of the Floquet evolution operator}
\label{subsec:FloquetH}

We now return our focus to the Floquet evolution operator with periodic boundary condition.
Using the properties~(\ref{eq:defaTL}) of the aTL generators, the Floquet evolution operator $\mathbf{U}_{\rm F} ( T ) $ can be written as the products of the exponentials of the aTL generators \eqref{eq:FloquetaTL},
\be 
	\mathbf{U}_{\rm F} ( T ) : = \mathbf{U}_{\rm F} ( \{0 , \alpha \} ) =  \prod_{m=1}^{L/2} \exp \left( - \ii T \mathbf{e}_{2m-1,2m} \right) \prod_{m=1}^{L/2} \exp \left( - \ii T \mathbf{e}_{2m,2m+1} \right) .
\ee
This Floquet evolution operator $\mathbf{U}_{\rm F} ( T ) $ generates the Floquet stroboscopic time evolution of the following protocol with Floquet period $2T$,
\be 
	\mathbf{H} (t) = \begin{cases}
      \mathbf{H}_1 = \sum_{m=1}^{L/2} \mathbf{e}_{2m-1,2m} & 0 \leq t < T , \\
      \mathbf{H}_2 = \sum_{m=1}^{L/2} \mathbf{e}_{2m,2m+1}  &  T \leq t < 2T ,
    \end{cases}  \quad \mathbf{H}(t+2T) = \mathbf{H}(t) .
\ee
It is physical to consider the Floquet period $2T \in \mathbb{R}_{+}$.

The relation between the spectral parameter $\alpha$ and the Floquet period $2T$ is given in \eqref{eq:alphainT1}. The Floquet Hamiltonian $\mathbf{H}_{\rm F}$ is the effective Hamiltonian of the Floquet evolution operator, 
\be 
\begin{split}
    \mathbf{U}_{\rm F} ( T ) & = \prod_{m=1}^{L/2} \left( \mathbbm{1} + \frac{\exp(\ii \beta T)-1}{\beta} \mathbf{e}_{2m-1,2m} \right) \prod_{m=1}^{L/2} \left( \mathbbm{1} + \frac{\exp(\ii \beta T)-1}{\beta} \mathbf{e}_{2m,2m+1} \right) \\ 
    & = \prod_{m=1}^{L/2} \check{\mathbf{R}}_{2m-1,2m} (-\alpha) \prod_{m=1}^{L/2} \check{\mathbf{R}}_{2m,2m+1} (-\alpha)\\
    & = \exp \left( -2 \ii \mathbf{H}_{\rm F} (\alpha) T \right) ,
\end{split}
\ee 
with $\alpha = \alpha(T)$ as in \eqref{eq:alphainT1}. The extra factor of 2 is due to the fact that the total Floquet period is $T_{\rm total} = 2T$.

The relation between the Floquet evolution operator and the staggered 6-vertex transfer matrix is
\be 
    \mathbf{U}_{\rm F} (T) = \frac{1}{\left[ \sinh \eta \sinh (\eta - \alpha) \right]^{L/2} } \mathbf{T}^2 (0, \{ 0, \alpha\} ) \mathbf{G}^2 ,
\ee 
which allows us to obtain the eigenvalues of the Floquet evolution operator in terms of the Bethe roots using \eqref{eq:eigenvalueTst} and \eqref{eq:eigenGminus2},
\be 
\begin{split} 
    \mathbf{U}_{\rm F} (T) | \{ u_m \}_{m=1}^M \rangle & = \prod_{m=1}^{M} \frac{\sinh (u_m +\eta ) \sinh (u_m -\alpha )}{\sinh (u_m  ) \sinh(u_m - \alpha + \eta)}   | \{ u_m \}_{m=1}^M  \rangle \\
    & = \prod_{m=1}^{M} \frac{\sinh (\lambda_m +\alpha/2 +\eta/2 ) \sinh (\lambda_m -\alpha/2 -\eta/2  )}{\sinh (\lambda_m + \alpha/2 - \eta/2 ) \sinh(\lambda_m - \alpha/2 + \eta/2 )}   | \{ u_m \}_{m=1}^M  \rangle .
\end{split}
\label{eq:eigenvalueUF}
\ee 

In the meantime, the Floquet Hamiltonian is
\be 
\begin{split} 
    \mathbf{H}_{\rm F} (T) & = \frac{\ii}{2 T} \log \mathbf{U}_{\rm F} ( T )\\ 
    & = \frac{\ii}{T} \log \frac{\mathbf{T}(0, \{ 0, \alpha\} )}{\left[ \sinh \eta \sinh (\eta - \alpha) \right]^{L/2} } - \frac{1}{2 T} \mathbf{p} ,
\end{split}
\ee 
where the momentum is defined in Eq.~\eqref{eq:defmomentum},
\be 
    \mathbf{p} = - \ii \log \mathbf{G}^2 .
\ee 
Therefore, the eigenvalues of the Floquet Hamiltonian become
\be 
\begin{split}
\mathbf{H}_{\rm F} (T) & | \{ u_m \}_{m=1}^M \rangle = \frac{\ii}{2 T} \sum_{m=1}^{M} \log \frac{\sinh (u_m +\eta ) \sinh (u_m -\alpha )}{\sinh (u_m  ) \sinh(u_m - \alpha + \eta)}   | \{ u_m \}_{m=1}^M  \rangle \\
    & = \frac{\ii}{2 T} \sum_{m=1}^{M} \log \frac{\sinh (\lambda_m +\alpha/2 +\eta/2 ) \sinh (\lambda_m -\alpha/2 -\eta/2  )}{\sinh (\lambda_m + \alpha/2 - \eta/2 ) \sinh(\lambda_m - \alpha/2 + \eta/2 )}   | \{ u_m \}_{m=1}^M  \rangle .
\end{split}
\label{eq:eigenvalueHF}
\ee
In general, the Floquet Hamiltonian cannot be expressed in terms of local operators.

We now move on to study under what circumstance the eigenvalues of the Floquet Hamiltonian are real, or equivalently, the eigenvalues of the Floquet evolution operator locate on the unit circle.

We start with the easy-axis and isotropic regime ($\eta \in \mathbb{R}$ or $|\Delta| \geq 1$). In this case, the inhomogeneity parameter $\alpha \in \ii \mathbb{R}$ with $T \in \mathbb{R}_+$ by solving \eqref{eq:alphainT1}. The isotropic limit $|\Delta| = 1$ is discussed in Appendix \ref{app:isotropic}.

Using the properties of the aTL generators \eqref{eq:defaTL}, we obtain
\be 
    \check{\mathbf{R}}_{a,b} (\alpha) \check{\mathbf{R}}_{a,b} (-\alpha) = \check{\mathbf{R}}_{a,b} (-\alpha) \check{\mathbf{R}}_{a,b} (\alpha) = \mathbbm{1}.
    \label{eq:Rcheckinv}
\ee 
Moreover, since in the easy-axis and isotropic regime one has $\alpha \in \mathbb{R}$, and  
\be 
    \mathbf{e}^\dag_{a,b} = \mathbf{e}_{a,b} , \quad \eta \in \mathbb{R} ,
\ee 
we immediately obtain
\be 
    \check{\mathbf{R}}_{a,b}^\dag (\alpha) = \check{\mathbf{R}}_{a,b} (-\alpha) = \check{\mathbf{R}}_{a,b}^{-1} (\alpha) .
\ee 
This implies that the Floquet evolution operator $\mathbf{U}_{\rm F} (\alpha)$ is unitary, i.e.
\be 
   \mathbf{U}_{\rm F}^\dag (T) \mathbf{U}_{\rm F} (T) = \mathbf{U}_{\rm F} (T) \mathbf{U}_{\rm F}^\dag (T) = \mathbbm{1} , \quad \eta \in \mathbb{R} .
\ee 
Equivalently, in this case the Floquet Hamiltonian is Hermitian. i.e. has a real spectrum. Therefore, in the easy-axis and isotropic regime, the Floquet evolution operator generates a unitary time evolution, or a unitary quantum circuit.

It turns out that the situation is different for the easy-plane regime, i.e. $\eta \in \ii \mathbb{R}/\{0\}$ ($|\Delta| < 1$), as the Floquet evolution operator $\mathbf{U}_{\rm F} (T)$ is no longer unitary. Moreover, the eigenvalues of $\mathbf{H}_{\rm F} (T)$ acquire a non-zero imagianry part at a certain value of $T$, leading to a (dynamical) phase transition between the anti-unitary symmetric and anti-unitary symmetry broken phases, which are highlighted as follows.

In the easy-plane regime ($\eta \in \ii \mathbb{R} / \{0 \}$ or $|\Delta|<1$), the Floquet evolution operator $\mathbf{U}_{\rm F} (T)$ becomes non-unitary. The reason is that 
\be 
    \mathbf{e}^\dag_{a,b} \neq \mathbf{e}_{a,b} , \quad \eta \in \ii \mathbb{R} / \{0 \} ,
\ee 
and therefore
\be 
    \mathbf{U}_{\rm F}^\dag (T ) \neq \mathbf{U}_{\rm F} (-T) = \mathbf{U}_{\rm F}^{-1} ( T ) , \quad \eta \in \ii \mathbb{R} / \{0 \} .
\ee 

Since the Floquet evolution operator is not unitary, we do not expect the eigenvalues of the Floquet evolution operator to locate along the unit circle. Thus, let us take a closer look at the eigenvalues of the Floquet evolution operator in terms of Bethe roots~\eqref{eq:eigenvalueUF}. Using the results of Appendix~\ref{app:allowedvalue}, one can show that $\lambda_n$ belongs to one of the following three categories:
\begin{itemize}
    \item $\lambda_n \in \mathbb{R} $,
    \item $\mathrm{Im} \,\,\lambda_n = \frac{\pi}{2}$ , i.e. single Bethe root with imaginery part being $\pi/2$ (also known as odd-parity Bethe root in the literatures),
    \item $\{ \lambda_m \}_{m=1}^M$ contains both $\lambda_n$ and $\bar{\lambda}_n$ , i.e. bound states with string centre at the real axis (which implies that the Bethe roots of the bound state are in complex conjugate pairs),
\end{itemize}
for $\eta \in \ii \mathbb{R} / \{0 \}$ and $\alpha \in \mathbb{R}$.
This implies that the eigenvalues of the Floquet evolution operator are uni-modular, i.e.
\be 
    \left| \prod_{m=1}^{M} \frac{\sinh (\lambda_m +\alpha/2 +\eta/2 ) \sinh (\lambda_m -\alpha/2 -\eta/2  )}{\sinh (\lambda_m + \alpha/2 - \eta/2 ) \sinh(\lambda_m - \alpha/2 + \eta/2 )}  \right| = 1 , \qquad \eta \in \ii \mathbb{R} / \{0 \}, \; \alpha \in \mathbb{R}. 
    \label{eq:unimodulareigenvalues}
\ee 
Therefore, the eigenvalues of the Floquet Hamiltonian 
\be 
    \mathbf{H}_{\rm F} (\alpha ) = \frac{\ii \log \mathbf{U}_{\rm F} (\alpha) }{2 T}
\ee 
are real with $\eta \in \ii \mathbb{R} / \{0 \}$ and $\alpha \in \mathbb{R}$. In this case, the Floquet Hamiltonian can be transformed into a Hermitian Hamiltonian via a similarity transformation~$\mathbf{x}$,
\be 
    \mathbf{H}_{\rm F}^\prime (\alpha) = \mathbf{x} \mathbf{H}_{\rm F} (\alpha) \mathbf{x}^{-1} , \quad \left( \mathbf{H}_{\rm F}^\prime (\alpha) \right)^\dag = \mathbf{H}_{\rm F}^\prime (\alpha) .
\ee 

The Hermiticity condition of~$\mathbf{H}_{\rm F}^\prime (\alpha)$ can be rewritten as
\be 
    \mathbf{X} \mathbf{H}_{\rm F} (\alpha ) = \mathbf{H}_{\rm F}^\dag (\alpha)\mathbf{X} , \quad \mathbf{X} = \mathbf{x}^\dag \mathbf{x} ,
\ee 
where the Hermitian operator $\mathbf{X}$ is known as the Dyson map \cite{Dyson_1956}.

For any eigenstate of $\mathbf{H}_{\rm F} (\alpha)$ $|\psi \rangle$ with eigenvalue $E$, we have
\be 
    \mathbf{x} \mathbf{H}_{\rm F} (\alpha) | \psi \rangle = E  \mathbf{x} | \psi \rangle = \mathbf{H}_{\rm F}^\prime (\alpha) \mathbf{x} | \psi \rangle .
\ee 
Hence, $\mathbf{H}_{\rm F} (\alpha)$ has the same real spectrum as the Hermitian operator $\mathbf{H}_{\rm F}^\prime (\alpha)$. Their eigenstates are related by the operator $\mathbf{x} $. When Hermitian operator $\mathbf{X}$ is positive and invertible, the Floquet Hamiltonian is pseudo/quasi-Hermitian \cite{Fring_2022}. From the Dyson map, we are able to define proper inner product for the left and right eigenstates of the Floquet Hamiltonian. A detailed discussion on this can be found in \cite{Bender_2007}.

The exact form of the Dyson map $\mathbf{X}$ can be constructed from the (left and right) eigenstates of the Floquet Hamiltonian $\mathbf{H}_{\rm F}$. Moreover, when the Floquet Hamiltonian contains Jordan blocks after block-diagonalising the matrix, the Hermitian operator $\mathbf{X}$ contains further complications, as shown in \cite{Korff_2007}. We shall leave the investigation of the exact form of the Dyson map $\mathbf{X}$ to future work.

\begin{figure}
    \centering
    \includegraphics[width=.85\linewidth]{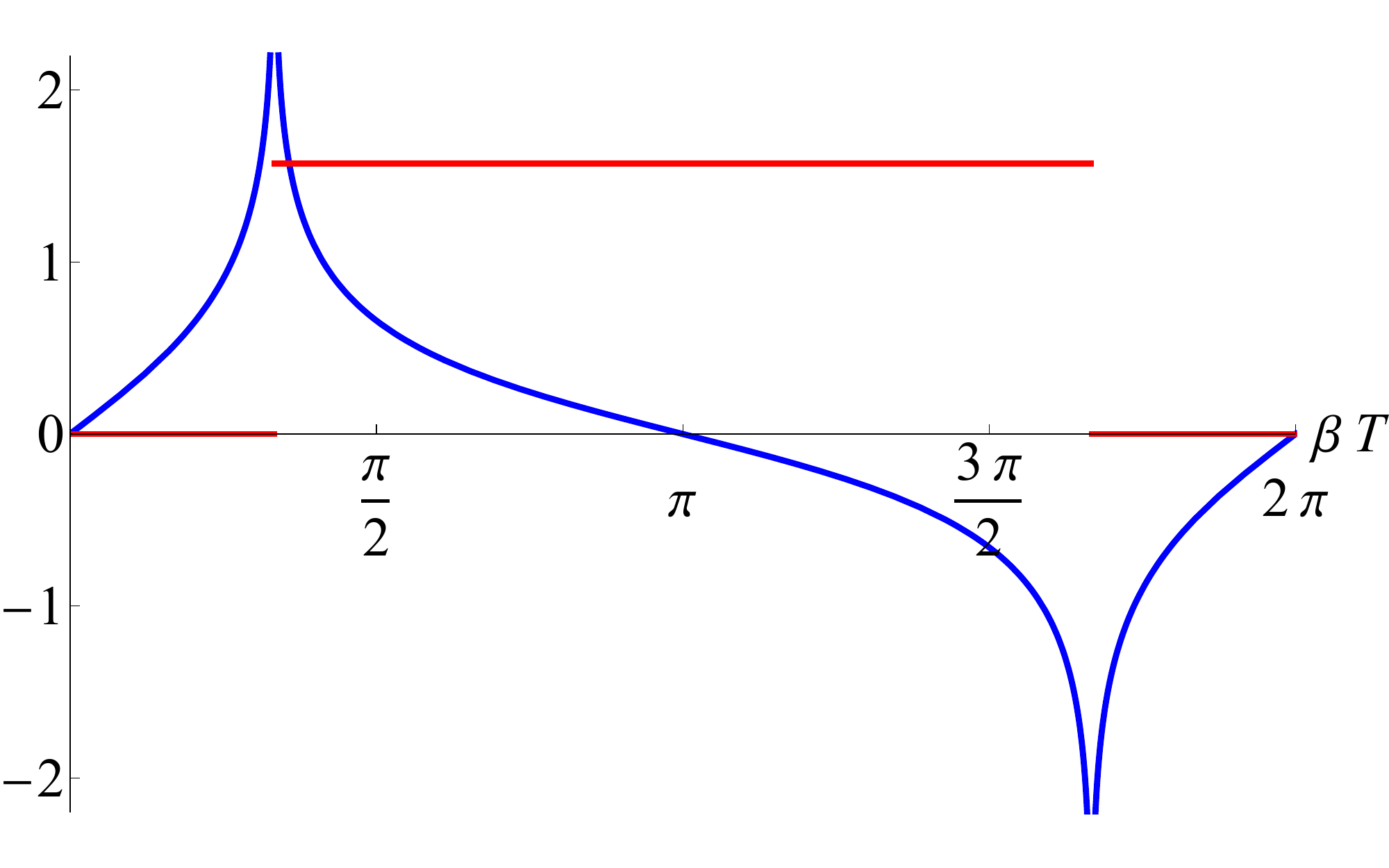}
    \caption{$\alpha$ as a function of $\beta T$ with $\eta = \ii \pi /3$. The blue and red curves depict $\mathrm{Re} \, \alpha $ and $\mathrm{Im} \, \alpha$ changing with $\beta T$, respectively. Note that $\mathrm{Re} \, \alpha $ diverges at $\beta T = \pi \pm 2 \ii \eta $. }
    \label{fig:alphainbetaT}
\end{figure}

Above all, from \eqref{eq:alphainT1} that relates the inhomogeneity $\alpha$ and the Floquet period $2T$ and Fig. \ref{fig:alphainbetaT}, we observe that the imaginary part of $\alpha$ experience a $\frac{\pi}{2}$ jump, which implies a sudden change with respect to the spectra of the Floquet evolution operators.

In the case $ \eta \in ( 0 , \frac{\pi}{2} ] \ii $, we define two separate regimes
\be 
    \mathrm{I} : \,\, \alpha \in \mathbb{R} \, \Rightarrow \, T \in \left[0, \frac{\pi + 2 \ii \eta }{2 \cosh \eta} \right] \cup \left[ \frac{\pi - 2 \ii \eta }{2 \cosh \eta}, \frac{\pi}{\cosh \eta} \right) ,
    \label{eq:phaseIcond1}
\ee 
and 
\be 
    \mathrm{II} : \,\, \mathrm{Im} \, \alpha = \frac{\pi}{2} \, \Rightarrow \, T \in \left( \frac{\pi + 2 \ii \eta }{2 \cosh \eta} , \frac{\pi - 2 \ii \eta }{2 \cosh \eta} \right) ,
    \label{eq:phaseIIcond1}
\ee 
where the period $T$ is defined modulo $\frac{\pi}{\cosh \eta}$. In Regime I we have $\mathrm{Im} \, \alpha = 0$, while in Regime II one has $\mathrm{Im} \, \alpha = \frac{\pi}{2}$.

Similarly, for $ \eta \in ( \frac{\pi}{2} , \pi ) \ii$, we define the two regimes accordingly
\be 
    \mathrm{I} : \,\, \alpha \in \mathbb{R} \, \Rightarrow \, T \in \left[0, \frac{-\pi - 2 \ii \eta }{2 \cosh \eta} \right] \cup \left[ \frac{3 \pi + 2 \ii \eta }{2 \cosh \eta}, \frac{\pi}{\cosh \eta} \right) ,
    \label{eq:phaseIcond2}
\ee 
and 
\be 
    \mathrm{II} : \,\, \mathrm{Im} \, \alpha = \frac{\pi}{2} \, \Rightarrow \, T \in \left( \frac{-\pi - 2 \ii \eta }{2 \cosh \eta} , \frac{3 \pi + 2 \ii \eta }{2 \cosh \eta} \right) .
    \label{eq:phaseIIcond2}
\ee 

From \eqref{eq:unimodulareigenvalues}, we know that in Regime I the eigenvalues of the Floquet evolution operator are uni-modular (the eigenvalues of the Floquet Hamiltonian are real). However, we cannot generalise the argument to the Regime II. Instead, we numerically diagonalise the Floquet evolution operators with \emph{finite system sizes}, as described in Figs. \ref{fig:notrouspectra} and \ref{fig:rouspectra}. Interestingly, we find out that the eigenvalues of the Floquet evolution operator are no longer uni-modular in Regime II. When taking into account the Floquet Hamiltonian, the results above are equivalent to the fact that the eigenvalues of the Floquet Hamiltonian are complex in Regime II. Moreover, the same behaviour occurs with any system size $L \, \mathrm{mod} \, 2 = 0$. We shall expand on this point in Sec. \ref{subsec:phasetransition}.

\section{Dynamical anti-unitary symmetry breaking in the easy-plane regime}
\label{sec:transition}

In this section we concentrate on the easy-plane regime, i.e. $\eta \in \ii \mathbb{R} / \{ 0 \}$ ($|\Delta|<1$).

\begin{figure}[t]
\centering
\begin{subfigure}{.33\linewidth}
\centering
\includegraphics[width=.85\linewidth]{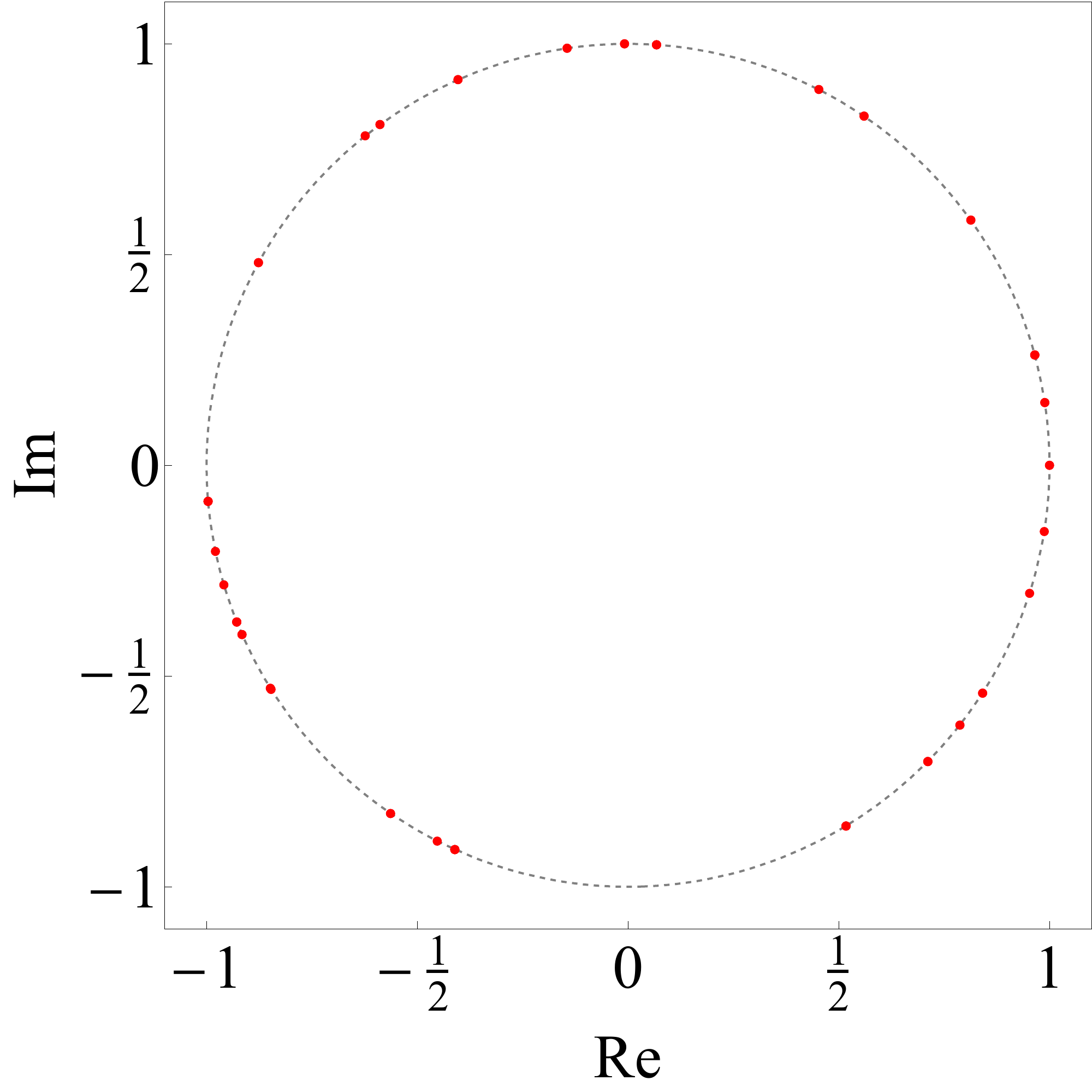}
\end{subfigure}%
\begin{subfigure}{.33\linewidth}
\centering
\includegraphics[width=.85\linewidth]{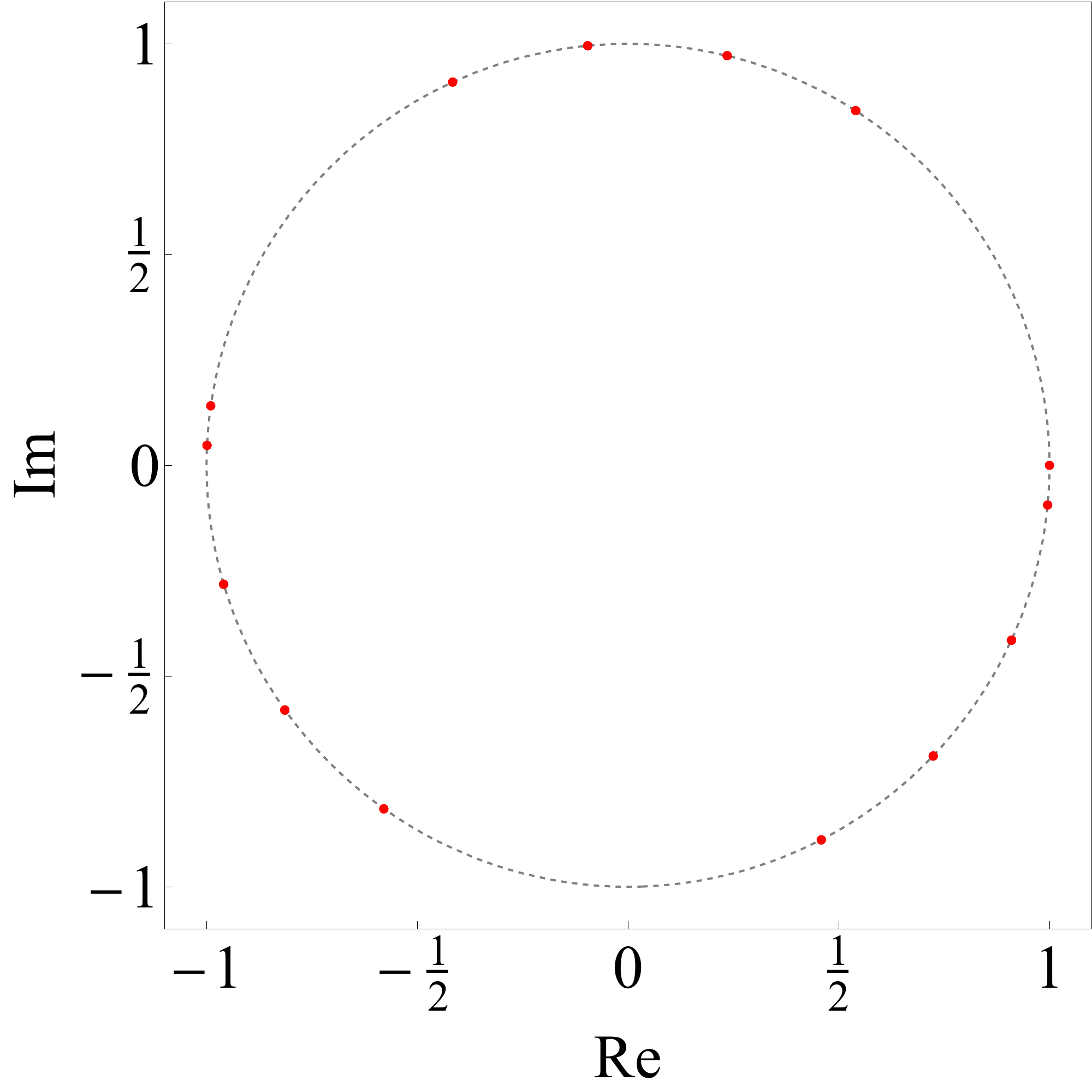}
\end{subfigure}%
\begin{subfigure}{.33\linewidth}
\centering
\includegraphics[width=.85\linewidth]{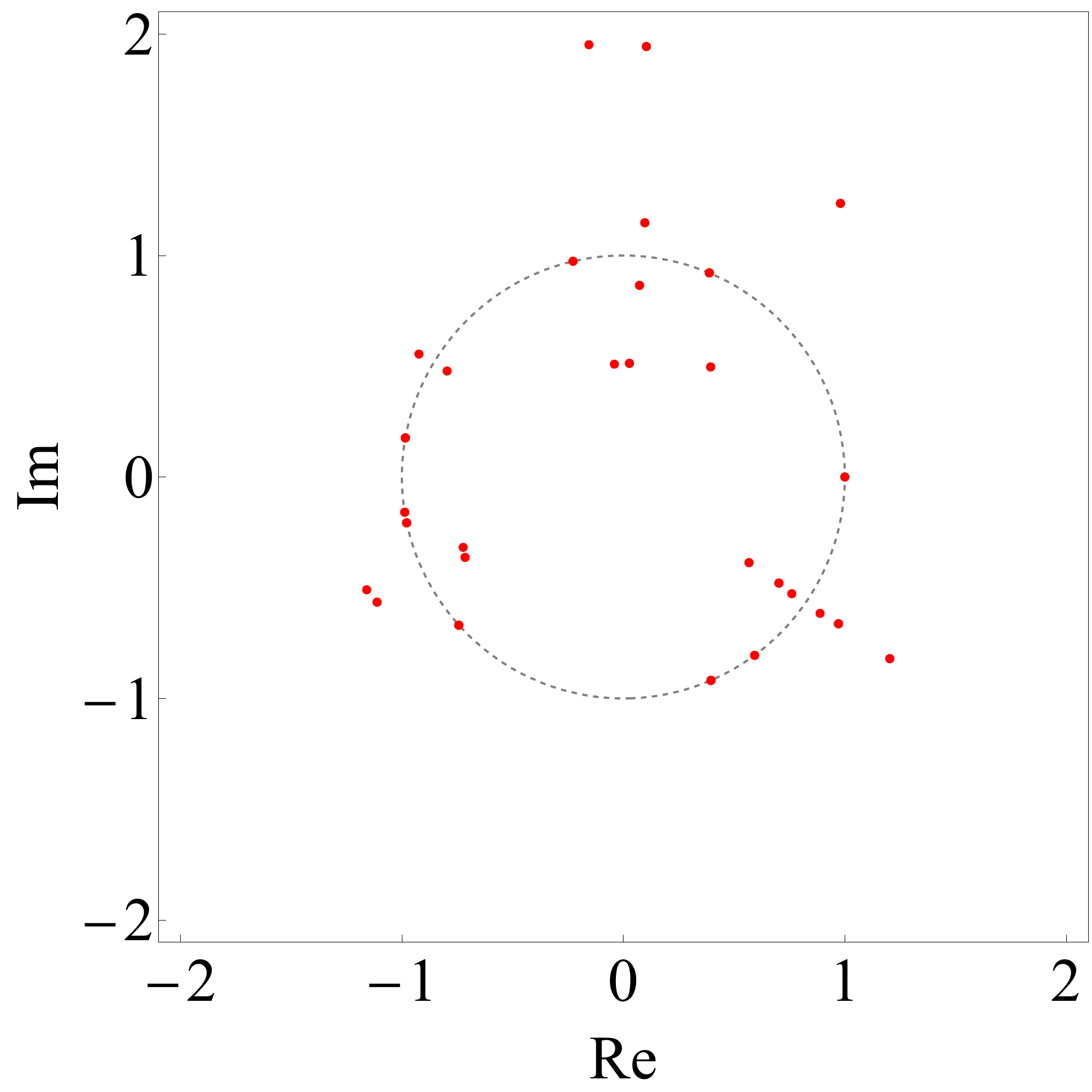}
\end{subfigure}
\caption{Spectrum of the Floquet evolution operator $\mathbf{U}_{\rm F} (T)$ with $\eta = \ii /2$ (not at root of unity) and system size $L=6$. $T = \frac{\pi + 2 \ii \eta - 0.1}{2 \cosh \eta}$, $\frac{\pi + 2 \ii \eta}{2 \cosh \eta}$ (phase transition), $\frac{\pi + 2 \ii \eta +0.1}{2 \cosh \eta}$ for the panels from left to right respectively.}
\label{fig:notrouspectra}
\end{figure}
\begin{figure}[t]
\centering
\begin{subfigure}{.33\linewidth}
\centering
\includegraphics[width=.85\linewidth]{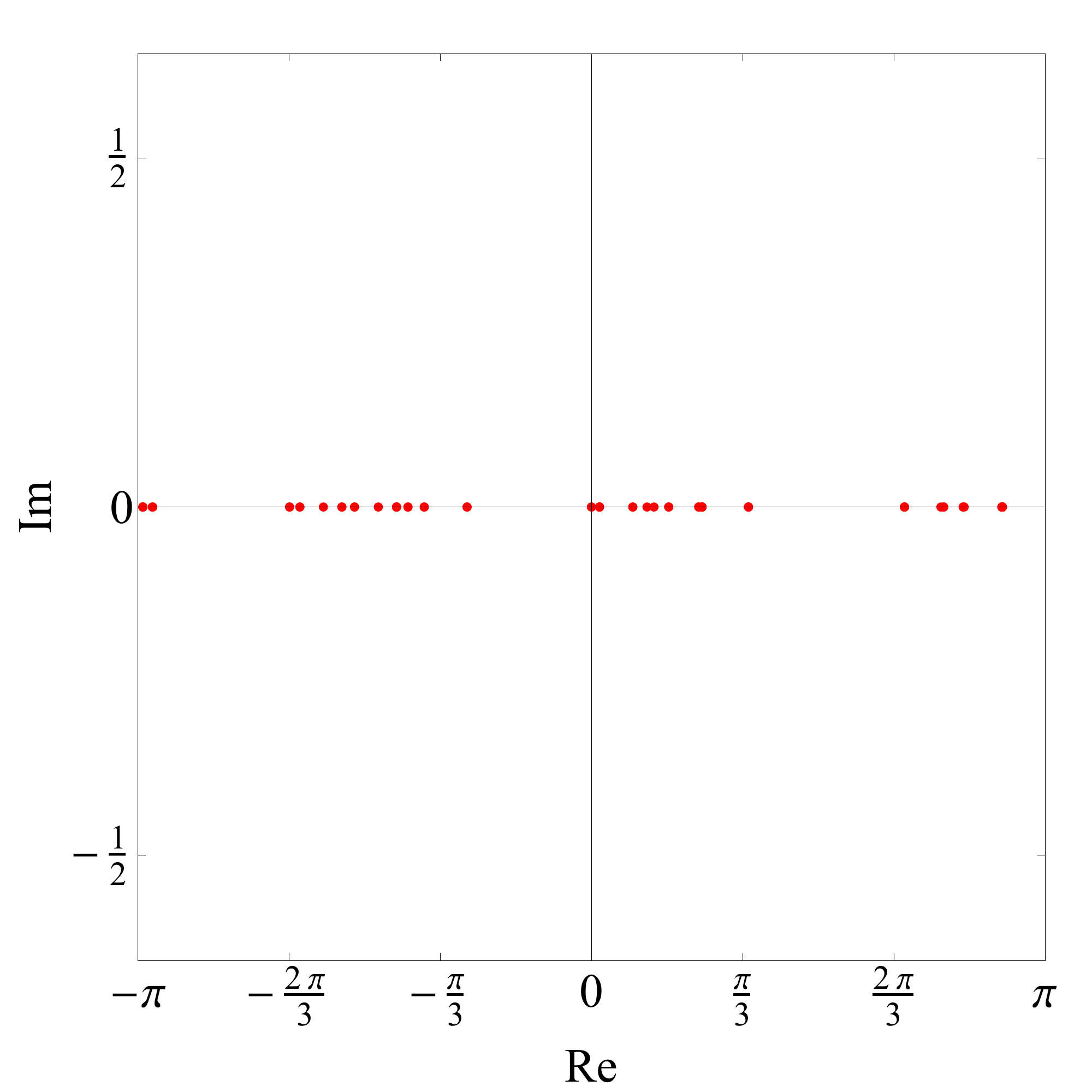}
\end{subfigure}%
\begin{subfigure}{.33\linewidth}
\centering
\includegraphics[width=.85\linewidth]{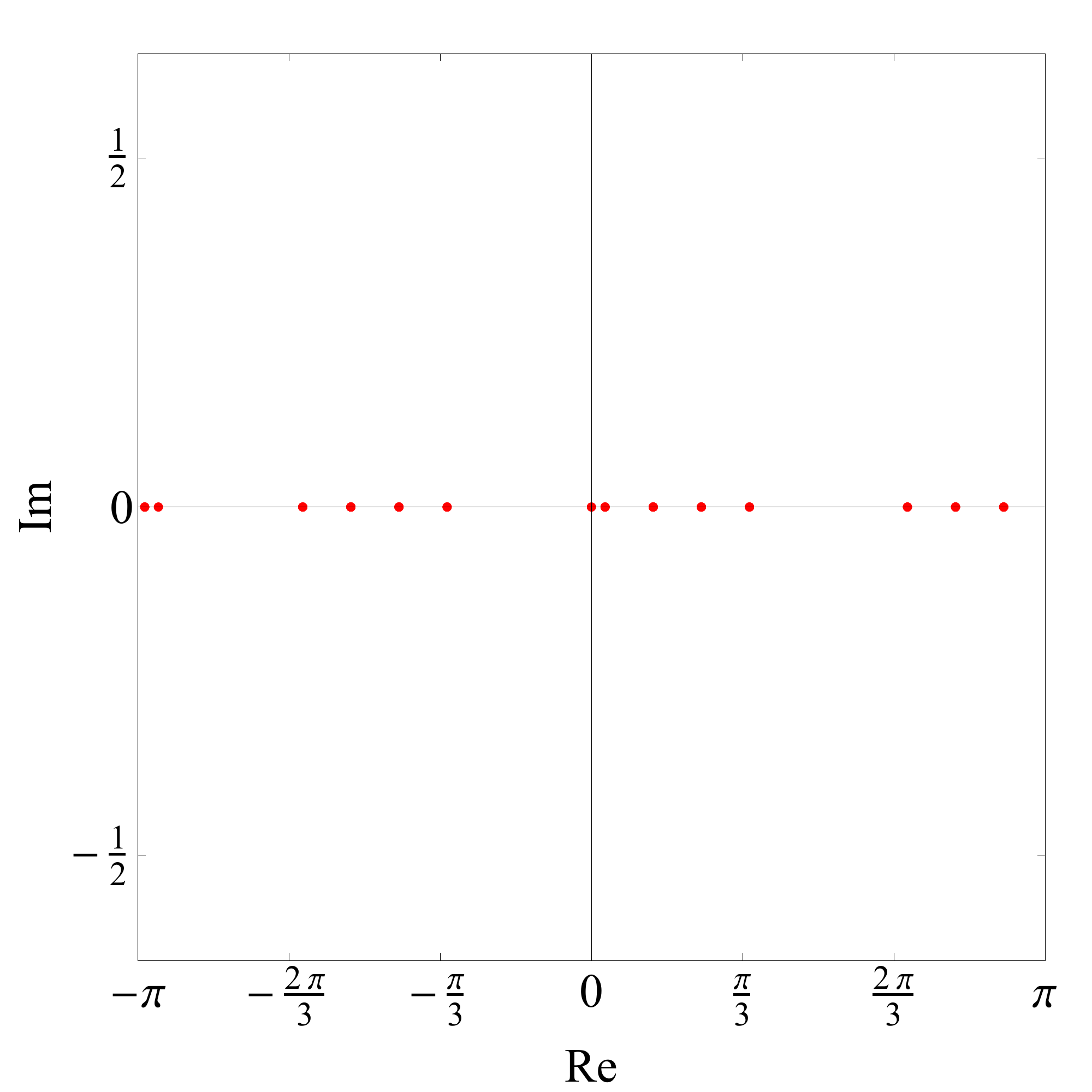}
\end{subfigure}%
\begin{subfigure}{.33\linewidth}
\centering
\includegraphics[width=.85\linewidth]{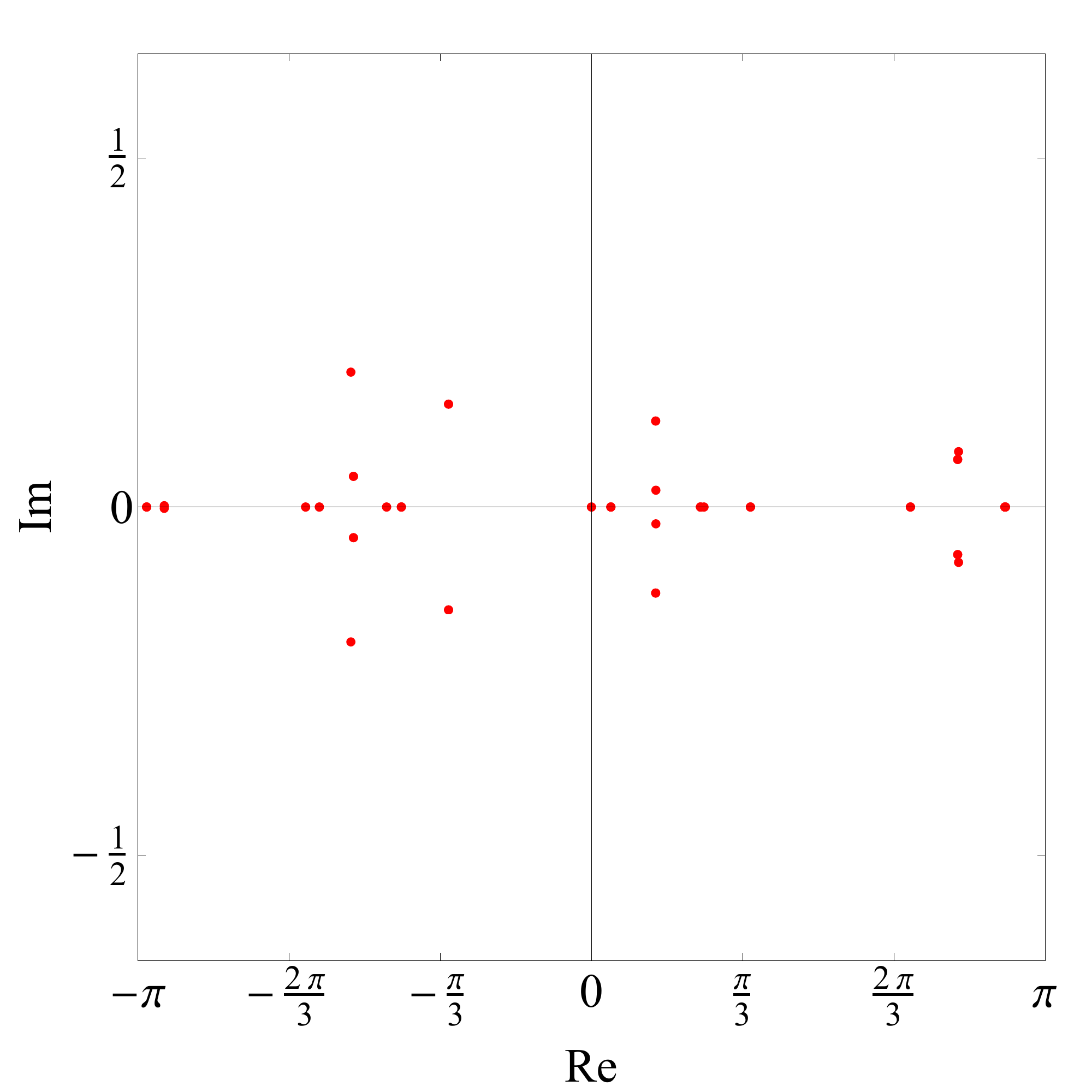}
\end{subfigure}
\caption{Spectra of the Floquet Hamiltonian $\mathbf{H}_{\rm F} (T)$ with $\eta = \ii /2$ (not at root of unity) and system size $L=6$. $T = \frac{\pi + 2 \ii \eta - 0.1}{2 \cosh \eta}$, $\frac{\pi + 2 \ii \eta}{2 \cosh \eta}$ (phase transition), $\frac{\pi + 2 \ii \eta +0.1}{2 \cosh \eta}$ for the figures from left to right respectively.}
\label{fig:notrouspectraHF}
\end{figure}

\subsection{Anti-unitary symmetry of the Floquet Hamiltonian}
\label{subsec:PTnew}

As we have shown in the previous sections, the Floquet Hamiltonian $\mathbf{H}_{\rm F}$ is not unitary when we are in the easy-plane regime. In addition, we show that the Floquet Hamiltonian has a real spectrum in Regime I of Fig. \ref{fig:phasediag}. There have been numerous studies on the non-Hermitian systems with real spectra, which usually possess the so-called $\mathcal{PT}$ symmetry (or more generally anti-unitary symmetry) \cite{Bender_2007, Mostafazadeh_2010, Korff_2007, Fring_2022}. In the case of the Floquet Hamiltonian that we consider, it commutes with an anti-unitary operator, which can be seen as a generalisation of the usual $\mathcal{PT}$ symmetry. We demonstrate the anti-unitary symmetry of the Floquet Hamiltonian $\mathbf{H}_{\rm F}$ as follows.

To begin with, we define the parity and time reversal symmetries on quantum lattice models. The parity symmetry is defined as
\be 
    \mathcal{P}: \quad \mathbf{O}_m \to \mathbf{O}_{L-m+1} ,
\ee 
or equivalently, the parity conjugation operator is
\be 
    \mathcal{P} = \prod_{m=1}^{L/2} \mathbf{P}_{m,L-m+1} .
\ee 
From its definition, parity conjugation operator is unitary.

As for the time reversal symmetry, it is defined as
\be 
    \mathcal{T}: \quad \mathbf{O}_m \to \bar{\mathbf{O}}_{m} ,
\ee 
which is anti-unitary \cite{Mostafazadeh_2002_1, Mostafazadeh_2002_2}.

Using the properties of two symmetries, we arrive at
\be 
    \mathcal{P}^2 = \mathcal{T}^2 = \mathbbm{1} , \quad \left[ \mathcal{P}, \mathcal{T} \right] = 0 .
\ee 

We define an anti-unitary operator $\mathcal{A}$ as the combination of the $\mathcal{PT}$ operator and the right translation opeartor $\mathbf{G}$ \eqref{eq:defG},
\be 
    \mathcal{A} = \mathbf{G} \mathcal{P} \mathcal{T} , \quad \mathcal{A}^{-1} = \mathcal{P} \mathcal{T} \mathbf{G}^{-1} .
\ee 

In the following, we shall show that the (non-Hermitian) Floquet Hamiltonian in both Regime I and II commutes with the anti-unitary operator $\mathcal{A}$. 

From the properties of the aTL generators \eqref{eq:defaTL}, we have
\be 
    \mathcal{P} \mathbf{e}_{m,m+1} \mathcal{P} = \mathbf{e}_{L-m,L-m+1} ;
\ee 
\be 
\begin{split} 
    \mathcal{T} \mathbf{e}_{a,b} \mathcal{T} & = \bar{\mathbf{e}}_{a,b} = \frac{q+q^{-1} }{2} - \left( \sigma^x_a \sigma^x_b + \sigma^y_a \sigma^y_b + \frac{q+q^{-1} }{2} \sigma^z_a \sigma^z_b \right) - \frac{q - q^{-1}}{2} (\sigma^z_b - \sigma^z_a ) \\
    & = \mathbf{e}_{b,a} , \quad \eta \in \ii \mathbb{R} .
\end{split}
\ee 
As for the inhomogeneity $\alpha \in \mathbb{C}$, we have
\be 
    \mathcal{T} \frac{\sinh \alpha}{\sinh (\eta - \alpha)} \mathcal{T} = - \frac{\sinh \bar{\alpha} }{\sinh (\eta+\bar{\alpha} ) } , \quad \forall \eta \in \ii \mathbb{R} / \{0 \}  .
\ee 

Hence, acting the operator $\mathcal{A}$ to the R matrix, we obtain
\be 
    \mathcal{A} \check{\mathbf{R}}_{m,m+1} (-\alpha) \mathcal{A}^{-1} = \check{\mathbf{R}}_{L-m+1,L-m+2} (\bar{\alpha} ) .
\ee 

Now we would like to investigate how the Floquet evolution operator changes under the action of anti-unitary operator $\mathcal{A}$. From \eqref{eq:defUFn}, we decompose the Floquet evolution operator into two parts,
\be 
\begin{split}
	& \mathbf{U}_{\rm F} (T) = \mathbf{U}_{\rm F} (\{0, \alpha\} ) = \mathbf{V}_2 (\alpha) \mathbf{V}_1 (\alpha) , \\ 
	& \mathbf{V}_2 (\alpha) = \prod_{m=1}^{L/2} \check{\mathbf{R}}_{2m-1,2m} (-\alpha) , \quad \mathbf{V}_1 (\alpha) = \prod_{m=1}^{L/2} \check{\mathbf{R}}_{2m,2m+1} (-\alpha) .
\end{split}
\ee

We start with $\mathbf{V}_2 (\alpha)$, i.e.
\be 
\begin{split}
    \mathcal{A} \mathbf{V}_2 (\alpha) \mathcal{A}^{-1} & = \mathcal{A} \prod_{m=1}^{L/2} \check{\mathbf{R}}_{2m-1,2m} (-\bar{\alpha})\mathcal{A}^{-1} \\
    & = \prod_{m=1}^{L/2} \check{\mathbf{R}}_{L-2m,L-2m+1} (\bar{\alpha}) = \mathbf{V}_1 (-\bar{\alpha} ) .
\end{split}
\ee 
Similarly, for the operator $\mathbf{V}_1 (\alpha)$,
\be 
    \mathcal{A} \mathbf{V}_1 (\alpha ) \mathcal{A}^{-1} = \mathbf{V}_2 (-\bar{\alpha}) .
\ee 
Together we obtain
\be 
    \mathcal{A} \mathbf{U}_{\rm F} (\alpha) \mathcal{A}^{-1} = \mathbf{V}_1 (-\bar{\alpha}) \mathbf{V}_2 (-\bar{\alpha} ) .
\ee 
Moreover, it is straightforward to derive the following inversion formulae
\be 
    \mathbf{V}_j (\alpha) \mathbf{V}_j (-\alpha) = \mathbf{V}_j (-\alpha) \mathbf{V}_j (\alpha) = \mathbbm{1} , \quad j \in \{ 1,2 \} ,
\ee 
from \eqref{eq:Rcheckinv}.

If we consider the case when $\alpha\in \mathbb{R}$, we obtain
\be 
    \mathcal{A} \mathbf{U}_{\rm F} (\alpha ) \mathcal{A}^{-1} = \mathcal{A} \mathbf{V} (\alpha) \mathbf{W} (\alpha) \mathcal{A}^{-1} = \mathbf{W} (-\alpha) \mathbf{V} (-\alpha) = \mathbf{U}_{\rm F}^{-1} (\alpha ) .
\ee

Since the Floquet evolution operator is the exponential of the Floquet Hamiltonian, we have
\be 
\begin{split} 
    \mathcal{A} \mathbf{U}_{\rm F} (T ) \mathcal{A}^{-1} & = \mathcal{A} \exp \left( -2\ii \mathbf{H}_{\rm F} (T ) T \right) \mathcal{A}^{-1} \\
    & = \exp \left( 2\ii \mathcal{A} \mathbf{H}_{\rm F} (T) \mathcal{A}^{-1} T \right) \\
    & = \exp \left( 2\ii \mathbf{H}_{\rm F} (T ) T \right) = \mathbf{U}_{\rm F}^{-1} (T ) .
\end{split}
\ee 
This implies that the Floquet Hamiltonian is invariant under the anti-unitary operator $\mathcal{A}$ when $\alpha\in \mathbb{R}$, i.e.
\be 
    \mathcal{A} \mathbf{H}_{\rm F} (T ) \mathcal{A}^{-1} = \mathbf{H}_{\rm F} (T ) , 
\ee 
when $\alpha \in \mathbb{R}$.

Moreover, when $\mathrm{Im} \, \alpha = \frac{\pi}{2}$, we have
\be 
    \bar{\alpha} = \alpha - \ii \pi .
\ee 
Since $\alpha$ is defined up to $\ii \pi$, cf. \eqref{eq:eigenvalueUF}, we have
\be 
    \mathbf{U}_{\rm F} (\{ 0 , - \bar{\alpha} \} ) =  \mathbf{U}_{\rm F} (\{ 0 , -\alpha \} ) , \quad \alpha - \ii \frac{\pi}{2} \in \mathbb{R} ,
\ee 
i.e.
\be 
\begin{split}
    & \mathcal{A} \mathbf{U}_{\rm F} (T) \mathcal{A}^{-1} = \mathbf{U}_{\rm F}^{-1} (T ) , \\
    & \mathcal{A} \mathbf{H}_{\rm F} (T ) \mathcal{A}^{-1} = \mathbf{H}_{\rm F} (T ) , \quad \alpha - \ii \frac{\pi}{2} \in \mathbb{R} .
\end{split}
\ee

Since in Regime I $\alpha \in \mathbb{R}$ and in Regime II $\mathrm{Im} \, \alpha = \frac{\pi}{2}$, the Floquet Hamiltonian $\mathbf{H}_{\rm F}$ commutes with $\mathcal{A}$ with all possible values of the Floquet period $2T \in \mathbb{R}_+$ in the easy-plane regime, despite being non-Hermitian.

\begin{figure}[t]
\centering
\begin{subfigure}{.33\linewidth}
\centering
\includegraphics[width=.85\linewidth]{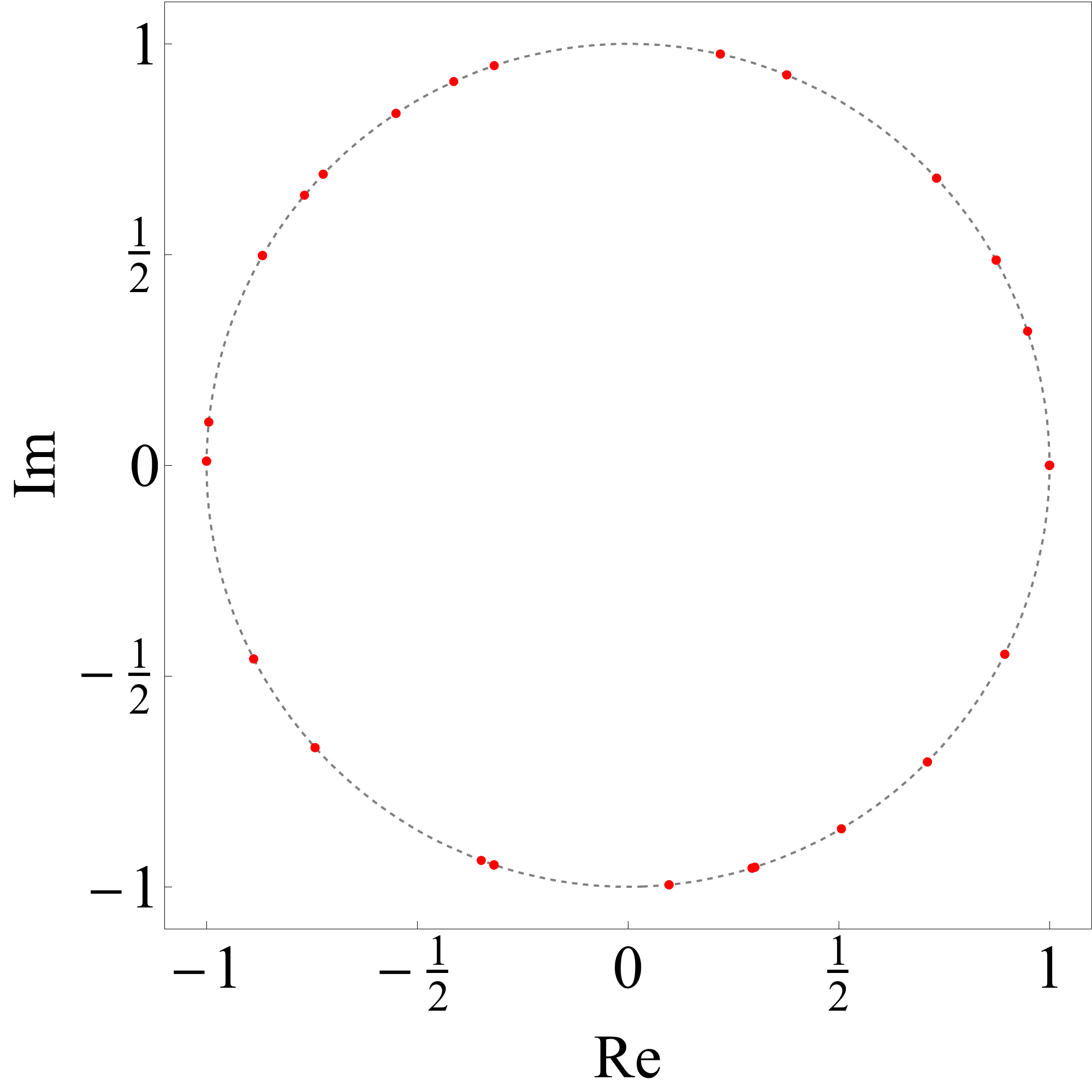}
\end{subfigure}%
\begin{subfigure}{.33\linewidth}
\centering
\includegraphics[width=.85\linewidth]{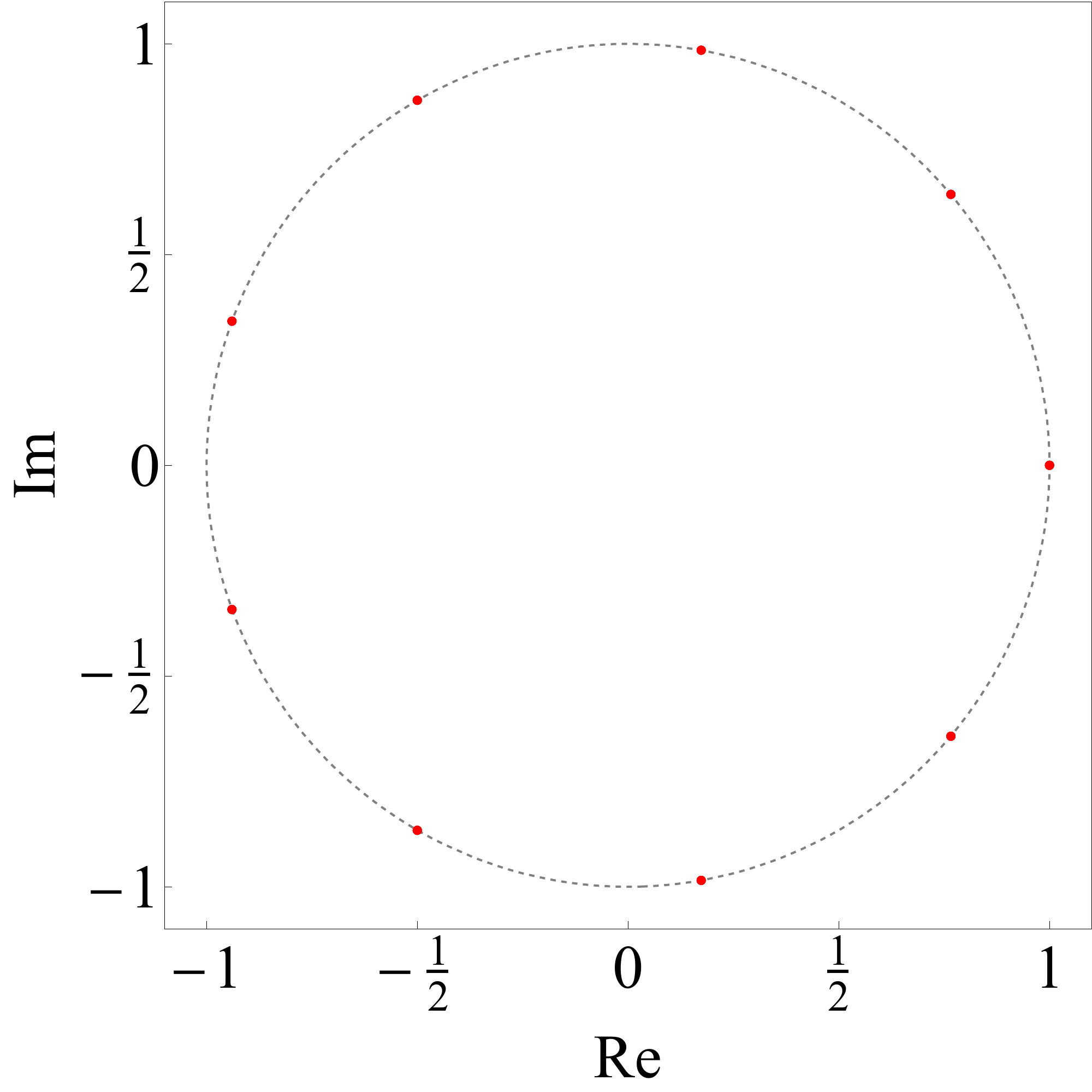}
\end{subfigure}%
\begin{subfigure}{.33\linewidth}
\centering
\includegraphics[width=.85\linewidth]{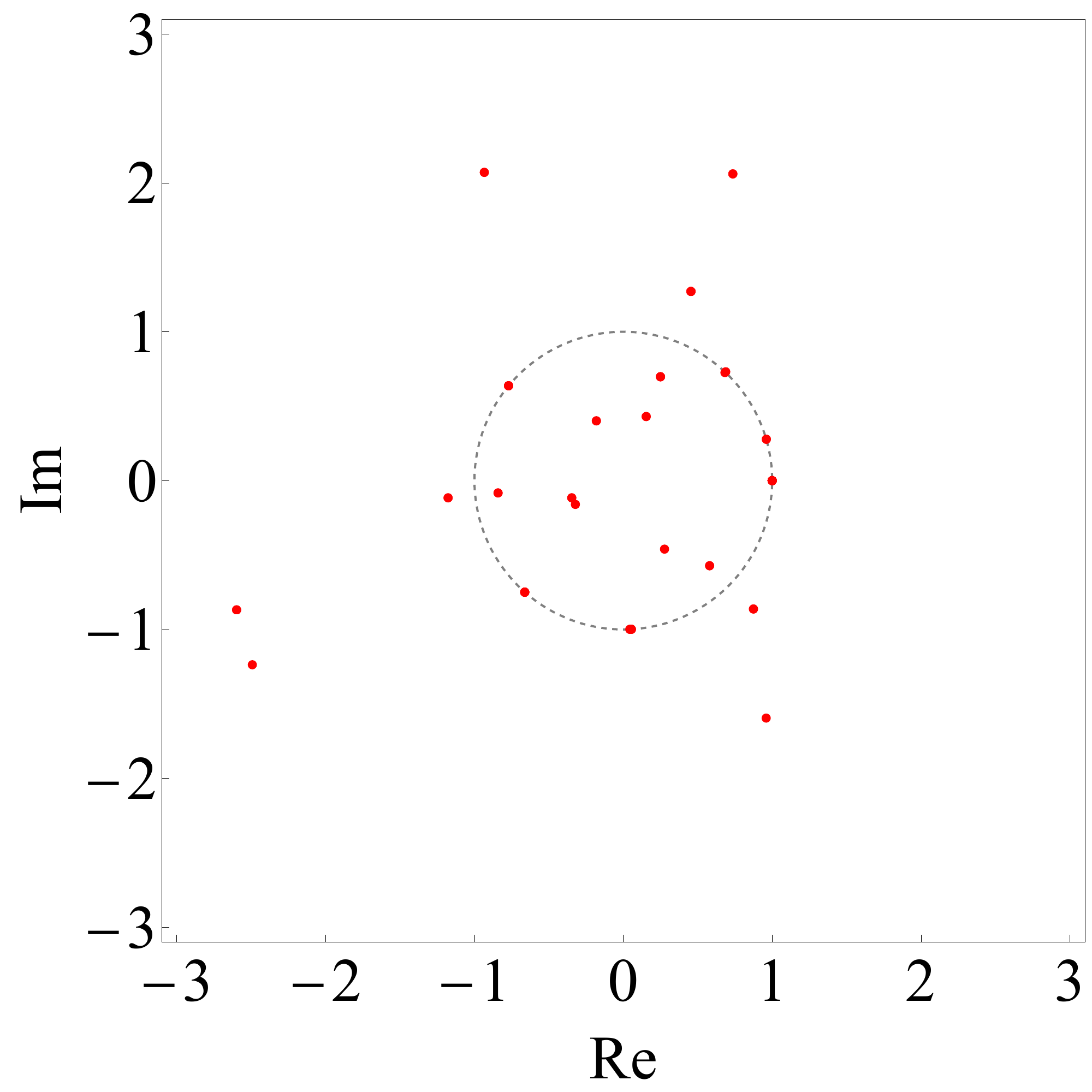}
\end{subfigure}
\caption{Spectra of the Floquet evolution operator $\mathbf{U}_{\rm F} (T)$ with $\eta = \ii \pi /3 $ (at root of unity) and system size $L=6$. $T = \frac{\pi + 2 \ii \eta - 0.1}{2 \cosh \eta}$, $\frac{\pi + 2 \ii \eta}{2 \cosh \eta}$ (phase transition), $\frac{\pi + 2 \ii \eta +0.1}{2 \cosh \eta}$ for the figures from left to right respectively.}
\label{fig:rouspectra}
\end{figure}

\begin{figure}[t]
\centering
\begin{subfigure}{.33\linewidth}
\centering
\includegraphics[width=.85\linewidth]{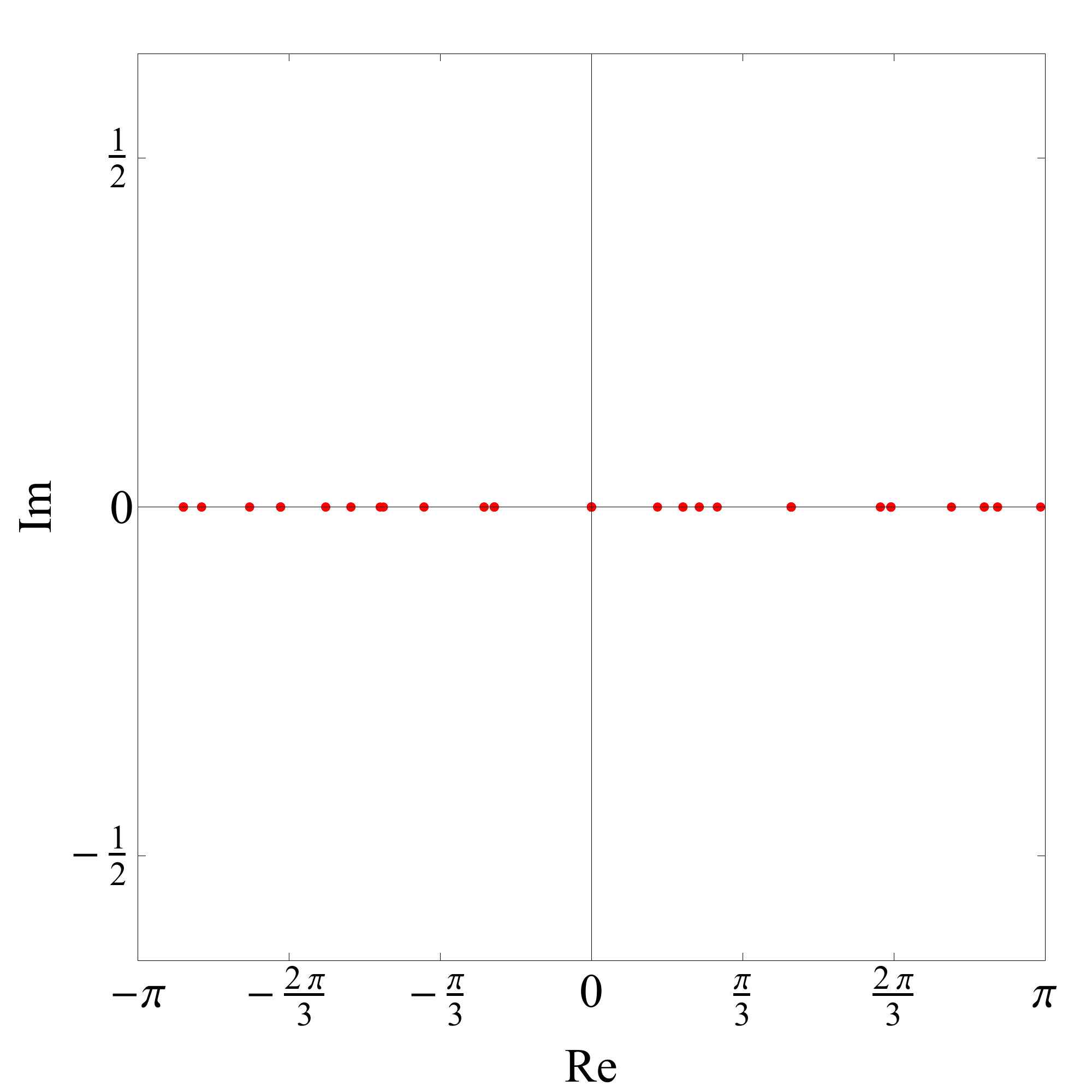}
\end{subfigure}%
\begin{subfigure}{.33\linewidth}
\centering
\includegraphics[width=.85\linewidth]{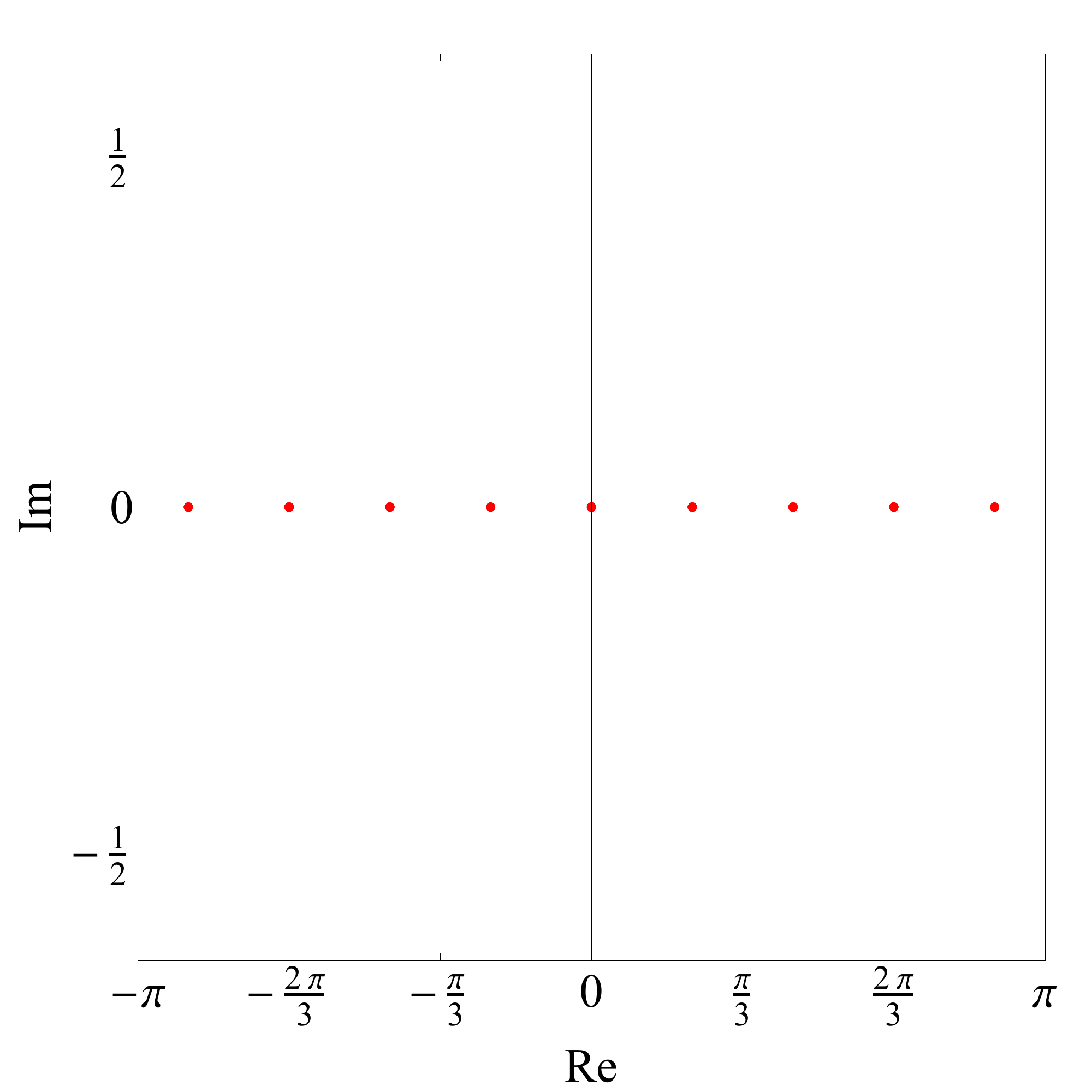}
\end{subfigure}%
\begin{subfigure}{.33\linewidth}
\centering
\includegraphics[width=.85\linewidth]{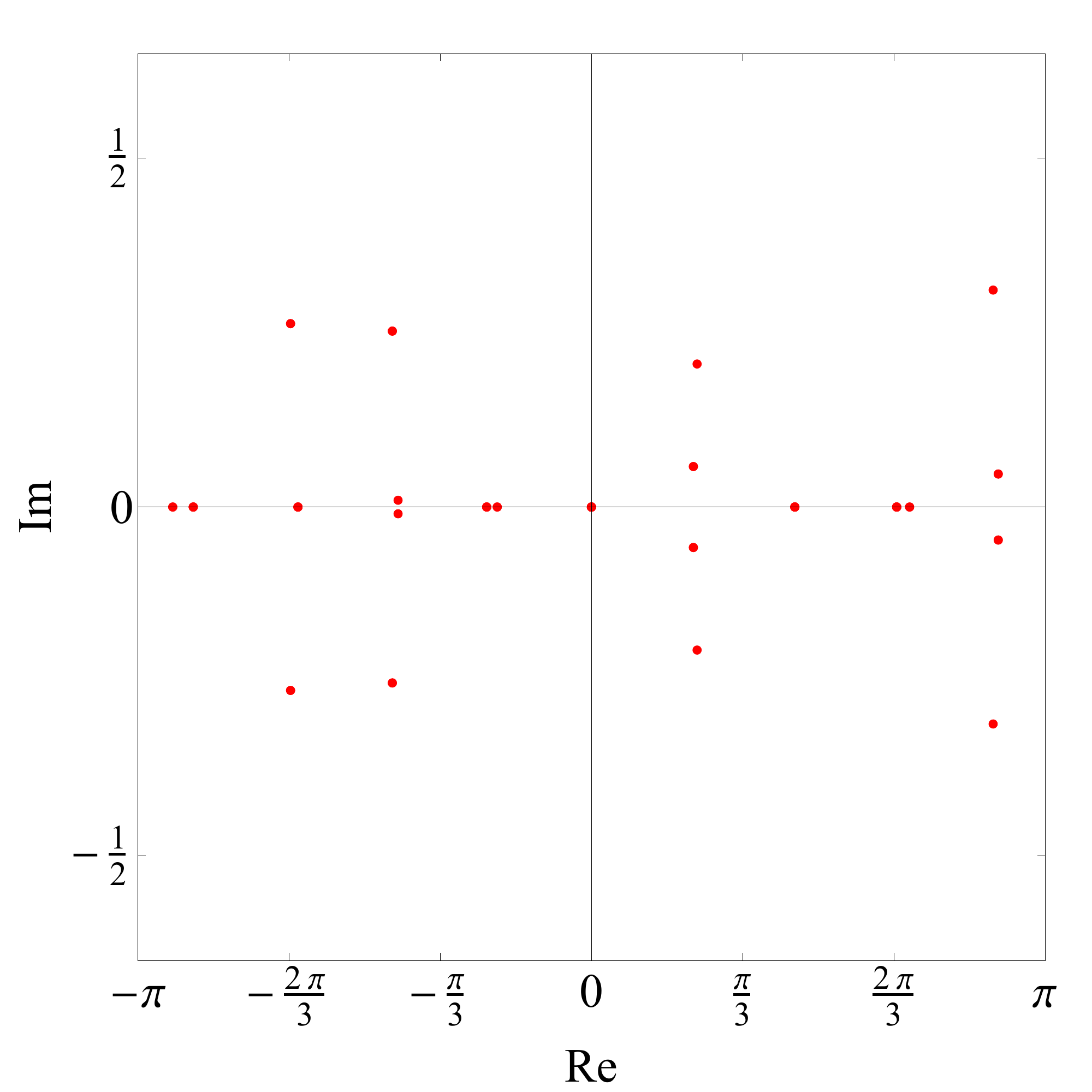}
\end{subfigure}
\caption{Spectra of the Floquet Hamiltonian $\mathbf{H}_{\rm F} (T)$ with $\eta = \ii \pi /3 $ (at root of unity) and system size $L=6$. $T = \frac{\pi + 2 \ii \eta - 0.1}{2 \cosh \eta}$, $\frac{\pi + 2 \ii \eta}{2 \cosh \eta}$ (phase transition), $\frac{\pi + 2 \ii \eta +0.1}{2 \cosh \eta}$ for the figures from left to right respectively.}
\label{fig:rouspectraHF}
\end{figure}

\subsection{Dynamical breaking of anti-unitary symmetry in the easy-plane regime}
\label{subsec:phasetransition}

Even though the Floquet Hamiltonian $\mathbf{H}_{\rm F}$ commutes with the anti-unitary operator $\mathcal{A}$ as shown above, it is not enough to assert that all the (right) eigenstates of the Floquet Hamiltonian are also the eigenstates of the anti-unitary operator $\mathcal{A}$ \cite{Wigner_1960}. 

However, we can already deduce some properties of the eigenvalues of the Floquet Hamiltonian $\mathbf{H}_{\rm F}$. Assume that all the (right) eigenstates $|\psi \rangle$ of the Floquet Hamiltonian $\mathbf{H}_{\rm F} (T)$ are also eigenstates of the anti-unitary operator $\mathcal{A}$ in certain range of values of $T$, i.e.
\be 
    \mathbf{H}_{\rm F} (T) | \psi \rangle = E_{\psi} | \psi \rangle , \quad \mathcal{A} | \psi \rangle = \lambda | \psi \rangle .
\ee

Therefore, all the eigenvalues in this case must be real, i.e.
\be 
    \mathcal{A} \mathbf{H}_{\rm F} (T) | \psi \rangle = \bar{E}_\psi \mathcal{A} | \psi \rangle = E_\psi \mathcal{A} | \psi \rangle ,
\ee 
i.e. $E_\psi = \bar{E}_\psi \in \mathbb{R}$. We denote this phase as the $\mathbf{G} \mathcal{PT}$ symmetric phase, since the anti-unitary symmetry is preserved (not broken) for all the eigenstates. 

However, if there exist eigenstates of the Floquet Hamiltonian that are not eigenstates of the anti-unitary operator $\mathcal{A}$ in another phase, we denote the phase as the $\mathbf{G} \mathcal{PT}$ symmetry \emph{broken} phase. In this scenario, we start with an eigenstate of the Floquet Hamiltonian $|\psi \rangle$ which is not an eigenstate of $\mathcal{A}$,
\be 
    \mathbf{H}_{\rm F} (T) | \psi \rangle = E_{\psi} | \psi \rangle , \quad \mathcal{A} | \psi \rangle =| \varphi \rangle  \, \cancel{\propto}  \, | \psi \rangle .
\ee 

In this case, we have
\be 
    \mathbf{H}_{\rm F} (T) | \varphi \rangle = \mathcal{A} \mathbf{H}_{\rm F} (T) | \psi \rangle = \bar{E}_{\psi} | \varphi \rangle ,
\ee 
i.e. the spectrum of $ \mathbf{H}_{\rm F} (T)$ consists of complex conjugate pairs in the $\mathbf{G} \mathcal{PT}$ symmetry broken phase.

When the condition \eqref{eq:phaseIcond1} or \eqref{eq:phaseIcond2} is satisfied, i.e. in Regime I, the spectrum of $\mathbf{U}_{\rm F} (T)$ is uni-modular and the spectrum of the Floquet Hamiltonian $\mathbf{H}_{\rm F} (T)$ is real, as demonstrated in the left figures of Figs. \ref{fig:notrouspectra}, \ref{fig:notrouspectraHF} and \ref{fig:rouspectra}, \ref{fig:rouspectraHF}. Therefore, Regime I corresponds to the $\mathbf{G} \mathcal{PT}$ symmetric phase. Meanwhile, when the condition \eqref{eq:phaseIIcond1} or \eqref{eq:phaseIIcond2} is satisfied, i.e. in Regime II, due to the anti-unitary symmetry, the spectrum of the Floquet Hamiltonian $\mathbf{H}_{\rm F} (T)$ is no longer real, but consists of complex conjugate pairs, as shown in the right figures of Figs. \ref{fig:notrouspectraHF} and \ref{fig:rouspectraHF}. Hence, Regime II corresponds to the $\mathbf{G} \mathcal{PT}$ broken phase. We expect a ``{\it phase transition}'' occurring between the two phases. This phenomenon is demonstrated in Figs. \ref{fig:notrouspectra}, \ref{fig:notrouspectraHF} and \ref{fig:rouspectra}, \ref{fig:rouspectraHF}: in Regime I the spectrum of $\mathbf{U}_{\rm F} (T)$ is uni-modular, denoting the $\mathbf{G} \mathcal{PT}$ symmetric phase, cf. the left figures of Figs. \ref{fig:notrouspectra} and \ref{fig:rouspectra}; when $T = \frac{\pi \pm 2 \ii \eta }{2 \cosh \eta}$ (mod $\frac{\pi}{\cosh \eta}$), i.e. the phase transition, the eigenvalues of $\mathbf{U}_{\rm F} (T)$ attain additional discrete symmetries, cf. the middle figures of Figs. \ref{fig:notrouspectra} and \ref{fig:rouspectra}; in Regime II the spectrum of $\mathbf{U}_{\rm F} (T)$ is no longer uni-modular, denoting the $\mathbf{G} \mathcal{PT}$ broken phase, cf. the right figures of Figs. \ref{fig:notrouspectra} and \ref{fig:rouspectra}.

{\bf Remarks.} The phase transition that we observe here is different from the conventional phase transition originated from the spontaneous breaking of a unitary symmetry in a Hermitian system. First of all, the phase transition here is a property of the entire spectrum, while the conventional phase transition is usually about the property of the ground state(s). Furthermore, the phase transition here does not require the thermodynamic limit (i.e. system size goes to infinity), as we observe the same behaviour of the spectra of the Floquet evolution operator for different system sizes. Moreover, at root of unity $\eta = \ii \pi /2$ ($\beta = 0$), we need to analyses the phases differently, since \eqref{eq:alphainT1} does not apply any more. In that case, the phase transition is still present, but with finite $T$ ($\beta T \to 0$), which is discussed in \cite{Yashin_2022}.

\begin{figure}[t]
    \centering
    \begin{subfigure}{.49\linewidth}
        \centering
        \includegraphics[width=.9\linewidth]{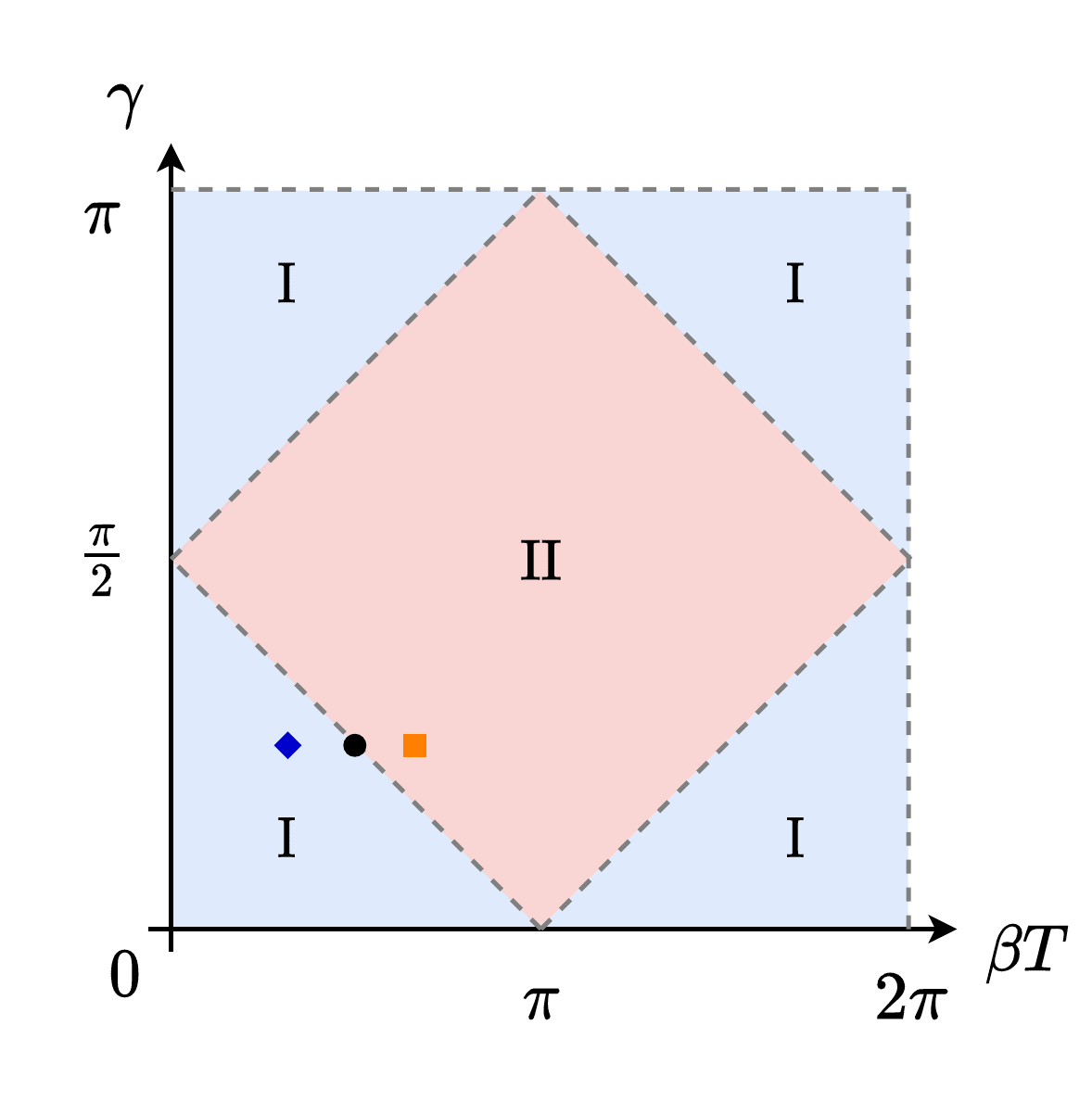}
    \end{subfigure}%
    \begin{subfigure}{.49\linewidth}
    \centering
        \includegraphics[width=.9\linewidth]{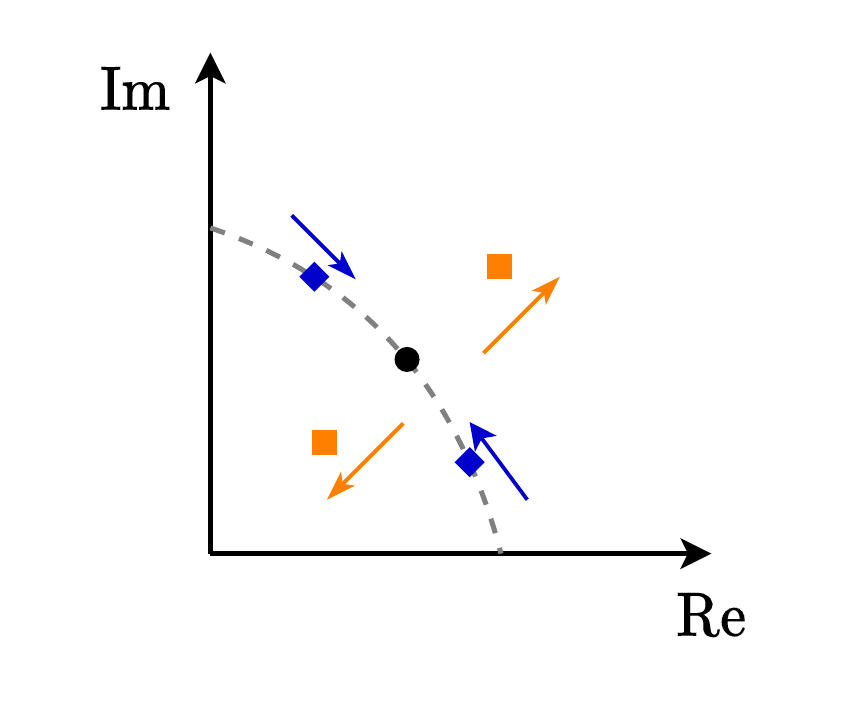}
    \end{subfigure}
    \caption{Left figure: Phase diagram in the easy-plane regime. When the system is in Regime I, the spectrum of the Floquet Hamiltonian is real and it possesses the $\mathbf{G}\mathcal{PT}$ symmetry. On the contrary, in Regime II, the spectrum of the Floquet Hamiltonian is complex and the $\mathbf{G}\mathcal{PT}$ symmetry is broken. Anisotropic parameter $\eta = \ii \gamma$. Right figure: Demonstration of how the eigenvalues of $\mathbf{U}_{\rm F}(T)$ change across the phase transition. We start with eigenvalues of $\mathbf{U}_{\rm F}$ within Regime I, which are depicted as blue diamonds. When we are approaching the phase transition, the blue diamonds move toward each other and becomes degenerate at the black dot at phase transition (which is the root of unity value conjectured in \eqref{eq:conjecturerou} if the anisotropy is at root of unity too). When we cross the phase transition into Regime II, the eigenvalues (as the orange squares) are no longer necessarily uni-modular and they move away from the black dot in the complex plane.}
    \label{fig:phasediag}
\end{figure}

\subsection{Phase transition at root of unity: conjecture of the spectra}
\label{subsec:PTrootofunity}

As we discussed in the previous sections, the phase transition between the anti-unitary symmetry preserved and broken phases happens throughout the entire easy-plane regime. In addition, we observe that the spectra of the Floquet evolution operators behave in a special way at root of unity anisotropies, i.e.
\be 
    q^{\varepsilon \ell_2} = 1 , \quad \eta = \ii \pi \frac{\ell_1}{\ell_2} , \quad \text{with } \varepsilon = 
    \begin{cases} \ 2 & \text{if} \ \ell_1 \ \text{is odd}, \\
		\ 1 & \text{if} \ \ell_1 \ \text{is even}.
	\end{cases}
\ee 

More specifically, when we are at the phase transition point, i.e. $\alpha \to \pm \infty$, the Floquet evolution operator becomes

\be 
    \mathbf{U}_{\rm F} (\{ 0 , \pm \infty \} ) = \prod_{m=1}^{L/2} \left( \mathbbm{1} - e^{\mp \eta} \mathbf{e}_{2m-1,2m} \right) \prod_{m=1}^{L/2} \left( \mathbbm{1} - e^{\mp \eta} \mathbf{e}_{2m,2m+1} \right) .
\ee 
The eigenvalues of this scenario are shown in the middle figures of Figs. \ref{fig:notrouspectra} and \ref{fig:rouspectra}.

Interestingly, when the anisotropy $\Delta$ is at root of unity value, i.e. the case of Fig. \ref{fig:rouspectra}, we conjecture that eigenvalues of $\mathbf{U}_{\rm F}$ locates at a different set of roots of unity depending on both the denominator $\ell_2$ and the system size $L$, i.e.
\be 
    \exp \left( \frac{2 \ii \pi n }{\ell_2 L /2} \right), \quad n \in \mathbb{Z} .
    \label{eq:conjecturerou}
\ee 

\begin{figure}
    \centering
    \includegraphics[width=.55\linewidth]{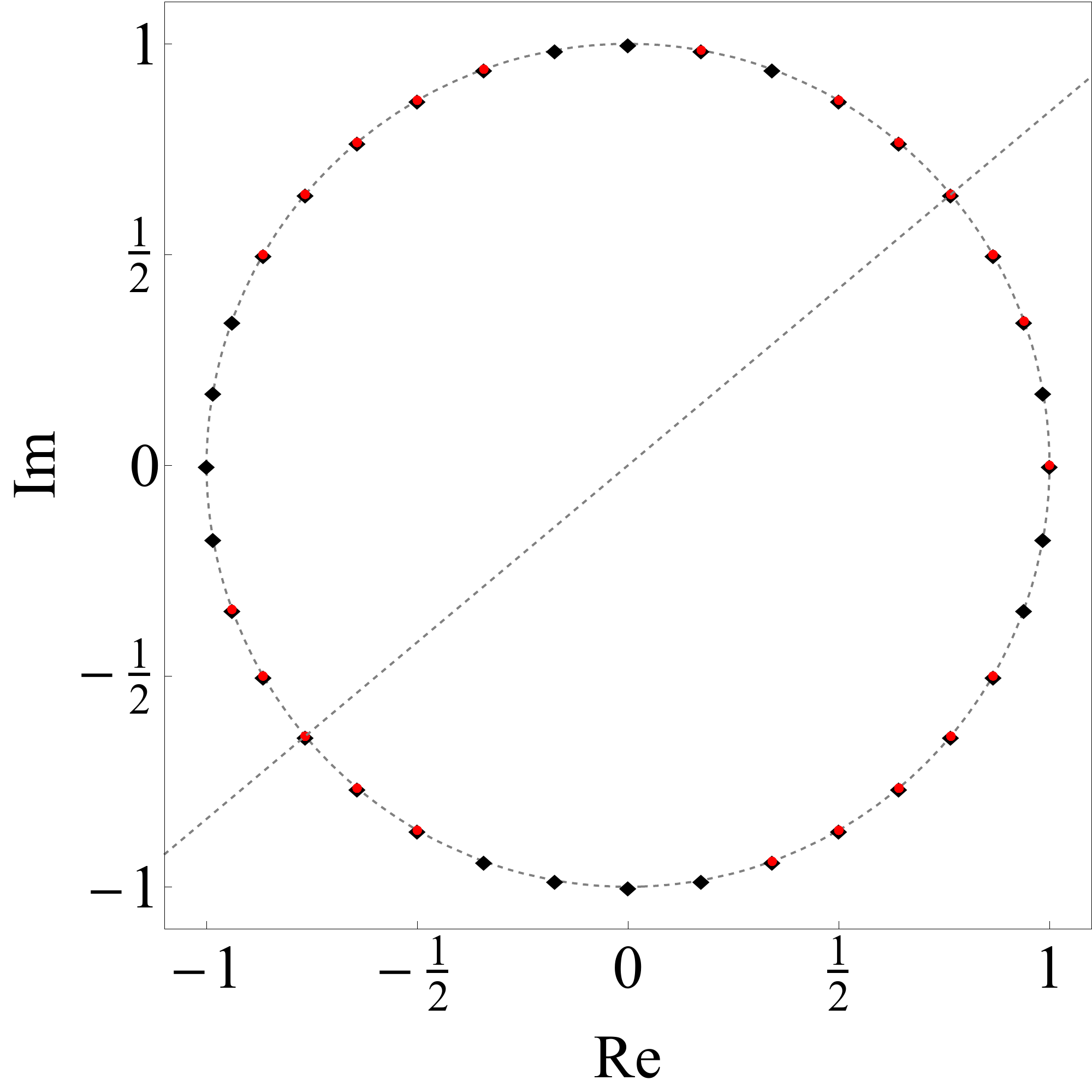}
    \caption{Spectra with $L=8$, $\eta = \frac{5 \ii \pi}{9}$. Black diamonds are all the possible root of unity values of $\exp \left( \frac{2 \ii \pi n }{\ell_2 L/2} \right)$, $n \in \mathbb{Z}$. Red dots are the ones that are present in the spectra of $\mathbf{U}_{\rm F} (+\infty)$. The spectrum is mirror symmetric, revealing further symmetries hidden in the spectrum.}
    \label{fig:rootofunityconjecture}
\end{figure}

{\bf Remark.} The conjecture that the eigenvalues of $\mathbf{U}_{\rm F}$ at phase transition locate at the root of unity values is further exemplified in the second figure of Fig. \ref{fig:rouspectra}, where the eigenvalues are equally distributed at $\exp \left( \frac{2\ii \pi n}{9} \right)$, where $9 = \ell_2 \times \frac{L}{2}$. However, in the general case, e.g. Fig. \ref{fig:rootofunityconjecture}, not all possible root of unity values appear in the spectra of $\mathbf{U}_{\rm F} (\pm \infty)$, while the degeneracies at each eigenvalues might vary too.

We can infer part of the conjecture from the relation between $\mathbf{U}_{\rm F} $ and the staggered transfer matrix. Since 
\be 
    \mathbf{U}_{\rm F}  (\{ 0 , \pm \infty \} ) = \lim_{\alpha \to \infty} \frac{1}{\sinh^L \eta \sinh^{L} (\eta-\alpha)} \mathbf{T}^2 (0, \alpha) \mathbf{G}^2 ,
\ee 
the eigenvalues of $\mathbf{U}_{\rm F} (\pm \infty) $ are proportional to the ones of $\mathbf{G}^2 = \exp ( \ii \mathbf{p} )$, i.e. the exponential of the total momentum. The eigenvalues of $\mathbf{G}^2$ are
\be 
    \exp \left( \frac{2\ii \pi n}{L/2} \right) ,
\ee 
which consists of a part of the conjectural values \eqref{eq:conjecturerou}. The eigenvalues of the part that proportional to $\mathbf{T}^2 (0,\alpha)$ in principle can be obtained from \eqref{eq:eigenvalueUF} by taking the limit $\alpha \to \infty$. However, we cannot naively take for granted that Bethe roots $u_j$ are of order 1. In fact, the existence of Bethe roots located at infinity makes it nontrivial taking the limit $\alpha \to \pm \infty$. We postpone the proof of the conjecture \eqref{eq:conjecturerou} and the study of the spectra of the Floquet evolution operators at phase transitions to future investigations.

Instead, we would like to offer a phenomenological description of the phase transitions. The eigenvalues of the Floquet evolution operator $\mathbf{U}_{\rm F}$ move toward the eigenvalues at the phase transition as the Floquet period $T$ approaches the phase transition point from Regime I. At the phase transition, the eigenvalues of $\mathbf{U}_{\rm F}$ become degenerate. Once the Floquet period $T$ moves away from the phase transition and is inside Regime II, the eigenvalues repel and move away from the unit circle, corresponding to the dynamical breaking of the anti-unitary symmetry. This procedure is illustrated in the right figure of Fig. \ref{fig:phasediag}. Beware that the right figure of Fig. \ref{fig:phasediag} is meant as only an illustration of the phenomenon. In fact, there might be more than two eigenvalues become degenerate at the phase transition. And there might be other additional degeneracies within both phases that are not due to the mechanism above. At the phase transition, the spectrum of $\mathbf{U}_{\rm F}$ obtains extra degeneracies, corresponding to the appearance of the Jordan blocks in the Floquet Hamiltonian. Further discussions will be postponed to later work.

\section{Conclusion}
\label{sec:conclusion}

In this paper, we have proven a generic method of constructing the integrable Floquet circuits with depth $n \geq2$ from the inhomogeneous transfer matrices using the Floquet Baxterisation. Our proof does not require any specific properties for the R matrices, for instance the regularity condition and the difference form of the spectral parameter. Our proof generalizes several known cases with the depth $n=2$. When the depth $n\geq 3$, the Floquet Baxterisation provides a systematic way of obtaining the integrable Floquet evolution operators that has not been discussed before. Moreover, one can interpret the integrable Floquet evolution operator as a non-reciprocal tilted transfer matrix of integrable lattice statistical-mechanical models. In addition to the periodic case, we also prove a similar method of constructing integrable Floquet dynamics with open boundary condition using the boundary Yang--Baxter relations.

After proving the general methods of constructing integrable Floquet dynamics, we focus on the explicit example of the Floquet Baxterisation using the R matrix of the 6-vertex model, which confirms parts of the conjecture on the Floquet integrability from the (affine) Temperley--Lieb algebra in \cite{Kurlov_21} motivated from the explicit calculations on the local density of conserved charges.

When the 6-vertex model is in the easy-axis and isotropic regimes, the Floquet evolution operator is unitary, alas the Floquet Hamiltonian is Hermitian, leading to a unitary discrete time evolution. However, when we concentrate on the easy-plane regime, the Floquet evolution operator is no longer unitary (hence the Floquet Hamiltonian is non-Hermitian). The Floquet Hamiltonian in this case possesses an anti-unitary symmetry. The anti-unitary symmetry can be
broken with respect to different Floquet period $T$, leading to a diamond-shape phase diagram shown in Fig. \ref{fig:phasediag}. In Regime I of Fig. \ref{fig:phasediag}, the anti-unitary symmetry is not broken, and the Floquet Hamiltonian is pseudo-Hermitian with a real spectrum, leading to a norm preserving time evolution. On the other hand, in Regime II of Fig. \ref{fig:phasediag}, the anti-unitary symmetry is broken, resulting in a Floquet Hamiltonian with a spectrum consisting of complex conjugate pairs. The phase transition between two regimes happens even with finite system sizes, with interesting behaviours at root of unity values of anisotropy.

Equipped with the results obtained in this paper, we are ready to focus on other intriguing questions. For instance, we can use the Bethe ansatz technique to further investigate the quantum quenches of certain initial states with the integrable Floquet circuits, which are expected to have different appearances for physical quantities such as the correlation functions and the entanglement entropies. It seems promising for developing the hydrodynamic theory for the integrable Floquet circuits, since the existence of the extensively many local (and quasi-local) conserved quantities are accessible from the underlying inhomogeneous transfer matrices. 

Furthermore, the integrable Floquet circuits with depth $n\geq 3$ have not been systematically studied. From the procedure of the Floquet Baxterisation, we offer a new perspective, different from the existing constructions of the quantum cellular automaton. It would be useful to understand the similarities and differences of these constructions. Moreover, we would like to understand better the semi-classical limit of the integrable Floquet circuits considered in this paper, which should correspond to the discrete space-time classical integrable models studied in \cite{Ilievski_2020, Ilievski_2021} when depth $n=2$. With numerous open questions, we conclude that the Floquet Baxterisation offers us extraordinary opportunities to understand better many aspects of out-of-equilibrium physics and exactly solvable models.

\section*{Acknowledgments}

Y.M. thanks Kemal Bidzhiev, Jules Lamers, Vincent Pasquier and Lenart Zadnik for fruitful discussions. Y.M. is grateful to Philippe Di Francesco for suggesting the generalisation of the Floquet Baxterisation protocol to arbitrary depths. Part of the work has been conducted during the workshop ``Randomness, Integrability, and Universality'' at GGI. Y.M. acknowledges the support from the GGI BOOST fellowship. The work by V.G. is part of the DeltaITP consortium, a program of the Netherlands Organization for Scientific Research (NWO) funded by the Dutch Ministry of Education, Culture and Science (OCW). This study is also supported by the Russian Science Foundation (Grant No. 20-42-05002, work of D.V.K.).

\begin{appendix}
\section{New solution to the set-theoretical Yang--Baxter equation}
\label{app:stYBE}

In this Appendix we present the solutions to the set-theoretical Yang--Baxter equation using the substitution rules \eqref{eq:subrules}.

Using the substitution rule \eqref{eq:subrules}, we consider the triple $(x_{1},x_{2},x_{3})$. From the left-hand side of \eqref{eq:settheoreticalYBE} we have
\be 
    R_{23}(x_{1},x_{2},x_{3}) = (x_{1},x_{2}^{n}x_{3}^{m}, x_{2}^{p}x_{3}^{q}) ,
\ee 
\be 
\begin{split}
    R_{13}( R_{23}(x_{1},x_{2},x_{3} ) )&=R_{12}(x_{1},x_{2}^{n}x_{3}^{m}, x_{2}^{p}x_{3}^{q})\\
    &= (x_{1}^{n}(x_{2}^{p}x_{3}^{q})^{m},x_{2}^{n}x_{3}^{m},x_{1}^{p}(x_{2}^{p}x_{3}^{q})^{q}) \\
    &=(x_{1}^{n}x_{2}^{pm}x_{3}^{qm},x_{2}^{n}x_{3}^{m},x_{1}^{p}x_{2}^{pq}x_{3}^{q^{2}}) ,
\end{split}
\ee 
\be 
\begin{split}
    R_{12}(R_{13}(R_{23}(x_{1},& x_{2},x_{3} ) ) )= R_{12}(x_{1}^{n}x_{2}^{pm}x_{3}^{qm},x_{2}^{n}x_{3}^{m},x_{1}^{p}x_{2}^{pq}x_{3}^{q^{2}})\\
&= ((x_{1}^{n}x_{2}^{pm}x_{3}^{qm})^{n}(x_{2}^{n}x_{3}^{m})^{m},(x_{1}^{n}x_{2}^{pm}x_{3}^{qm})^{p}(x_{2}^{n}x_{3}^{m})^{q}, x_{1}^{p}x_{2}^{pq}x_{3}^{q^{2}})\\
&=(x_{1}^{n^{2}}x_{2}^{pnm+nm}x_{3}^{qmn+m^{2}},x_{1}^{np}x_{2}^{p^{2}m+nq}x_{3}^{qmp+mq},  x_{1}^{p}x_{2}^{pq}x_{3}^{q^{2}}) .
\end{split}
\ee

Similarly, from the right-hand side of \eqref{eq:settheoreticalYBE} we obtain
\be 
    R_{12}(x_{1},x_{2},x_{3})  = (x_{1}^{n}x_{2}^{m},x_{1}^{p}x_{2}^{q},x_{3}) ,
\ee 
\be 
\begin{split}
    R_{13}(R_{12}(x_{1},x_{2},x_{3}) ) &= R_{13}(x_{1}^{n}x_{2}^{m},x_{1}^{p}x_{2}^{q},x_{3})\\
&= ((x_{1}^{n}x_{2}^{m})^{n}x_{3}^{m},x_{1}^{p}x_{2}^{q},(x_{1}^{n}x_{2}^{m})^{p}x_{3}^{q}) \\
&= (x_{1}^{n^{2}}x_{2}^{mn}x_{3}^{m}, x_{1}^{p}x_{2}^{q},x_{1}^{np}x_{2}^{mp}x_{3}^{q}) ,
\end{split}
\ee 
\be 
\begin{split}
    R_{23}(R_{13}(R_{12}& (x_{1},x_{2},x_{3} ) ) )=R_{23}((x_{1}^{n^{2}}x_{2}^{mn}x_{3}^{m}, x_{1}^{p}x_{2}^{q},x_{1}^{np}x_{2}^{mp}x_{3}^{q})\\
&=(x_{1}^{n^{2}}x_{2}^{mn}x_{3}^{m}, (x_{1}^{p}x_{2}^{q})^{n}(x_{1}^{np}x_{2}^{mp}x_{3}^{q})^{m}, (x_{1}^{p}x_{2}^{q})^{p}(x_{1}^{np}x_{2}^{mp}x_{3}^{q})^{q})\\
&=(x_{1}^{n^{2}}x_{2}^{mn}x_{3}^{m},x_{1}^{pn+npm}x_{2}^{qn+m^{2}p}x_{3}^{qm}, x_{1}^{p^{2}npq}x_{2}^{qp+mqp}x_{3}^{q^{2}}) .
\end{split}
\ee

Comparing both sides, we arrive at the conclusion that the following requirements have to be satisfied
\be
\begin{split}
n^{2}&=n^{2},\quad pn+pnm=np,\quad p^{2}+npq=p,\\
mn&=pmn+mn,\quad qn+m^{2}p=p^{2}m+nq,\quad qp+mqp=qp,\\
m&= qmn+m^{2},\quad qm=qmp+qm,\quad q^{2}=q^{2} ,
\end{split}
\ee
in order to guarantee that the set-theoretical Yang--Baxter equation \eqref{eq:settheoreticalYBE} is fulfilled. Note that there are $n\leftrightarrow q$ and $p\leftrightarrow m$ symmetries in these equations. 

This remark and a straightforward analysis leads to the  following classes of solutions [where we use the notation~$(n,m,p,q)$]:
\be
\begin{split} 
& A_{1 4}: (n, 0, 0, q), \\
& B_{124}: (n, 1-n q, 0, q),\\
& B_{135}: (n, 0, 1-n q, q),\\
& C_{34}: (0, 0, 1, q),\\
& D_{24}: (0, 1, 0, q),\\
& C_{13}: (n, 0, 1, 0),\\
& D_{12}: (n, 1, 0, 0).
\end{split}
\ee
In fact, the equations above can be further reduced to the following four classes with arbitrary parameters $n$ and $q$, i.e.
\be
\begin{split} 
& A_{1 4}: (n, 0, 0, q), \\
& B_{1235}:  (n, 1-n q, 0, q),\\
& B_{1345}: (n, 0, 1-n q, q)\\
& P: (0, 1, 1, 0) .
\end{split}
\ee

\section{The allowed value for Bethe roots in the easy-plane regime}
\label{app:allowedvalue}

From the definition of the staggered transfer matrix \eqref{eq:monodromy6vertex}, we obtain the following identity in the easy-plane regime $\eta \in \ii \mathbb{R} / \{ 0 \}$, i.e.
\be 
    \mathbf{T}^\dag (u, \alpha , \eta ) = \prod_{m=1}^L \sigma^x_m \mathbf{T} ( \bar{u} - \eta ,  \bar{\alpha} , \eta )  \prod_{m=1}^L \sigma^x_m ,
\ee 

We consider the case with $\alpha = \bar{\alpha} \in \mathbb{R}$. In that case, the eigenvalues of $\mathbf{T}^\dag (u,\alpha, \eta)$ and $\mathbf{T} ( \bar{u} - \eta , \alpha , \eta )$ coincide. (though the eigenvectors are related by the spin-flip operator.) Therefore, we expect the eigenvalues of the staggered transfer matrix
\be 
    \bar{\tau} (u ,  \alpha , \{ u_m \}_{m=1}^M ) = \tau ( \bar{u} - \eta ,  \alpha  , \{ u_m \}_{m=1}^M ) ,
\ee 
which in terms of the spectral parameter $\lambda$ becomes
\be 
    \bar{\tau} (\lambda , \alpha , \{ \lambda_m \}_{m=1}^M ) = \tau ( \bar{\lambda} , \alpha  , \{ \lambda_m \}_{m=1}^M ) .
\ee 

We consider the case with $\lambda = \bar{\lambda}$, which leads us to the following equation,
\be 
\begin{split} 
    & \left[ \sinh (\lambda + \frac{\alpha}{2} + \frac{\eta}{2} ) \sinh (\lambda - \frac{\alpha}{2} + \frac{\eta}{2} ) \right]^{L/2} \left( \prod_{m=1}^M \sinh (\lambda - \lambda_m - \eta ) \sinh (\lambda - \bar{\lambda}_m ) - \right. \\
    & \left.  \prod_{m=1}^M \sinh (\lambda - \bar{\lambda}_m -\eta ) \sinh (\lambda - \lambda_m ) \right) = \\
    & \left[ \sinh (\lambda + \frac{\alpha}{2} - \frac{\eta}{2} ) \sinh (\lambda - \frac{\alpha}{2} - \frac{\eta}{2} ) \right]^{L/2} \left( \prod_{m=1}^M \sinh (\lambda - \bar{\lambda}_m + \eta ) \sinh (\lambda - \lambda_m ) - \right. \\
    & \left. \prod_{m=1}^M \sinh (\lambda - \lambda_m +\eta ) \sinh (\lambda - \bar{\lambda}_m ) \right) .
\end{split}
\ee 
Comparing the zeros of the trigonometric polynomials on both sides of the equation, we realise that
\be 
\prod_{m=1}^M \sinh (\lambda - \lambda_m - \eta ) \sinh (\lambda - \bar{\lambda}_m ) = \prod_{m=1}^M \sinh (\lambda - \bar{\lambda}_m -\eta ) \sinh (\lambda - \lambda_m ) .
\ee 

In order to matching the zeros of the trigonometric polynomials on each side of the equation, we can conclude that the Bethe roots $\lambda_n$ satisfies one of the three conditions
\begin{itemize}
    \item $\lambda_n \in \mathbb{R}$,
    \item $\mathrm{Im} \,\, \lambda_n = \frac{\pi}{2}$,
    \item If $\lambda_n \in \{ \lambda_m \}_{m=1}^M$, $\bar{\lambda}_n \in \{ \lambda_m \}_{m=1}^M$ .
\end{itemize}

\section{Isotropic limit}
\label{app:isotropic}

When we consider the isotropic case ($|\Delta | = 1$), the R matrix consists of rational functions instead of trigonometric functions in terms of the spectral parameter. Here we only focus on the case with $\Delta = 1$ ($\eta = 0$). The case with $\Delta = -1$ ($\eta = \ii \pi$) can be obtained analogously. More specifically, the R matrix becomes
\be 
    \check{\mathbf{R}}_{a,b}^{\rm iso} (u) = \mathbbm{1} - \frac{u}{u + \ii } \mathbf{e}_{a,b} , \quad \mathbf{e}_{a,b} = \bp 0 & 0 & 0 & 0 \\ 0 & 1 & -1 & 0 \\ 0 & -1 & 1 & 0 \\ 0 & 0 & 0 & 0  \ep ,
\ee 
with $\beta = 2 \cosh \eta = 2$.

The staggered transfer matrices are defined similarly \eqref{eq:defTst}, 
\be 
	\mathbf{T}^{\rm iso} (0, \alpha) = (u+\ii)^{L/2} (u-\alpha +\ii)^{L/2} \Tr_a \left[ \prod_{m=1}^{L/2} \mathbf{R}_{a,2m-1}^{\rm iso} (u) \mathbf{R}_{a,2m-1}^{\rm iso} (u-\alpha) \right] ,
\ee 
and the relation to the Floquet evolution operator \eqref{eq:FloquetaTL} becomes
\be 
    \mathbf{U}_{\rm F}^{\rm iso} (T ) = \mathbf{U}_{\rm F}^{\rm iso} (\{ 0 , \alpha \} ) = \frac{1}{(\ii-\alpha)^L \ii^L} \left( \mathbf{T}^{\rm iso} (0, \alpha) \right)^2 \mathbf{G}^2 .
\ee 
The relation between the inhomogeneity $\alpha$ and period $T$ has changed accordingly, i.e.
\be 
    \frac{\alpha}{\ii -\alpha} = \frac{\exp (2\ii T) -1}{2} , \,\, \mathrm{or} \,\, \alpha = \frac{\ii (\exp (2\ii T) -1)}{\exp (2\ii T) + 1} .
\ee 

In this case, we have
\be 
    \mathbf{U}^{\rm iso}_{\rm F} (T) = \prod_{m=1}^{L/2} \check{\mathbf{R}}^{\rm iso}_{2m-1,2m} \left( \frac{\ii (1- \exp (2\ii T) )}{\exp (2\ii T) + 1} \right) \prod_{m=1}^{L/2} \check{\mathbf{R}}^{\rm iso}_{2m,2m+1} \left( \frac{\ii (1- \exp (2\ii T) )}{\exp (2\ii T) + 1} \right) .
\ee 

\end{appendix}

\bibliography{references.bib}

\end{document}